\numberwithin{equation}{section}
\numberwithin{figure}{section}
\numberwithin{table}{section}
\theoremstyle{plain}
  \theoremstyle{plain}
  \newtheorem{prop}{\protect\propositionname}
  \theoremstyle{plain}
  \newtheorem{lem}{\protect\lemmaname}
  \theoremstyle{plain}
  \newtheorem{cor}{\protect\corollaryname}
\providecommand{\keywords}[1]{\textsc{\textit{Keywords---}} #1}
  \providecommand{\corollaryname}{Corollary}
  \providecommand{\lemmaname}{Lemma}
  \providecommand{\propositionname}{Proposition}
\providecommand{\theoremname}{Theorem}
\begin{document}

\title{Zeros and Amoebas of Partition Functions}

\author{M. Angelelli}
\email{mario.angelelli@le.infn.it}
\author{B. Konopelchenko}
 \email{boris.konopeltchenko@unisalento.it}
\affiliation{
 Department of Mathematics and Physics\\
 ``Ennio De Giorgi'', University of Salento and sezione INFN,\\
 Lecce, 73100, Italy
}

\begin{abstract}

Singular sectors $\mathcal{Z}_{\mathrm{sing}}$ (loci of zeros) for real-valued non-positively defined partition functions $\mathcal{Z}$ of $n$ variables are studied. It is shown that $\mathcal{Z}_{\mathrm{sing}}$ have a stratified structure and each stratum is a set of certain hypersurfaces in $\mathbb{R}^n$. The concept of statistical amoebas is introduced and their properties are studied. Relation with algebraic amoebas is discussed. Tropical limit of statistical amoebas is considered too.

\end{abstract}

\keywords{Partition function, singular sector, amoeba. }

\maketitle
\counterwithout{figure}{section}

\section{Introduction \label{sec: Introduction}}

The partition function is a key object in various branches of physics.
In statistical physics, due to the relation with the free energy $F=-k_{B}T\ln\mathcal{Z}$,
all basic thermodynamic characteristics of a macroscopic system are
encoded in the partition function 
\begin{equation}
\mathcal{Z}=\sum_{n}g_{n}\cdot e^{-\frac{E_{n}}{k_{B}T}}
\label{eq: partition function}
\end{equation}
where $\{E_{n}\}$ is the energy spectrum, $g_{n}$ is the degeneracy
of the $n$-th state, $T$ is the temperature and $k_{B}$ is the
Boltzmann constant (see e.g. \cite{LL1980,Huang1963}). In equilibrium
$\mathcal{Z}$ is finite and positive, otherwise it is an indication
of instability of the system. 

Zeros of the partition function as a function of physical parameters (temperature,
magnetic field \textit{etc.}) are of particular interest since at
such points the free energy becomes singular and, hence, the system changes
of state, e.g. exhibits a phase transition \cite{LL1980,Huang1963}.
Since the seminal papers of Lee and Yang \cite{LeeYang1952a,LeeYang1952b}
it is known that for usual systems (with finite and positive $g_{n}$
and real $E_{n}$) zeros of the partition functions lie in the complex
plane (see e.g. \cite{Huang1963, Fisher1965, KG1971, FS1971, Ruelle1973, Derrida1981, Lieb1981, Pearson1982, Derrida1983, Borgs1990,Biskup2004}). These
results have led to the intensive study of the partition function's
zero sets and associated phase transitions in the complex plane of
physical parameters for a number of models in statistical physics,
including those subject to quantum dynamics (see e.g. \cite{Wei2012,Takahashi2012,Wei2014,Peng2015}). 

Study of unstable or metastable states is another branch of statistical
physics where complex-valued partition functions naturally arise \cite{Langer1969,Newman1980}.
Formally, for an unstable state the energy $E_{n}$ is complex $E_{n}=E_{n,0}+\mathrm{i}\Delta E_{n}$
with width $\Delta E_{n}$ and, hence, the partition function is complex-valued
too. Wide classes of macroscopic systems like spin-glasses and other
geometrically or dynamically frustrated systems have such peculiarity
(see e.g. \cite{Parisi1980,MPV1987,Nishimori1988,Derrida1991,Bhanot1993,Matsuda2008,Obuchi2012}). 

In all these cases the situation when partition function's zeros are
real is of the greatest interest. The study of properties of macroscopic
systems, in particular, structure of equilibrium and unstable domains,
is simplified if the partition function is real-valued for all values
of parameters. Such situation is realisable, for example, for spin-glasses
and frustrated systems with different temperatures $T_{n}$ of microsystems
(microbasins) if the widths $\Delta E_{n}$ of energy levels obey
the condition ${\displaystyle \frac{\Delta E_{n}}{k_{B}T_{n}}=\ell_{n}\pi}$,
$n=1,2,3,\dots$ where $\ell_{n}$ are integers. Terms with odd $\ell_{n}$
acquire the factor $-1$ and, hence, the partition function $\mathcal{Z}$
is of the form (\ref{eq: partition function}) and real-valued, but
with $g_{n}$ assuming both positive and negative values. Negative
degeneracies of energy levels can be interpreted as the contribution
from sort of holes in spectrum. Formally, negativity of $g_{n}$ is
closely connected with the concepts of negative probability and negative
membership functions widely discussed in literature (see e.g. \cite{Feynman1987,Blizard1990,Burgin2010}
and references therein). 

This paper is devoted to the study of partition functions of such
a type, more precisely, those of the form 
\begin{equation}
\mathcal{Z}(\boldsymbol{g};\boldsymbol{x}):=\sum_{\alpha=1}^{N}g_{\alpha}\cdot e^{f_{\alpha}(x_{1},\dots,x_{n})}
\label{eq: signed partition function}
\end{equation}
where factors $g_{\alpha}$ take values $1$ or $-1$, $x_{1},\dots,x_{n}$
are real variables and $f_{\alpha}(\boldsymbol{x})$ are linear real-valued functions. Main attention
is paid to an analysis of the singular sector $\mathcal{Z}_{\mathrm{sing}}$
(locus of zeros) of the partition function (\ref{eq: signed partition function}),
its stratification and structure of stability ($\mathcal{Z}(\boldsymbol{g};\boldsymbol{x})>0$)
and zero confinement domains in the space $\mathbb{R}^{n}$ of parameters
$(x_{1},\dots,x_{n})$. 

Singular sector $\mathcal{Z}_{\mathrm{sing}}$ admits a natural stratification
\begin{equation}
\mathcal{Z}_{\mathrm{sing}}=\bigcup_{k=1}^{\lfloor\frac{N}{2}\rfloor}\mathcal{Z}_{\mathrm{sing},k}:
\label{eq: k-stratification singular locus}
\end{equation}
the stratum $\mathcal{Z}_{\mathrm{sing},k}$ is composed by all hypersurfaces
given by ${\displaystyle \binom{N}{k}}$ equations 
\begin{equation}
\mathcal{Z}_{k}(\mathcal{I};\boldsymbol{x}):=-\sum_{\alpha\in\mathcal{I}}e^{f_{\alpha}(\boldsymbol{x})}+\sum_{\beta\notin\mathcal{I}}e^{f_{\beta}(\boldsymbol{x})}=0,\quad k=1,\dots,\left\lfloor\frac{N}{2}\right\rfloor\label{eq: k-subset singular component}
\end{equation}
where $(\mathcal{I},[N]\backslash\mathcal{I})$ is any $2$-partition of
the set $[N]:=\{1,2,\dots,N\}$ with cardinality
$\#\mathcal{I}=k$. These hypersurfaces for the $k$-stratum are
contained in a certain domain in $\mathbb{R}^{n}$ refered as the
zero confinement domain $ZCD_{k}$. This domain is divided by hypersurfaces
(\ref{eq: k-subset singular component}) into a number of subdomains
$ZCD_{k;\delta}$ at which each of the $\binom{N}{k}$ functions $\mathcal{Z}_{k}$
in (\ref{eq: k-subset singular component}) is positive or negative. This allows to associate with each of these subdomains a
set of $\binom{N}{k}$ number $1$ or $-1$ that can be viewed as the
state of the system of $\binom{N}{k}$ ``spins'' which take values
$1$ or $-1$. 

The domain $\mathcal{D}_{k+}$, where all functions (\ref{eq: k-subset singular component})
are positive, is the stability (equilibrium) domain. The union $\mathcal{A}_{k}:=\mathcal{D}_{k+}\cup ZCD_{k}$
is called the statistical $k$-amoeba. A $k$-statistical amoeba is composed
by the stable nucleus $\mathcal{D}_{k+}$ and intermittent shell $ZCD_{k}$
with varying degree of instability (number of signs $-1$). The complement
$\mathcal{D}_{k-}$ of the $k$-amoeba in $\mathbb{R}^{n}$ is the
domain with maximum number of signs $-1$, i.e. the domain of maximal
instability. 

The domains $\mathcal{D}_{k+}$, $ZCD_{k}$ and statistical $k$-amoebas
exhibit a simple inclusion property, for instance, $\mathcal{D}_{k+}\supseteq\mathcal{D}_{\hat{k}+}$
and $\mathcal{A}_{k}\supseteq\mathcal{A}_{\hat{k}}$ if ${\displaystyle 1\leq k<\hat{k}<\frac{N}{2}}$. 

Statistical $1$-amoebas $\mathcal{A}_{1}$ coincide with the so-called
non-lopsided amoebas $\mathcal{LA}$ introduced in algebraic geometry
\cite{Purbhoo2008}. Analogs of higher statistical $k$-amoebas ($k\geq2$)
seem to be not studied in algebraic geometry. 

Tropical limits of statistical $k$-amoebas are considered. It is
shown that all $k$-amoebas collapse into the same set of piecewise hyperplanes in $\mathbb{R}^{n}$
coinciding with that of tropical limit of $\mathcal{A}_{1}$. 

It should be noted that partition functions depending on several
variables (multidimensional energy spectrum or several Hamiltonians)
have been considered in \cite{Tsikh2009}, \cite{Kapranov2011}
and \cite{Passare2012}. However these papers have addressed only
the case (\ref{eq: signed partition function}) with all positive factors
$g_{\alpha}$. 

In a completely different setting zeros of superpositions of the form
(\ref{eq: signed partition function}) ($\tau$-functions) with positive
and negative factors $g_{\alpha}$ and very particular linear functions
$f_{\alpha}$ arise within an analysis of singular
solutions of integrable equations (see e.g. \cite{Adler1980}, \cite{Kodama2006}). 

The paper is organized as follows. General definitions of strata of
the singular sector and some concrete examples for the first stratum
$\mathcal{Z}_{\mathrm{sing},1}$ are given in section \ref{sec: Singular sector of partition function}.
Higher strata and properties of zero loci hypersurfaces are considered
in section \ref{sec: Higher strata: zero loci}. Section \ref{sec: Higher strata: general properties}  is devoted to the study of some general properties of equilibrium and zero confinement domains. Statistical amoebas and their
relation to algebraic amoebas are discussed in section \ref{sec: Statistical amoebas vs. algebraic amoebas}.
Next section \ref{sec: Structure of ZCDk domains} is devoted to an
analysis of the structure and properties of $ZCD_{k}$ domains and
associated statistical systems of ``spins''. Tropical limits of
statistical amoebas are considered in section \ref{sec: Tropical limit and tropical zeros}. In conclusion some peculiarities of partition function (\ref{eq: signed partition function}) with nonlinear functions $f_{\alpha}$ are noted.

\section{Singular sector of partition function \label{sec: Singular sector of partition function}}

So we will consider the family of partition functions of the form
\begin{equation}
\mathcal{Z}=\sum_{\alpha=1}^{N}g_{\alpha}\cdot e^{f_{\alpha}(\boldsymbol{x})}\label{eq: signed partition function, 2}
\end{equation}
where ${\displaystyle f_{\alpha}(\boldsymbol{x})=b_{\alpha}+\sum_{i=1}^{n}a_{\alpha i}x_{i}}$,
all variables $g_{\alpha}$, $x_{i}$ and all functions $f_{\alpha}$
are real. Since $g_{\alpha}e^{b_{\alpha}}=\text{sign}(g_{\alpha})\cdot e^{\log|g_{\alpha}|+b_{\alpha}}$ one can consider only the case $g_{\alpha}\in\{+1,-1\}$.  

The space $V$ of parameters $g_{1},\dots,g_{N},x_{1},\dots,x_{n}$ admits
the stratification 
\begin{equation}
V=\bigcup_{\alpha=0}^{N}V_{\alpha}\label{eq: stratification variable space}
\end{equation}
where $V_{\alpha}$ is the union of subspaces of $V$ with $\alpha$ many 
negative $g_{\beta}$. For instance, ${\displaystyle V_{2}=\bigcup_{1\leq\alpha<\beta\leq N,}V_{2,\{\alpha\beta\}}}$
where $\displaystyle V_{2,\{\alpha\beta\}}=\left\{ (g_{1},\dots,g_{N};x_{1},\dots,x_{n}):\,g_{\alpha}=g_{\beta}=-1,\right.$ \\ $\displaystyle \left.\gamma\neq\alpha,\beta\Rightarrow g_{\gamma}>0\right\}$,
$\alpha\neq\beta$, $\alpha,\beta=1,\dots,N$. Inversion $P$ of all
$g_{\alpha}$: $Pg_{\alpha}=-g_{\alpha}$, $\alpha=1,\dots N$, acts
on strata $V_{\alpha}$ as $PV_{\alpha}=V_{N-\alpha}$. 

Accordingly, singular sector $\mathcal{Z}_{\mathrm{sing}}$ of partition
function also admits the stratification 
\begin{equation}
\mathcal{Z}_{\mathrm{sing}}=\bigcup_{k=1}^{N-1}\mathcal{Z}_{\mathrm{sing},k}\label{eq: stratitifaction singular sector}
\end{equation}
where $\mathcal{Z}_{\mathrm{sing},k}$ are subspaces of $V_{k}$ for
which $\displaystyle \mathcal{Z}|_{V_{k}}=0$. Subspaces $V_{0}$
and $V_{N}$ are obviously regular and connected by inversion $P$.
First singular stratum $\mathcal{Z}_{\mathrm{sing},1}$ is the union
of the $N$ hypersurfaces defined by the equations 
\begin{equation}
\mathcal{Z}_{1}(\{\alpha\};\boldsymbol{x}):=\sum_{\beta=1}^{N}g_{\alpha(\beta)}e^{f_{\beta}(\boldsymbol{x})}=0,\quad\alpha=1,\dots,N\label{eq: hypersurfaces equation 1-stratum}
\end{equation}
with $g_{\alpha(\alpha)}=-1$ and $g_{\alpha(\beta)}=1$, $\beta\neq\alpha$.
Geometric characteristics of such hypersurfaces have been studied
in the paper \cite{AK2016}. 
Cases of linear functions $f_{\alpha}(\boldsymbol{x})$ were referred
in \cite{AK2016} as ideal statistical hypersurfaces. General statistical
hypersurfaces considered in \cite{AK2016} were associated with nonlinear
functions $f_{\alpha}$ while super-ideal case corresponds to $N=n$
and $f_{\alpha}(\boldsymbol{x})\equiv x_{\alpha}$. 

Generically the stratum
$\mathcal{Z}_{\mathrm{sing},1}$ can be composed by $N$ hypersurfaces.
In addition, it is easy to see that hypersurfaces given by $\mathcal{Z}_{1}(\{\alpha\};\boldsymbol{x})=0$
with different $\alpha$ do not intersect at finite values of $x_{1},\dots,x_{n}$
and $N\geq3$. Indeed, if there exists $\alpha\neq\beta$ such that
$\{\boldsymbol{x}:\mathcal{Z}_{1}(\{\alpha\};\boldsymbol{x})=0\}\cap\{\boldsymbol{x}:\mathcal{Z}_{1}(\{\beta\};\boldsymbol{x})=0\}$ is not empty, then
there exists $\boldsymbol{x}_{0}$ in $\mathbb{R}^{n}$ such that
${\displaystyle e^{f_{\beta}(\boldsymbol{x}_{0})}-e^{f_{\alpha}(\boldsymbol{x}_{0})}=\sum_{\gamma\notin\{\alpha,\beta\}}e^{f_{\gamma}(\boldsymbol{x}_{0})}=e^{f_{\alpha}(\boldsymbol{x}_{0})}-e^{f_{\beta}(\boldsymbol{x}_{0})}}$,
that is $f_{\beta}(\boldsymbol{x}_0)=f_{\alpha}(\boldsymbol{x}_0)$, so
one has ${\displaystyle \sum_{\gamma\neq\alpha,\beta}e^{f_{\gamma}(\boldsymbol{x}_{0})}=0}$
which is impossible if $N\geq3$. 

The form of hypersurfaces which compose the sector $\mathcal{Z}_{\mathrm{sing},k}$
depends on $N$, $n$ and the choice of functions $f_{\alpha}(\boldsymbol{x})$.
At the simplest case of $n=1$ the stratum $\mathcal{Z}_{\mathrm{sing},1}$ is composed, in general,
by at most $N$ points defined by the equations 
\begin{equation}
\mathcal{Z}_{1}(\{\alpha\};x_{1})=\sum_{\beta=1}^{N}g_{\alpha(\beta)}e^{a_{\beta}x_{1}+b_{\beta}}=0,\quad \alpha=1,\dots,N\label{eq: hypersurfaces 1-stratum n=1}
\end{equation}
with $g_{\alpha(\alpha)}=-1$ and $g_{\alpha(\beta)}=1$, $\beta\neq\alpha$.
In the case $N=2$ the sector $\mathcal{Z}_{\mathrm{sing},1}$ contains
only one hypersurface defined by the equation 
\begin{equation}
\mathcal{Z}_{1}(\{1\};\boldsymbol{x})=-e^{f_{1}(x_{1},\dots,x_{n})}+e^{f_{2}(x_{1},\dots,x_{n})}=0.\label{eq: hypersurfaces 1-stratum N=2}
\end{equation}
It is the hyperplane in $\mathbb{R}^{n}$ given by the equation 
\begin{equation}
b_{1}-b_{2}+\sum_{i=1}^{n}(a_{1i}-a_{2i})x_{i}=0.\label{eq: hypersurface 1-stratum N=2, 2}
\end{equation}
For $N\geq3$ and $n=2$ one has a family of curves on the plane $(x_{1},x_{2})=:(x,y)$.
For example, at the choice $f_{1}\equiv 0$, $f_{2}\equiv x$, $f_{3}\equiv y$
the sector $\mathcal{Z}_{\mathrm{sing},1}$ is composed by three curves
shown in figure \ref{fig: examples of 1-strata, a} (see also \cite{Mikhalkin2013})
\begin{figure}[tph]
\centering
\subfigure[$f_{1}\equiv0$, $f_{2}\equiv x$, $f_{3}\equiv y$. ]{
\includegraphics[scale=0.4]{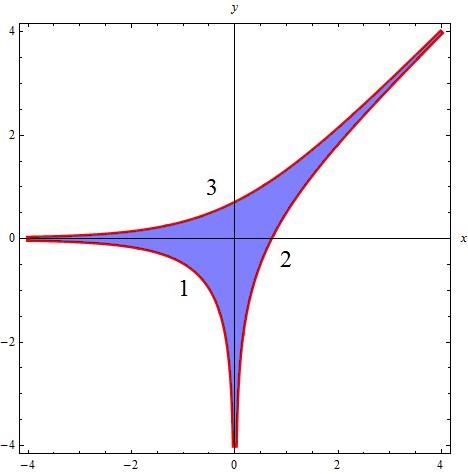}\label{fig: examples of 1-strata, a}}\hfill{}\subfigure[$f_{1}\equiv0$, $f_{2}\equiv3x$, $f_{3}\equiv3y$, $f_{4}\equiv x+y+\ln6$.]{

\includegraphics[scale=0.4]{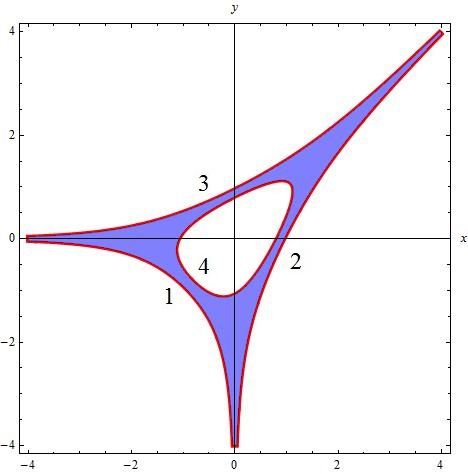}\label{fig: examples of 1-strata, b}}
\caption{Examples of $1$-strata (red curves). }
\label{fig: examples of 1-strata}
\end{figure}
where the curves $1$, $2$ and $3$ are given by the equations 
\begin{equation}
\begin{array}{c}
\mathcal{Z}_{1}(\{1\};x,y)\equiv -1+e^{x}+e^{y}=0,\\
\mathcal{Z}_{1}(\{2\};x,y)\equiv 1-e^{x}+e^{y}=0,\\
\mathcal{Z}_{1}(\{3\};x,y)\equiv 1+e^{x}-e^{y}=0.
\end{array}
\label{eq: components 1-stratum, ex1}
\end{equation}
Note that at $|x|,|y|\rightarrow\infty$ the curves $1$ and $2$
tend to the ray $x=0$, $y<0$, the curves $2$ and $3$ tend to the
ray $x=y$, $x>0$ while the curves $3$ and $1$ tend to the ray
$y=0$, $x<0$. An example with a different homotopy and
a bounded closed component (curve $4$) is given by choice $f_{1}\equiv 0$,
$f_{2}\equiv 3x$, $f_{3}\equiv 3y$, $f_{4}\equiv x+y+\ln 6$ and it is shown in figure \ref{fig: examples of 1-strata, b}. 

The stratum $\mathcal{Z}_{\mathrm{sing},1}$ at $n=2$
is composed by at most $N$ curves. For particular choice
of functions $f_{\alpha}(\boldsymbol{x})$ this number can be smaller than $N$ when some equations in (\ref{eq: hypersurfaces equation 1-stratum})
define the empty set. To illustrate this let us consider the following
examples with $n=2$: if one chooses 
\begin{equation}
f_{\alpha}(x,y)\equiv\cos\left(\frac{2\pi(\alpha-1)}{N}\right)\cdot x+\sin\left(\frac{2\pi(\alpha-1)}{N}\right)\cdot y,\quad\alpha=1,\dots,N,\label{eq: example family maximal number 1-components}
\end{equation}
then one gets $N$ curves by symmetry. For instance, at $N=6$ one
gets $6$ curves presented in figure \ref{fig: example 1-stratum highly symmetric}.
If, instead, one takes 
\begin{equation}
\displaystyle f_{\alpha}(x,y)\equiv(\alpha-1)\cdot x+(N-\alpha)\cdot y, \quad\alpha=1,\dots,N,\label{eq: example family highly hiding number 1-components}
\end{equation}
then for all $2\leq\gamma\leq N-1$ the corresponding locus is empty,
i.e. $\{\mathcal{Z}_{1}(\{\gamma\})=0\}=\emptyset$. Indeed, if $x\leq y$ then $e^{(\gamma-1)\cdot x}\leq e^{(\gamma-1)\cdot y}$
thus ${\displaystyle e^{f_{\gamma}(x,y)}\leq e^{f_{1}(x,y)}<\sum_{\beta\neq\gamma}e^{f_{\beta}(x,y)}}$.
Similarly, If $y\leq x$ then ${\displaystyle e^{f_{\gamma}(x,y)}\leq e^{f_{N}(x,y)}<\sum_{\beta\neq\gamma}e^{f_{\beta}(x,y)}}$.
The only two visible curves are given by $\mathcal{Z}_{1}(\{1\};x,y)=0$
and $\mathcal{Z}_{1}(\{N\};x,y)=0$: they are straight lines with
slope $1$ passing through $(\pm\ln\chi,0)$ respectively, where $\chi>0$ is uniquely defined by $\displaystyle -1+\sum_{\alpha=1}^{5}\chi^{\alpha}=0$.
See figure \ref{fig: example 1-stratum highly degenerate} for the
case at $N=6$. 
\begin{figure}[tph]
\centering
\subfigure[$f_{\alpha}\equiv\cos\left(\frac{\pi(\alpha-1)}{3}\right)x+\sin\left(\frac{\pi(\alpha-1)}{3}\right)y$,
$\alpha=1,\dots6$: all $6$ components are visible. ]{\includegraphics[scale=0.4]{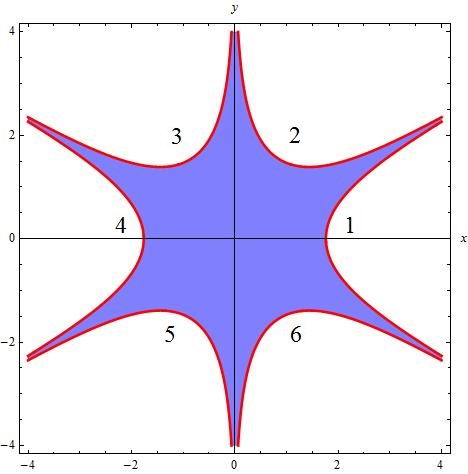}\label{fig: example 1-stratum highly symmetric}}\hfill{}\subfigure[${\displaystyle f_{\alpha}(x,y)\protect:=(\alpha-1)\cdot x+(6-\alpha)\cdot y}$,
$\alpha=1,\dots,6$. ]{\includegraphics[scale=0.4]{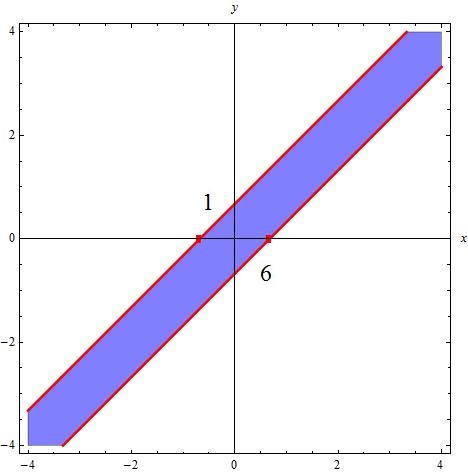}\label{fig: example 1-stratum highly degenerate}}
\caption{Extremal behaviors of number of visible curves. }
\label{fig: examples number curves}
\end{figure}

One has an intermediate case at $f_{1}(x,y)\equiv0,$
$f_{2}(x,y)\equiv3x$, $f_{3}(x,y)\equiv3y$, $f_{4}(x,y)\equiv x+y+\ln6$,
$f_{5}(x,y)\equiv2x+y+\ln11$, $f_{6}(x,y)\equiv x+3y+\ln4$. The stratum $\mathcal{Z}_{\mathrm{sing},1}$ is composed by $5$
curves given in figure \ref{fig: intermediate number curves 1-stratum}.
\begin{figure}[tph]
\centering{
\includegraphics[scale=0.5]{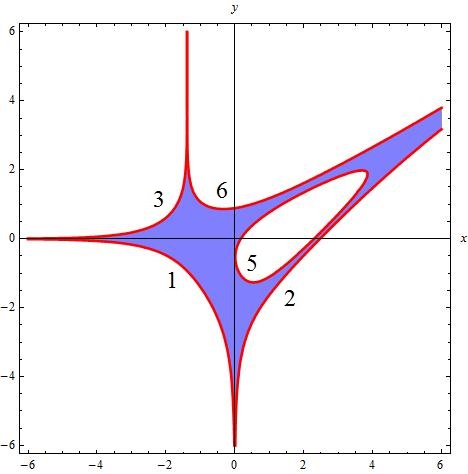}
\caption{$1$-stratum for the choice {\small{}$f_{1}(x,y)\equiv 0,$ $f_{2}(x,y)\equiv 3x$,
$f_{3}(x,y)\equiv 3y$, $f_{4}(x,y)\equiv x+y+\ln6$, $f_{5}(x,y)\equiv 2x+y+\ln11$,
$f_{6}(x,y)\equiv x+3y+\ln4$. }}
\label{fig: intermediate number curves 1-stratum}
}
\end{figure}
One can check directly that the set of solutions of the equation 
\begin{equation}
\mathcal{Z}_{1}(\{4\};x,y)\equiv 1+e^{3x}+e^{3y}-6\cdot e^{x+y}+11\cdot e^{2x+y}+4e^{x+3y}=0\label{eq: invisible component main example}
\end{equation}
is empty. Indeed, if $\mathcal{Z}_{1}(\{4\};x,y)=0$ then $x+y+\ln6>2x+y+\ln11$,
that is $x<\ln6-\ln11<0$. From the arithmetic-geometric means inequality
one gets ${\displaystyle \frac{1}{2}+\frac{1}{2}+11\cdot e^{2x+y}+4\cdot e^{x+3y}\geq}$
${\displaystyle 4\left[\frac{1}{2}\cdot\frac{1}{2}\cdot(11\cdot e^{2x+y})\cdot(4\cdot e^{x+3y})\right]^{\frac{1}{4}}=4\cdot11^{\frac{1}{4}}\cdot e^{\frac{3x+4y}{4}}}$.
But $4\cdot11^{\frac{1}{4}}>6$ and $x<0$, hence ${\displaystyle e^{\frac{3x+4y}{4}}>e^{x+y}}$
and $\displaystyle 1+11\cdot e^{2x+y}+4\cdot e^{x+3y}\geq$ $\displaystyle 4\cdot11^{\frac{1}{4}}\cdot e^{\frac{3x+4y}{4}}>4\cdot11^{\frac{1}{4}}\cdot e^{x+y}>6\cdot e^{x+y}$.
In particular, $\mathcal{Z}_{1}(\{4\};x,y)$ is always positive. 

At $N=3$, $n=3$ and $f_{1}\equiv x$, $f_{3}\equiv y$, $f_{4}\equiv z$
($x_{1}=x$, $x_{2}=y$, $x_{3}=z$) the stratum $\mathcal{Z}_{\mathrm{sing},1}$
contains $3$ super-ideal statistical surfaces defined by the equations
\begin{equation}
\begin{array}{c}
\mathcal{Z}_{1}(\{1\};x,y,z)\equiv-e^{x}+e^{y}+e^{z}=0,\\
\mathcal{Z}_{1}(\{2\};x,y,z)\equiv e^{x}-e^{y}+e^{z}=0,\\
\mathcal{Z}_{1}(\{3\};x,y,z)\equiv e^{x}+e^{y}-e^{z}=0
\end{array}\label{eq: components 1-stratum, ex2}
\end{equation}
and given in figure \ref{fig: super-ideal case 3D}. 
\begin{figure}[tph]
\centering{
\includegraphics[scale=0.5]{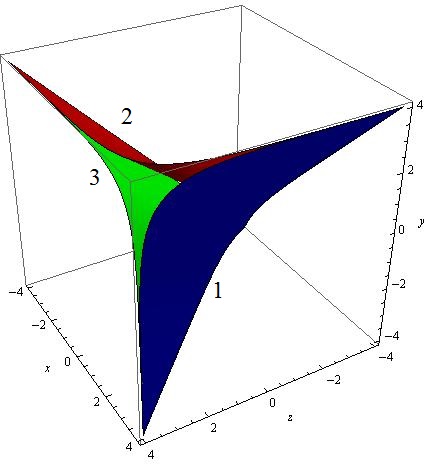}
\caption{$1$-stratum for the super-ideal case: {\small{}$f_{1}(x,y,z)\equiv x$
$f_{2}(x,y,z)\equiv y$, $f_{3}(x,y,z)\equiv z$. }}
\label{fig: super-ideal case 3D}}
\end{figure}
Induced metric $g_{ik}$, Gauss curvature $K$ and mean curvature
$\Omega$ of the surface given by the equation $\mathcal{Z}_{1}(\{1\})=0$ are
(with $y$ and $z$ as local coordinates) \cite{AK2016} 
\begin{equation}
\begin{array}{c}
g_{ik}=\delta_{ik}+w_{i}w_{k},\quad i,k=1,2,\\
K=0,\\
\Omega={\displaystyle \frac{1-T_{3}}{(1+T_{2})^{\frac{3}{2}}}}
\end{array}\label{eq: superideal 3d geometric characteristics}
\end{equation}
where probability ${\displaystyle w_{1}=\frac{e^{y}}{e^{y}+e^{z}}}$,
${\displaystyle w_{2}=\frac{e^{z}}{e^{y}+e^{z}}}$ and ${\displaystyle T_{l}=w_{1}^{l}+w_{2}^{l}}$,
$l=2,3$. For surface given by the equation $\mathcal{Z}_{1}(\{2\})=0$ and $\mathcal{Z}_{1}(\{3\})=0$
one has similar results with substitution $x\leftrightarrows y$ and
$x\leftrightarrows z$, respectively. At large $x$, $y$, $z$ surfaces
$1$ and $2$ tend to the half-plane $x=y$, $z<0$, surfaces $1$
and $3$ tend to the half-plane $x=z$, $y<0$ and surfaces $2$ and
$3$ tend to the half-plane $y=z$, $x<0$. 

Last example is
presented in figure \ref{fig: example 3D} and corresponds to $N=6$ and $n=3$ with $f_{1}(x,y,z)=0$,
$f_{2}(x,y,z)=3x$, $f_{3}(x,y,z)=3y$, $\,f_{4}(x,y,z)=2x+z+\log6$,
$f_{5}(x,y,z)=2x+y+z+\log11$, $f_{6}(x,y,z)=3y+z+\log4$. 
\begin{figure}[tph]
\centering{
\includegraphics[scale=0.5]{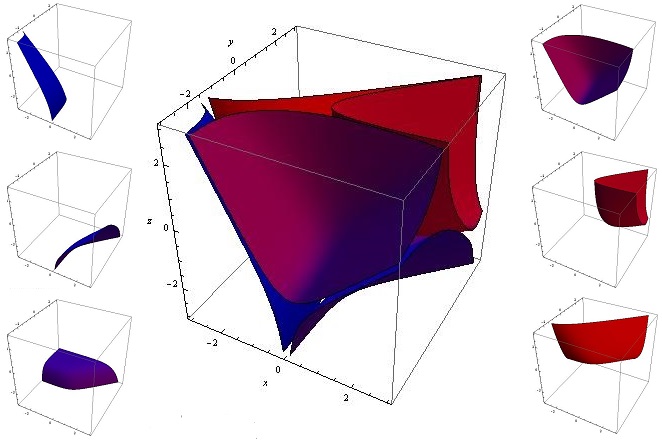}
\caption{$1$-stratum with {\small{}$f_{1}\equiv0$, $f_{2}\equiv3x$, $f_{3}\equiv3y$,
$\,f_{4}\equiv2x+z+\log6$, $f_{5}\equiv2x+y+z+\log11$, $f_{6}\equiv3y+z+\log4$}. }
\label{fig: example 3D}}
\end{figure}

\section{Higher strata \label{sec: Higher strata: zero loci}}

Higher strata $\mathcal{Z}_{\mathrm{sing},k}$ have rather complicated
structure. For instance, for $N=6,$ $n=2$ and functions $f_{\alpha}(x,y)$
given, as in figure \ref{fig: intermediate number curves 1-stratum},
by $f_{1}(x,y)\equiv0,$ $f_{2}(x,y)\equiv3x$, $f_{3}(x,y)\equiv3y$,
$f_{4}(x,y)\equiv x+y+\ln6$, $f_{5}(x,y)\equiv2x+y+\ln11$, $f_{6}(x,y)\equiv x+3y+\ln4$,
the stratum $\mathcal{Z}_{\mathrm{sing},2}$ is the set of $15$ curves
defined by ${\displaystyle \frac{N(N-1)}{2}=15}$ equations 
\begin{equation}
\mathcal{Z}_{2}(\{\alpha,\beta\};x,y):=\sum_{\gamma=1}^{6}g_{\{\alpha\beta\}(\gamma)}e^{f_{\gamma}(x,y)},\quad\alpha\neq\beta,\,\alpha,\beta=1,\dots,6\label{eq: 2-stratum N=6}
\end{equation}
with $g_{\{\alpha\beta\}(\gamma)}=-\delta_{\alpha\gamma}-\delta_{\beta\gamma}$ at $\gamma\in\{\alpha,\beta\}$,
$g_{\{\alpha\beta\}(\gamma)}=1$ at $\gamma\neq\alpha,\beta$. The stratum
$\mathcal{Z}_{\mathrm{sing},3}$ is the set of $20$ curves defined
by $\binom{N}{k}=20$ equations 
\begin{equation}
\mathcal{Z}_{3}(\{\alpha,\beta,\gamma\};x,y):=\sum_{\eta=1}^{6}g_{\{\alpha\beta\gamma\}(\eta)}e^{f_{\eta}(x,y)},\quad\alpha\neq\beta\neq\gamma\neq\alpha,\,\alpha,\beta,\gamma=1,\dots,N\label{eq: 3-stratum N=6}
\end{equation}
with $g_{\{\alpha\beta\gamma\}(\eta)}=-\delta_{\alpha\eta}-\delta_{\beta\eta}-\delta_{\gamma\eta}$ at $\eta\in\{\alpha,\beta,\gamma\}$,
$g_{\{\alpha\beta\gamma\}(\eta)}=1$ at $\eta\neq\alpha,\beta,\gamma$.
These sets of curves are presented in figure \ref{fig: examples of 1-strata, a}
and \ref{fig: examples of 1-strata, b} respectively. 
\begin{figure}[tph]
\centering
\subfigure[$2$-stratum. ]{\includegraphics[scale=0.35]{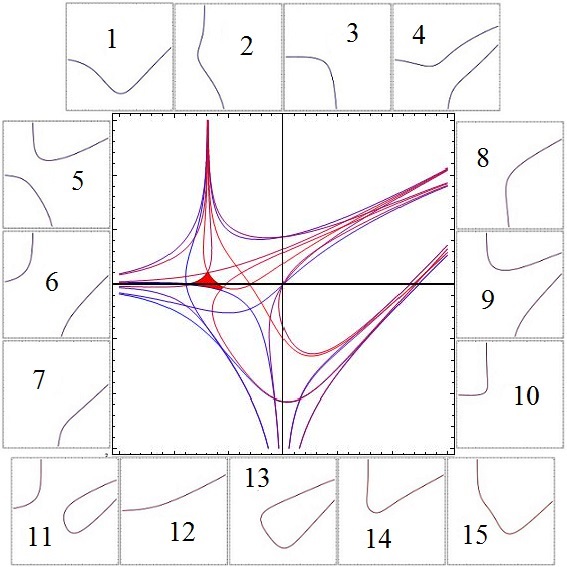}\label{example higher strata, 2}}\hfill{}\subfigure[$3$-stratum. ]{\includegraphics[scale=0.35]{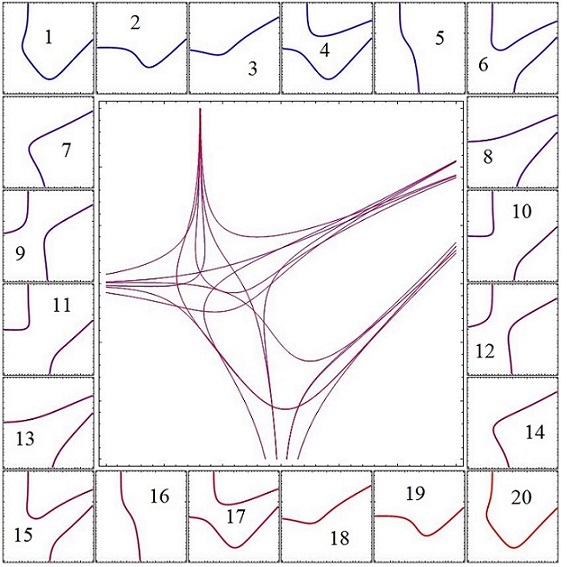}\label{example higher strata, 3}}\caption{Higher strata with $f_{1}\equiv 0$, $f_{2}\equiv 3x$, $f_{3}\equiv 3y$,
$f_{4}\equiv x+y+\ln 6$, $f_{5}\equiv 2x+y+\ln 11$, $f_{6}\equiv x+3y+\ln 4$.
For each choice of $2$- and $3$- subsets, listed in lexicographical
order, corresponding locus is shown. Components in $\mathcal{Z}_{\mathrm{sing},3}$ are listed twice: curve $\alpha$ coincides with curve $21-\alpha$, $\alpha=1,\dots,10$, since they correspond to the same partition. \label{fig: example higher strata}}
\end{figure}

Due to the involution $P:\,g_{\alpha}\mapsto-g_{\alpha}$ the strata
$\mathcal{Z}_{\mathrm{sing},4}$ and $\mathcal{Z}_{\mathrm{sing},5}$
coincide with strata $\mathcal{Z}_{\mathrm{sing},2}$ and $\mathcal{Z}_{\mathrm{sing},1}$,
respectively. The stratum $\mathcal{Z}_{\mathrm{sing},3}$ is stable
under involution, $P(\mathcal{Z}_{\mathrm{sing},3})=\mathcal{Z}_{\mathrm{sing},3}$
and at most half of conditions in (\ref{eq: 3-stratum N=6})
are independent. 

In the case $N=6$ and $n=3$ and with the same functions $f_{\alpha}$
as in figure \ref{fig: example 3D}, i.e. $f_{1}\equiv0$, $f_{2}\equiv3x$,
$f_{3}\equiv3y$, $\,f_{4}\equiv2x+z+\log6$, $f_{5}\equiv2x+y+z+\log11$,
$f_{6}\equiv3y+z+\log4$, the strata $\mathcal{Z}_{\mathrm{sing},2}$
and $\mathcal{Z}_{\mathrm{sing},3}$ are given in figure \ref{fig: examples higher strata 3D}.
\begin{figure}[tph]
\centering{
\subfigure[$2$-stratum. ]
{\includegraphics[scale=0.4]{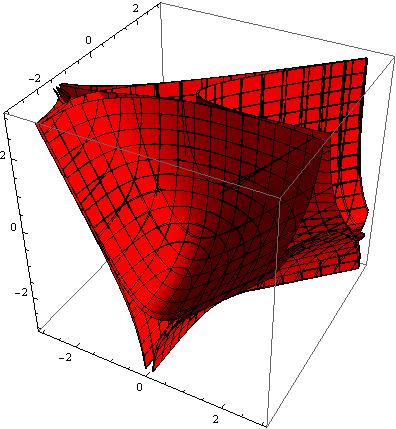}
\label{fig: example 2-strata 3D}}\hfill{}
\subfigure[$3$-stratum. ]
{\includegraphics[scale=0.4]{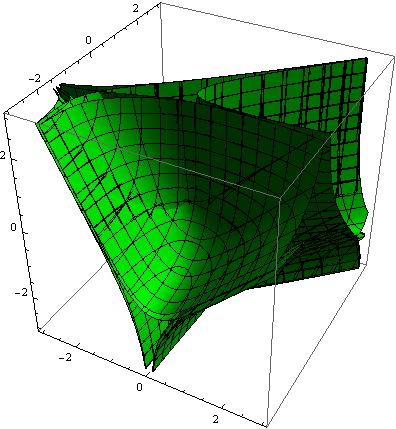}
\label{fig: example 3-strata 3D}}
\caption{Examples of higher strata with {\small{}$f_{1}\equiv0$, $f_{2}\equiv3x$,
$f_{3}\equiv3y$, $\,f_{4}\equiv2x+z+\log6$, $f_{5}\equiv2x+y+z+\log11$,
$f_{6}\equiv3y+z+\log4$}. }
\label{fig: examples higher strata 3D}
}
\end{figure}

In order to describe general properties of higher singular strata
let us introduce some notation. We will denote by $(\mathcal{I}_{1},\mathcal{I}_{2})$
an ordered $2$-partition of the set $[N]:=\{1,\dots,N\}$, i.e.
the pair of two subsets $\mathcal{I}_{1},\mathcal{I}_{2}\subseteq[N]$
such that $\mathcal{I}_{1}\cup\mathcal{I}_{2}=[N]$ and $\mathcal{I}_{1}\cap\mathcal{I}_{2}=\emptyset$.
Then for each ordered $2$-partition $(\mathcal{I},[N]\backslash\mathcal{I})$
we define the function 
\begin{equation}
\mathcal{Z}_{k}(\mathcal{I};\boldsymbol{x}):=-\sum_{\alpha\in\mathcal{I}_{1}}e^{f_{\alpha}(\boldsymbol{x})}+\sum_{\beta\in\mathcal{I}_{2}}e^{f_{\beta}(\boldsymbol{x})}\label{eq: signed partition}
\end{equation}
and the corresponding zero locus as 
\begin{equation}
\mathcal{Z}_{\mathrm{sing},k}(\mathcal{I}):=\{\boldsymbol{x}:\mathcal{Z}_{k}(\mathcal{I};\boldsymbol{x})=0\}.\label{eq: locus signed partition}
\end{equation}
In the following we will ofter write $\mathcal{Z}_{k}(\mathcal{I})$ instead of $\mathcal{Z}_{k}(\mathcal{I};\boldsymbol{x})$ for notational convenience.

Since $\mathcal{Z}_{N-k}(\mathcal{I}_{2},\mathcal{I}_{1})=-\mathcal{Z}_{k}(\mathcal{I}_{1},\mathcal{I}_{2})$
zero loci of $\mathcal{Z}_{k}(\mathcal{I}_{1},\mathcal{I}_{2})$ and $\mathcal{Z}_{N-k}(\mathcal{I}_{2},\mathcal{I}_{1})$
coincide. To avoid such redundancy we will focus on unordered partitions
$\{\mathcal{I}_{1},\mathcal{I}_{2}\}$ and we will assume in what
follows for each partition $(\mathcal{I}_{1},\mathcal{I}_{2})$ the
cardinality $\#\mathcal{I}_{1}$ of the first subset $\mathcal{I}_{1}$
is smaller than that of $\mathcal{I}_{2}$. Since $\mathcal{I}_{1}\cup\mathcal{I}_{2}=[N]$
and $\mathcal{I}_{1}\cap\mathcal{I}_{2}=\emptyset$ one has ${\displaystyle 0\leq\#\mathcal{I}_{1}\leq \left\lfloor\frac{N}{2}\right\rfloor}$. 

Further for each subset $\mathcal{I}$ of $[N]$ of cardinality
$k$ one has the equation 
\begin{equation}
\mathcal{Z}_{k}(\mathcal{I};\boldsymbol{x})=0\label{eq: locus signed partition with assumption}
\end{equation}
which defines the hypersurface (\ref{eq: locus signed partition}) in $\mathbb{R}^{n}$. Union of all such
hypersurfaces with $\#\mathcal{I}=k$ is the stratum $\mathcal{Z}_{\mathrm{sing},k}$.
Denoting the set of all subsets $\mathcal{I}$ of $[N]$ with $\#\mathcal{I}=k$
as $\mathcal{P}_{k}[N]$ one, hence, has 
\begin{equation}
\mathcal{Z}_{\mathrm{sing},k}=\bigcup_{\mathcal{I}\in\mathcal{P}_{k}[N]}\mathcal{Z}_{\mathrm{sing},k}(\mathcal{I}) .\label{eq: k-stratum}
\end{equation}
Note also that
\begin{equation}
{\displaystyle \mathcal{Z}_{k}(\mathcal{I};\boldsymbol{x})=0\Leftrightarrow2\cdot\sum_{\alpha\in\mathcal{I}}e^{f_{\mathcal{\alpha}}(\boldsymbol{x})}=\mathcal{Z}_{0}(\boldsymbol{x})}\label{eq: subsets and elements partition functions relation-1}
\end{equation}
and 
\begin{equation}
{\displaystyle \mathcal{Z}_{k}(\boldsymbol{x}):=\sum_{\mathcal{I}\in\mathcal{P}_{k}[N]}e^{f_{\mathcal{I}}(\boldsymbol{x})}=\binom{N-1}{k-1}\cdot\mathcal{Z}_{0}(\boldsymbol{x})}\label{eq: subsets and elements partition functions relation}
\end{equation}
where ${\displaystyle \mathcal{Z}_{0}(\boldsymbol{x})=\sum_{\alpha=1}^{N}e^{f_{\alpha}(\boldsymbol{x})}}$
and we denote 
\begin{equation}
e^{f_{\mathcal{I}}(\boldsymbol{x})}:=\sum_{\alpha\in\mathcal{I}}e^{f_{\alpha}(\boldsymbol{x})}.\label{eq: renormalized function subsets}
\end{equation}

Figure \ref{fig: comments higher strata} indicates that curves and
hypersurfaces which compose higher strata may intersect in contrast
to the stratum $\mathcal{Z}_{\mathrm{sing},1}$. 
\begin{figure}[tph]
\centering
\subfigure[Crossing of hypersurfaces can happen for a higher stratum $\mathcal{Z}_{\mathrm{sing},2}$. Not all hypersurfaces lie in the same halfspace defined by a certain hypersurface, as it is for the orange region defined by $\mathcal{Z}_{2}(\{1, 2\})>0$. Red region is the set where $\mathcal{Z}_{2}(\mathcal{I})>0$ for all $\mathcal{I}$ with $\#\mathcal{I}=2$. ]
{\includegraphics[scale=0.30]{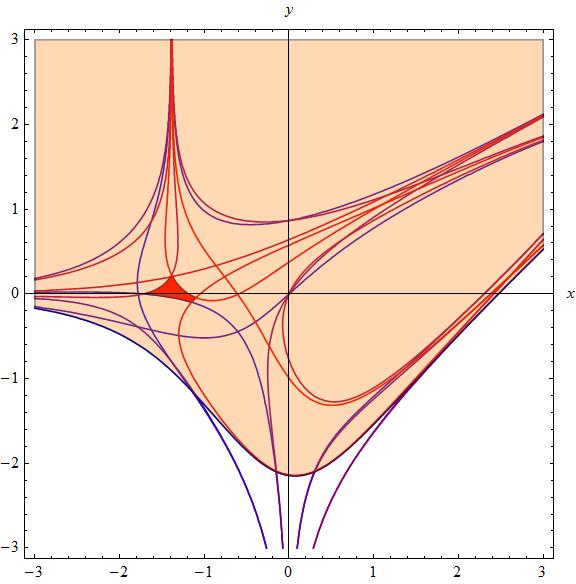}\label{fig: comments higher strata, a}}\hfill{}\subfigure[Hypersurface defined by $\mathcal{Z}_{2}(\{1,2\};\boldsymbol{x})=0$.
The red line shows that both regions of the plane bounded by this
component are non-convex. ]
{\includegraphics[scale=0.42]{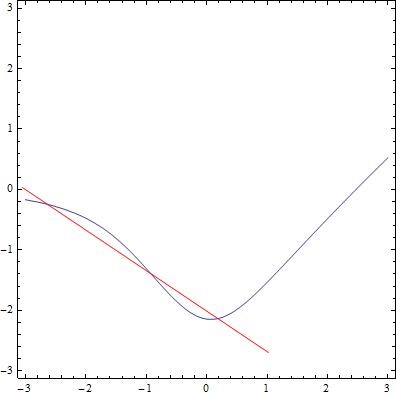}\label{fig: comments higher strata, b}}
\caption{Comments on higher stratum $\mathcal{Z}_{\mathrm{sing},2}$ with $f_{1}\equiv0$,
$f_{2}\equiv3x$, $f_{3}\equiv3y$, $f_{4}\equiv x+y+\ln6$, $f_{5}\equiv2x+y+\ln11$,
$f_{6}\equiv x+3y+\ln4$. }
\label{fig: comments higher strata}
\end{figure}
In general, one has 
\begin{prop}
\label{prop: common solutions intersection} Two hypersurfaces (\ref{eq: locus signed partition with assumption})
belonging to the same stratum $\mathcal{Z}_{\mathrm{sing},k}$ and
different $\mathcal{I}_{1}$ and $\mathcal{I}_{2}$ intersect at finite
$\boldsymbol{x}$ only if $\mathcal{I}_{1}\cap\mathcal{I}_{2}\neq\emptyset$. \end{prop}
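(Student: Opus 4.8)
The plan is to argue by contraposition: I will assume that the two hypersurfaces share a finite point and deduce that their index sets cannot be disjoint, apart from a degenerate configuration that fails to yield two genuinely distinct hypersurfaces. Suppose $\boldsymbol{x}_0\in\mathbb{R}^n$ satisfies $\mathcal{Z}_k(\mathcal{I}_1;\boldsymbol{x}_0)=0$ and $\mathcal{Z}_k(\mathcal{I}_2;\boldsymbol{x}_0)=0$ simultaneously. The main tool is the equivalence (\ref{eq: subsets and elements partition functions relation-1}), which recasts each vanishing condition in the linear form $\sum_{\alpha\in\mathcal{I}_j}e^{f_\alpha(\boldsymbol{x}_0)}=\tfrac{1}{2}\,\mathcal{Z}_0(\boldsymbol{x}_0)$ for $j=1,2$. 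This is what makes the two partitions directly comparable and reduces the whole statement to elementary arithmetic of strictly positive quantities.

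First I would assume, toward a contradiction, that $\mathcal{I}_1\cap\mathcal{I}_2=\emptyset$. Adding the two rewritten conditions and using disjointness to merge the partial sums into a single sum over $\mathcal{I}_1\cup\mathcal{I}_2$, I obtain $\sum_{\alpha\in\mathcal{I}_1\cup\mathcal{I}_2}e^{f_\alpha(\boldsymbol{x}_0)}=\mathcal{Z}_0(\boldsymbol{x}_0)=\sum_{\alpha=1}^{N}e^{f_\alpha(\boldsymbol{x}_0)}$. Subtracting then gives $\sum_{\alpha\in[N]\setminus(\mathcal{I}_1\cup\mathcal{I}_2)}e^{f_\alpha(\boldsymbol{x}_0)}=0$. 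Since every exponential $e^{f_\alpha(\boldsymbol{x}_0)}$ is strictly positive, a vanishing sum forces the index set to be empty, so $\mathcal{I}_1\cup\mathcal{I}_2=[N]$.

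The hard part will be disposing of the single boundary case $2k=N$, which I expect to be the only genuine obstacle. When $2k<N$, disjointness already gives $\#(\mathcal{I}_1\cup\mathcal{I}_2)=2k<N$, so $\mathcal{I}_1\cup\mathcal{I}_2=[N]$ is impossible and the contradiction is immediate. When $2k=N$, however, $\mathcal{I}_1\cup\mathcal{I}_2=[N]$ with $\#\mathcal{I}_1=\#\mathcal{I}_2=k$ and $\mathcal{I}_1\cap\mathcal{I}_2=\emptyset$ forces $\mathcal{I}_2=[N]\setminus\mathcal{I}_1$; by the complementarity relation noted after (\ref{eq: locus signed partition}) one has $\mathcal{Z}_k(\mathcal{I}_2)=-\mathcal{Z}_k(\mathcal{I}_1)$, so the two index sets define one and the same zero locus. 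Thus, under the standing hypothesis that the hypersurfaces are truly distinct, this configuration is excluded. In every admissible case the assumption $\mathcal{I}_1\cap\mathcal{I}_2=\emptyset$ collapses, which establishes the contrapositive and hence the proposition. I emphasise that the computation is purely algebraic once (\ref{eq: subsets and elements partition functions relation-1}) is invoked; the only subtlety is recognising that disjoint $k$-subsets at $2k=N$ are complementary and therefore redundant rather than intersecting.
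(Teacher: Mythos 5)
Your proof is correct and follows essentially the same route as the paper's: both reduce the simultaneous vanishing to the identity $\sum_{\alpha\in[N]\backslash(\mathcal{I}_{1}\cup\mathcal{I}_{2})}e^{f_{\alpha}(\boldsymbol{x}_0)}=\sum_{\beta\in\mathcal{I}_{1}\cap\mathcal{I}_{2}}e^{f_{\beta}(\boldsymbol{x}_0)}$, invoke strict positivity of the exponentials, and dispose of the boundary case $2k=N$ by noting that disjoint $k$-subsets are then complementary and define the same hypersurface. No gaps.
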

\begin{proof}
Consider two hypersurfaces $\mathcal{Z}_{\mathrm{sing},k}(\mathcal{I}_{1})$
and $\mathcal{Z}_{\mathrm{sing},k}(\mathcal{I}_{2})$ associated
with two partitions of $[N]$ with $\#\mathcal{I}_{1}=\#\mathcal{I}_{2}=k$.
Assume that they have a common point $\boldsymbol{\tilde{x}}$. Hence,
one has \\
\[
\mathcal{Z}_{k}(\mathcal{I}_{1};\boldsymbol{\tilde{x}})+\mathcal{Z}_{k}(\mathcal{I}_{2};\boldsymbol{\tilde{x}})=2\cdot\sum_{\alpha\in[N]\backslash(\mathcal{I}_{1}\cup\mathcal{I}_{2})}e^{f_{\alpha}(\boldsymbol{\tilde{x}})}-2\cdot\sum_{\beta\in\mathcal{I}_{1}\cap\mathcal{I}_{2}}e^{f_{\beta}(\boldsymbol{\tilde{x}})}=0.
\]
If $\mathcal{I}_{1}\cup\mathcal{I}_{2}\neq[N]$ then this equation
may have real solutions only if the second term is different from
zero, i.e. $\mathcal{I}_{1}\cap\mathcal{I}_{2}\neq\emptyset$. If
$\mathcal{I}_{1}\cup\mathcal{I}_{2}=[N]$ then ${\displaystyle 2k\geq N}$,
but ${\displaystyle k\leq\frac{N}{2}}$ so ${\displaystyle k=\frac{N}{2}}$;
then $\mathcal{I}_{1}$ and $\mathcal{I}_{2}$ are complementary sets
and they define the same equation $\mathcal{Z}_{k}(\mathcal{I}_{1},\mathcal{I}_{2})=0$. 
\end{proof}
Maximum number of intersections in stratum $\mathcal{Z}_{\mathrm{sing},k}$
is bounded by one-half of the number of different pairs $\mathcal{I}_{1}$
and $\mathcal{I}_{2}$ with ${\displaystyle \#\mathcal{I}_{1}=\#\mathcal{I}_{2}=k\leq\left\lfloor \frac{N}{2}\right\rfloor }$
such that $\mathcal{I}_{1}\cap\mathcal{I}_{2}\neq\emptyset$. There
are ${\displaystyle \binom{N}{k}}$ different partitions $(\mathcal{I},[N]\backslash\mathcal{I})$
of $[N]$ with cardinality $\#\mathcal{I}=k$ and for each such $\mathcal{I}$
there are ${\displaystyle \binom{N}{k}-1-\binom{N-k}{k}}$ partitions $(\mathcal{J},[N]\backslash\mathcal{J})$,
$\mathcal{J}\neq\mathcal{I}$ and $\mathcal{I}\cap\mathcal{J}\neq\emptyset$.
Then the number of intersections in the stratum $\mathcal{Z}_{\mathrm{sing},k}$
is bounded from above by ${\displaystyle \frac{1}{2}\cdot \binom{N}{k}\cdot\left[\binom{N}{k}-1-\binom{N-k}{k}\right]}$.
It is not always a strict bound since some of these intersections
might be unreachable. For example, at $k=2$ one has $\#(\mathcal{I}_{1}\cap\mathcal{I}_{2})=1$
and, hence, $\mathcal{I}_{1}=\{\alpha,\gamma_{1}\}$ and $\mathcal{I}_{2}=\{\alpha,\gamma_{2}\}$
for some $\alpha\neq\gamma_{1}\neq\gamma_{2}\neq\alpha$. Then, if
$\boldsymbol{\bar{x}}$ is in the intersection $\mathcal{Z}_{\mathrm{sing},2}(\mathcal{I}_{1})\cap\mathcal{Z}_{\mathrm{sing},2}(\mathcal{I}_{2})$
one has ${\displaystyle e^{f_{\gamma_{1}}(\bar{\boldsymbol{x}})}-e^{f_{\gamma_{2}}(\bar{\boldsymbol{x}})}=-e^{f_{\alpha}(\boldsymbol{\bar{x}})}+\sum_{\beta\neq\alpha,\gamma_{1},\gamma_{2}}e^{f_{\beta}(\boldsymbol{\bar{x}})}=e^{f_{\gamma_{2}}(\boldsymbol{\bar{x}})}-e^{f_{\gamma_{1}}(\boldsymbol{\bar{x}})}}$,
thus $f_{\gamma_{1}}(\boldsymbol{\bar{x}})=f_{\gamma_{2}}(\boldsymbol{\bar{x}})$
and ${\displaystyle e^{f_{\alpha}(\boldsymbol{\bar{x}})}=\sum_{\beta\neq\alpha,\gamma_{1},\gamma_{2}}e^{f_{\beta}(\boldsymbol{\bar{x}})}}$.
This last equation not always has real solutions for general choice
of functions. 

In general, intersections of hypersurfaces (\ref{eq: locus signed partition with assumption})
for the stratum $\mathcal{Z}_{\mathrm{sing},k}$ are not necessarily
transversal. Moreover, the assumption of real-valued functions opens
up the way to reductions. For example,
take two functions $g(\boldsymbol{x})$ and
$h(\boldsymbol{x})$ and consider $f_{1}(\boldsymbol{x})\equiv 4\cdot g(\boldsymbol{x})$,
$f_{2}(\boldsymbol{x})\equiv 4\cdot h(\boldsymbol{x})$, $f_{3}(\boldsymbol{x})\equiv\ln 6+2\cdot g(\boldsymbol{x})+2\cdot h(\boldsymbol{x})$,
$f_{4}(\boldsymbol{x})=\ln 4+3\cdot g(\boldsymbol{x})+h(\boldsymbol{x})$,
$f_{5}(\boldsymbol{x})=\ln 4+g(\boldsymbol{x})+3\cdot h(\boldsymbol{x})$.
Then $\mathcal{Z}_{2}(\{4,5\};\boldsymbol{x})={\displaystyle \left(e^{2g(\boldsymbol{x})}+e^{2h(\boldsymbol{x})}-e^{\ln2+g(\boldsymbol{x})+h(\boldsymbol{x})}\right)^{2}}\geq0$
and it vanishes if and only if $e^{2g(\boldsymbol{x})}+e^{2h(\boldsymbol{x})}=e^{\ln2+g(\boldsymbol{x})+h(\boldsymbol{x})}$.
From Arithmetic-Geometric inequality, this is equivalent to the algebraic
constraint $g(\boldsymbol{x})=h(\boldsymbol{x})$. Occurrence of such
non-transversal crossings or reductions is a particular case and influences
the investigation on equilibrium and non-equilibrium regions. We assume hereafter that pairwise intersections
between hypersurfaces defined by $ \mathcal{Z}_{k}(\mathcal{I})=0 $ or $f_{\alpha}(\boldsymbol{x})-f_{\beta}(\boldsymbol{x})=0$, $1\leq\alpha<\beta\leq N$
are transversal.

\section{Higher strata. General properties \label{sec: Higher strata: general properties}}

All hypersurfaces (\ref{eq: locus signed partition})
which compose the stratum $\mathcal{Z}_{\mathrm{sing},k}$ divide
$\mathbb{R}^{n}$ in a number of regions which we will call domains.
Then, let us denote the subdomain in $\mathbb{R}^{n}$ where all functions
$\mathcal{Z}_{k}(\mathcal{I})>0$, $\mathcal{I}\in\mathcal{P}_{k}[N]$, as $\mathcal{D}_{k+}$. For example, blue regions in figures \ref{fig: examples of 1-strata},\ref{fig: examples number curves} and \ref{fig: intermediate number curves 1-stratum} represent $\mathcal{D}_{1+}$ and the red region in \ref{fig: comments higher strata, a} represents $\mathcal{D}_{2+}$. 
The domain $\mathbb{R}^{n}\backslash\mathcal{D}_{k+}$ is divided
by hypersurfaces $\left\{ \mathcal{Z}_{k}(\mathcal{I})=0\right\} $ into subdomains
where some of functions $\mathcal{Z}_{k}(\mathcal{I})$ are positive and others
are negative. Let us denote $\mathcal{D}_{k-}$ as the domain where
the number of negative functions $\mathcal{Z}_{k}(\mathcal{J})$ with $\mathcal{J}\in\mathcal{P}_{k}[N]$
is maximal. The hypersurfaces of the $k$-stratum are confined in
certain domain which we will refer as zeros confinement domain $ZCD_{k}:=\mathbb{R}^{n}\backslash(\mathcal{D}_{k+}\cup\mathcal{D}_{k-})$.
The domain $ZCD_{k}$ is a sort of intermittent shell which separate
the domains $\mathcal{D}_{k+}$ and $\mathcal{D}_{k-}$ and the boundary
of $\mathcal{D}_{k+}\cup ZCD_{k}$ will be referred as the extremal
points of hypersurfaces composing $\mathcal{Z}_{\mathrm{sing},k}$.
For the first stratum $V_{1}$ defined in (\ref{eq: stratification variable space})
the domain $ZCD_{1}$ generically has dimension $n-1$ and consists
of hypersurfaces $\left\{\boldsymbol{x}: \mathcal{Z}_{1}(\{\alpha\};\boldsymbol{x})=0\right\} $, $\alpha=1,\dots,N$
themselves. For higher $V_{k}$, $k\geq2$, the domain $ZCD_{k}$
has generically dimension $n$ and its boundary is tipically formed
by pieces of different hypersurfaces belonging to $\mathcal{Z}_{\mathrm{sing},k}$. 

Since at $\mathcal{D}_{k+}$ the partition function $\mathcal{Z}_{k}\equiv\mathcal{Z}|_{V_{k}}>0$
it is natural to refer to the domain $\mathcal{D}_{k+}$ as stability
(equilibrium) domain. It is surrounded by the domain $ZCD_{k}$ with
rather complicated singularity structure. The domain $\mathcal{D}_{k-}$
is an ambient instability domain.  

These domains for different strata exhibit a simple inclusion chain. 
\begin{prop}
\label{prop: inclusion chain equilibrium domain} Let ${\displaystyle 1\leq k<\hat{k}\leq\left\lfloor \frac{N}{2}\right\rfloor }$
then $\mathcal{D}_{\hat{k}+}\subseteq\mathcal{D}_{k+}$. \end{prop}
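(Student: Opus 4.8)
The plan is to reduce the defining condition of each stability domain to a single scalar inequality involving only the total sum $\mathcal{Z}_{0}(\boldsymbol{x})$ and the largest partial sums of the exponentials, and then to exploit an obvious monotonicity in $k$. The whole statement is really about comparing two level-set conditions, so the work is in the reformulation rather than in any estimate.

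First I would rewrite the constituent functions using relation (\ref{eq: subsets and elements partition functions relation-1}): since $\mathcal{Z}_{k}(\mathcal{I};\boldsymbol{x})=\mathcal{Z}_{0}(\boldsymbol{x})-2\sum_{\alpha\in\mathcal{I}}e^{f_{\alpha}(\boldsymbol{x})}$, the inequality $\mathcal{Z}_{k}(\mathcal{I};\boldsymbol{x})>0$ is equivalent to $\sum_{\alpha\in\mathcal{I}}e^{f_{\alpha}(\boldsymbol{x})}<\tfrac{1}{2}\mathcal{Z}_{0}(\boldsymbol{x})$. Since membership $\boldsymbol{x}\in\mathcal{D}_{k+}$ requires this for \emph{every} $\mathcal{I}\in\mathcal{P}_{k}[N]$, it is equivalent to requiring it only for the worst case, namely the subset maximizing the partial sum. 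Setting $M_{k}(\boldsymbol{x}):=\max_{\#\mathcal{I}=k}\sum_{\alpha\in\mathcal{I}}e^{f_{\alpha}(\boldsymbol{x})}$, one obtains the uniform characterization $\boldsymbol{x}\in\mathcal{D}_{k+}\Leftrightarrow M_{k}(\boldsymbol{x})<\tfrac{1}{2}\mathcal{Z}_{0}(\boldsymbol{x})$. Concretely $M_{k}(\boldsymbol{x})$ is the sum of the $k$ largest among the $N$ positive quantities $e^{f_{1}(\boldsymbol{x})},\dots,e^{f_{N}(\boldsymbol{x})}$.

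The decisive step is the monotonicity $M_{k}(\boldsymbol{x})\leq M_{\hat{k}}(\boldsymbol{x})$ whenever $k<\hat{k}$: passing from the top $k$ to the top $\hat{k}$ summands only adjoins additional nonnegative terms $e^{f_{\alpha}(\boldsymbol{x})}$, so the maximal partial sum can only grow. Granting this, if $\boldsymbol{x}\in\mathcal{D}_{\hat{k}+}$ then $M_{\hat{k}}(\boldsymbol{x})<\tfrac{1}{2}\mathcal{Z}_{0}(\boldsymbol{x})$, whence $M_{k}(\boldsymbol{x})\leq M_{\hat{k}}(\boldsymbol{x})<\tfrac{1}{2}\mathcal{Z}_{0}(\boldsymbol{x})$, placing $\boldsymbol{x}$ in $\mathcal{D}_{k+}$. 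This gives the desired inclusion $\mathcal{D}_{\hat{k}+}\subseteq\mathcal{D}_{k+}$, with strict inequalities preserved throughout.

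I do not anticipate a genuine obstacle: the entire content lies in correctly recasting the ``for all $\mathcal{I}$'' condition as the single worst-case inequality $M_{k}<\tfrac{1}{2}\mathcal{Z}_{0}$, after which the monotonicity of $M_{k}$ in $k$ is immediate from positivity of the exponentials. The hypothesis $\hat{k}\leq\lfloor N/2\rfloor$ plays no role in the inclusion argument itself; it only fixes the range in which the $\mathcal{Z}_{N-k}(\mathcal{I}_{2},\mathcal{I}_{1})=-\mathcal{Z}_{k}(\mathcal{I}_{1},\mathcal{I}_{2})$ symmetry makes these strata the natural (non-redundant) stability domains, so I would mention it only to align with the paper's conventions.
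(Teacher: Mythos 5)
Your proof is correct and rests on the same elementary fact as the paper's: adding further positive exponentials to the negated part of the partition only decreases $\mathcal{Z}_{k}(\mathcal{I};\boldsymbol{x})$, which is exactly the paper's identity $\mathcal{Z}_{k}(\mathcal{I}_{k};\boldsymbol{x})-\mathcal{Z}_{\hat{k}}(\mathcal{I}_{\hat{k}};\boldsymbol{x})=2\sum_{\alpha\in\mathcal{I}_{\hat{k}}\backslash\mathcal{I}_{k}}e^{f_{\alpha}(\boldsymbol{x})}>0$ for nested subsets, repackaged through the worst-case sums $M_{k}(\boldsymbol{x})$. Your closing remark that the hypothesis $\hat{k}\leq\lfloor N/2\rfloor$ is not used in the inclusion argument itself is also accurate.
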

\begin{proof}
Through the proof is an immediate consequence of the definition of $\mathcal{D}_{k+}$, we present it here for completeness. Let $\hat{k}>k$ and $\mathcal{I}_{k}\subset\mathcal{I}_{\hat{k}}$
be two subsets of $[N]$ such that $\#\mathcal{I}_{k}=k$ and $\#\mathcal{I}_{\hat{k}}=\hat{k}$.
Then for any $\boldsymbol{x}\in\mathbb{R}^{n}$ one has the identity
\begin{equation}
\mathcal{Z}_{k}(\mathcal{I}_{k};\boldsymbol{x})-\mathcal{Z}_{\hat{k}}(\mathcal{I}_{\hat{k}};\boldsymbol{x})=2\cdot \sum_{\alpha\in\mathcal{I}_{k+1}\backslash\mathcal{I}_{k}} e^{f_{\alpha}(\boldsymbol{x})}>0.\label{eq: identity monotony signed partitions}
\end{equation}
In the domain $\mathcal{D}_{\hat{k}+}$ one has $\mathcal{Z}_{\hat{k}}(\mathcal{I}_{\hat{k}};\boldsymbol{x})>0$
for all subsets $\mathcal{I}_{k+1}$. Inequality (\ref{eq: identity monotony signed partitions})
implies that all $\mathcal{Z}_{k}(\mathcal{I}_{k};\boldsymbol{x})>0$ in $\mathcal{D}_{\hat{k}+}$
too. So $\mathcal{D}_{\hat{k}+}\subseteq\mathcal{D}_{k+}$. 
\end{proof}
Thus, one has the inclusion chain 
\begin{equation}
\mathcal{D}_{0+}=\mathbb{R}^{n}\supseteq\mathcal{D}_{1+}\supseteq\mathcal{D}_{2+}\supseteq\dots\supseteq\mathcal{D}_{\left\lfloor \frac{N}{2}\right\rfloor +}.\label{eq: inclusion chain equilibrium domain}
\end{equation}
An example with $N=7$ and $f_{\alpha}(x,y)$ as in (\ref{eq: example family maximal number 1-components}) is given in figure \ref{fig: chain equilibrium}, where the domain  $\mathcal{D}_{1+}$ is shown in blue color, $\mathcal{D}_{2+}$ in red and $\mathcal{D}_{3+}$ in green. 
\begin{figure}[tph]
\centering
\subfigure[Singular sectors $\mathcal{Z}_{\mathrm{sing},1}$ (blue), $\mathcal{Z}_{\mathrm{sing},2}$ (red) and $\mathcal{Z}_{\mathrm{sing},3}$ (green). ]{
\includegraphics[scale=0.4]{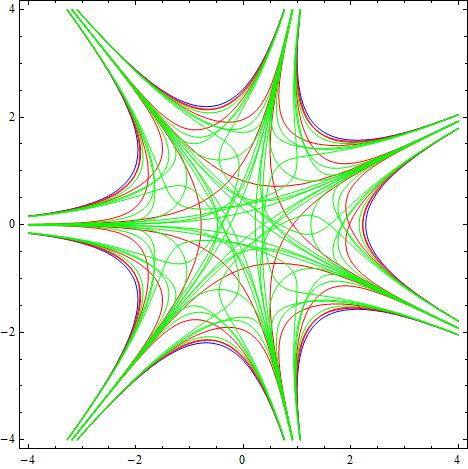}
\label{fig: chain equilibrium, a}}
\hfill{}
\subfigure[Domains $\mathcal{D}_{3+}\subseteq\mathcal{D}_{2+}\subseteq{D}_{1+}$. ]
{
\includegraphics[scale=0.4]{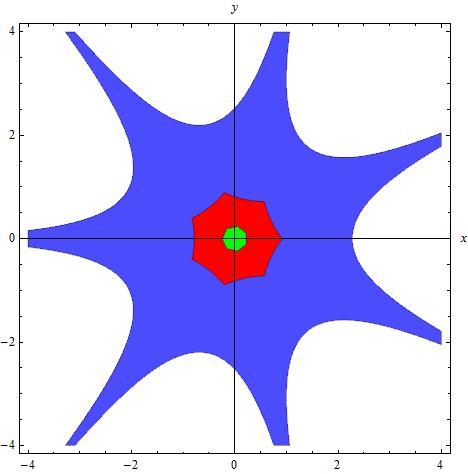}
\label{fig: chain equilibrium, b}}
\caption{Inclusion chain for equilibrium domains in the case $f_{\alpha}\equiv\cos\left(\frac{2\pi(\alpha-1)}{7}\right)x+\sin\left(\frac{2\pi(\alpha-1)}{7}\right)y$, $\alpha=1,\dots,7$. }
\label{fig: chain equilibrium}
\end{figure}

Furthermore, one also has 
\begin{prop}
\label{prop: inclusion chain equilibrium + ZCD} 
Let us take $\displaystyle 1\leq k<\hat{k}\leq\left\lfloor \frac{N}{2}\right\rfloor$. Then, there exists a dense subset
of $\mathcal{Z}_{\mathrm{sing},\hat{k}}$ such that each ray $(\vec{r})_{i}=x_{0,i}+t\cdot e_{i}$ originated
from this set, $\boldsymbol{e}\in\mathbb{R}^{n}$,
intersects $\mathcal{Z}_{\mathrm{sing},k}$.
\end{prop}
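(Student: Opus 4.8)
The plan is to reduce the statement to a one-dimensional intermediate-value argument for a single signed partition function along the ray, using the monotonicity identity \eqref{eq: identity monotony signed partitions} to pin down the sign at the starting point and the exponential asymptotics to pin it down at infinity.

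First I would fix a point $\boldsymbol{x}_0\in\mathcal{Z}_{\mathrm{sing},\hat{k}}$ lying, after removing a nowhere-dense degenerate set described below, on a single sheet $\mathcal{Z}_{\mathrm{sing},\hat{k}}(\mathcal{I}_{\hat{k}})$, so that $\mathcal{Z}_{\hat{k}}(\mathcal{I}_{\hat{k}};\boldsymbol{x}_0)=0$ with $\#\mathcal{I}_{\hat{k}}=\hat{k}$. Choosing any $k$-subset $\mathcal{I}_k\subset\mathcal{I}_{\hat{k}}$ and applying \eqref{eq: identity monotony signed partitions} gives
\[
\mathcal{Z}_k(\mathcal{I}_k;\boldsymbol{x}_0)=\mathcal{Z}_{\hat{k}}(\mathcal{I}_{\hat{k}};\boldsymbol{x}_0)+2\sum_{\alpha\in\mathcal{I}_{\hat{k}}\setminus\mathcal{I}_k}e^{f_\alpha(\boldsymbol{x}_0)}=2\sum_{\alpha\in\mathcal{I}_{\hat{k}}\setminus\mathcal{I}_k}e^{f_\alpha(\boldsymbol{x}_0)}>0 ,
\]
where positivity and the nonemptiness of $\mathcal{I}_{\hat{k}}\setminus\mathcal{I}_k$ use exactly the hypothesis $k<\hat{k}$. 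Thus the chosen lower-stratum function $\mathcal{Z}_k(\mathcal{I}_k;\cdot)$ is strictly positive at the origin of the ray.

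Next I would examine $\mathcal{Z}_k(\mathcal{I}_k;\boldsymbol{x}_0+t\boldsymbol{e})=\sum_{\alpha}\sigma_\alpha\,e^{f_\alpha(\boldsymbol{x}_0)}e^{t(\boldsymbol{a}_\alpha\cdot\boldsymbol{e})}$, with $\sigma_\alpha=-1$ for $\alpha\in\mathcal{I}_k$ and $\sigma_\alpha=+1$ otherwise, as $t\to+\infty$. For a direction $\boldsymbol{e}$ with pairwise distinct slopes $\boldsymbol{a}_\alpha\cdot\boldsymbol{e}$ the dominant exponential is the one of largest slope, so the sign of $\mathcal{Z}_k(\mathcal{I}_k;\boldsymbol{x}(t))$ for large $t$ equals $\sigma_{\alpha_{\max}}$. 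Hence, if $\boldsymbol{e}$ is chosen so that the maximal-slope index lies in $\mathcal{I}_k$, then $\mathcal{Z}_k(\mathcal{I}_k;\boldsymbol{x}(t))\to-\infty$. Continuity together with the positive value at $t=0$ then yields, by the intermediate value theorem, a parameter $t^\ast>0$ with $\mathcal{Z}_k(\mathcal{I}_k;\boldsymbol{x}(t^\ast))=0$, i.e. $\boldsymbol{x}(t^\ast)\in\mathcal{Z}_{\mathrm{sing},k}(\mathcal{I}_k)\subseteq\mathcal{Z}_{\mathrm{sing},k}$, which is the sought intersection.

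The main obstacle is the compatibility of the two demands on $\boldsymbol{e}$: the identity forces the useful subsets to satisfy $\mathcal{I}_k\subset\mathcal{I}_{\hat{k}}$, whereas making $\mathcal{Z}_k(\mathcal{I}_k)\to-\infty$ requires the maximal-slope state to sit in $\mathcal{I}_k$, i.e. $\mathcal{I}_{\hat{k}}$ must contain a state exposable as a vertex of $\mathrm{conv}\{\boldsymbol{a}_1,\dots,\boldsymbol{a}_N\}$. I expect to dispose of this in one of two ways. The robust route is to drop the inclusion $\mathcal{I}_k\subset\mathcal{I}_{\hat{k}}$: for generic $\boldsymbol{e}$ let $\mathcal{I}_k$ be the $k$ indices of largest slope; then $\mathcal{Z}_k(\mathcal{I}_k;\boldsymbol{x}(t))\to-\infty$ as $t\to+\infty$ and $\to+\infty$ as $t\to-\infty$, so, whatever the sign of $\mathcal{Z}_k(\mathcal{I}_k;\boldsymbol{x}_0)$, one of the two opposite rays crosses zero. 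Equivalently, one may use that $\mathcal{D}_{k+}$ is bounded and nonempty with $\partial\mathcal{D}_{k+}\subseteq\mathcal{Z}_{\mathrm{sing},k}$ and aim the ray from $\boldsymbol{x}_0$ through an interior point of $\mathcal{D}_{k+}$. Finally, the restriction to a dense subset absorbs the degeneracies excluded at the end of Section \ref{sec: Higher strata: zero loci}: I would remove from $\mathcal{Z}_{\mathrm{sing},\hat{k}}$ the nowhere-dense closed set of its self-intersections, its intersections with $\mathcal{Z}_{\mathrm{sing},k}$, and the points at which the chosen direction is tangent or produces tied slopes; on the complement the crossing at $t^\ast$ is transversal and occurs at $t^\ast>0$, as claimed.
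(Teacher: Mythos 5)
Your skeleton --- $\mathcal{Z}_k(\mathcal{I};\cdot)$ positive at the base point, negative far along the ray, then the intermediate value theorem --- is exactly the paper's, and you correctly isolate the one genuine difficulty: the index $\alpha_{\max}$ that dominates along a given ray need not lie in $\mathcal{I}_{\hat{k}}$, so a $k$-subset forced inside $\mathcal{I}_{\hat{k}}$ may fail to capture the dominant term. But neither of your two workarounds closes this gap. The ``robust route'' (take $\mathcal{I}_k$ to be the $k$ largest slopes and compare the two ends of the line) only produces a zero of $\mathcal{Z}_k(\mathcal{I}_k;\cdot)$ somewhere on the \emph{full line} through $\boldsymbol{x}_0$; since such an $\mathcal{I}_k$ can straddle both halves of the balanced partition, one can have $\mathcal{Z}_k(\mathcal{I}_k;\boldsymbol{x}_0)<0$ (e.g.\ $N=6$, $\hat{k}=3$, each half consisting of one large and two tiny terms, $\mathcal{I}_k$ the two large indices), and then the crossing may occur only at $t<0$, i.e.\ on the opposite ray. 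The proposition, however, demands that \emph{every} ray $t\geq 0$, for \emph{every} direction $\boldsymbol{e}$, meets $\mathcal{Z}_{\mathrm{sing},k}$ --- that is exactly what ``$\mathcal{Z}_{\mathrm{sing},\hat{k}}$ lies inside a region delimited by components of $\mathcal{Z}_{\mathrm{sing},k}$'' means --- so establishing it for one of the two opposite rays is strictly weaker. The second workaround (aim at an interior point of $\mathcal{D}_{k+}$) likewise treats only one privileged direction, and in addition assumes $\mathcal{D}_{k+}$ bounded and nonempty, which the paper neither assumes nor proves (in the degenerate example of figure \ref{fig: example 1-stratum highly degenerate} the domain $\mathcal{D}_{1+}$ is an unbounded strip).

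The observation you are missing is that positivity at $t=0$ does not require $\mathcal{I}\subset\mathcal{I}_{\hat{k}}$. At $\boldsymbol{x}_0\in\mathcal{Z}_{\mathrm{sing},\hat{k}}(\mathcal{I}_{\hat{k}})$ the two halves $\mathcal{I}_{\hat{k}}$ and $[N]\backslash\mathcal{I}_{\hat{k}}$ carry the same sum $\tfrac12\mathcal{Z}_0(\boldsymbol{x}_0)$, so $\mathcal{Z}_k(\mathcal{I};\boldsymbol{x}_0)=\mathcal{Z}_0(\boldsymbol{x}_0)-2\sum_{\alpha\in\mathcal{I}}e^{f_\alpha(\boldsymbol{x}_0)}>0$ for any $k$-subset $\mathcal{I}$ that is a \emph{proper subset of either half}. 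Since $k<\hat{k}\leq N-\hat{k}$, both halves have more than $k$ elements; hence, whichever half contains the eventually dominant index $\alpha_0$ supplied by the ray lemma, one can choose $\mathcal{I}$ inside that half with $\alpha_0\in\mathcal{I}$. With this choice your original intermediate-value argument goes through for every generic direction, and the degenerate base points (where some $f_\alpha-f_\beta$ vanishes identically along the ray) are absorbed into the dense-subset clause exactly as in your last paragraph. This is precisely how the paper's proof resolves the obstacle.
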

In other words, if $\{f_{\alpha}(\boldsymbol{x}):\,\alpha\in[N]\}$
are pairwise different linear functions, then $\mathcal{Z}_{\mathrm{sing},\hat{k}}$
lies inside a region of $\mathbb{R}^{n}$ delimited by some components
of $\mathcal{Z}_{\mathrm{sing},k}$, ${\displaystyle 1\leq k<\hat{k}\leq\left\lfloor \frac{N}{2}\right\rfloor }$. 
\begin{proof}
The proof is based on the following 
\begin{lem}
\label{lem: ray lemma} Consider any ray $\vec{r}$ with base point
$\boldsymbol{x}$, slopes $\boldsymbol{e}:=(e_{1},\dots,e_{n})$
and parametrization $\vec{r}(t):=\boldsymbol{x}+t\cdot\boldsymbol{e}$,
$t\geq0$. If none of functions $f_{\alpha}(\vec{r}(t))-f_{\beta}(\vec{r}(t))$
vanishes identically, $\alpha\neq\beta$, then there exists $t_{0}\in\mathbb{R}_{+}$
such that $\vec{r}(t)$ belongs to a region in $\mathbb{R}^{n}$ where
there is only one dominant function $f_{\alpha}$, $\alpha\in[N]$, at $t\geq t_{0}$, i.e. $\#\left\{ \alpha\in[N]:\,\forall\beta\in[N],\,f_{\alpha}(\vec{r}(t))\geq f_{\beta}(\vec{r}(t))\right\} =1$. \end{lem}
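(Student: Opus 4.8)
The plan is to reduce everything to the behaviour of finitely many affine functions of the single ray parameter $t$ and then read off the dominant one from its leading coefficients. Along the ray one has $f_{\alpha}(\vec{r}(t))=f_{\alpha}(\boldsymbol{x})+t\cdot\sum_{i=1}^{n}a_{\alpha i}e_{i}$, which I would abbreviate as $f_{\alpha}(\vec{r}(t))=d_{\alpha}+t\,c_{\alpha}$ with slope $c_{\alpha}:=\sum_{i=1}^{n}a_{\alpha i}e_{i}$ and intercept $d_{\alpha}:=f_{\alpha}(\boldsymbol{x})$. Thus each $f_{\alpha}(\vec{r}(t))$ is affine in $t$, and the hypothesis that no difference $f_{\alpha}(\vec{r}(t))-f_{\beta}(\vec{r}(t))=(d_{\alpha}-d_{\beta})+t(c_{\alpha}-c_{\beta})$ vanishes identically says precisely that no two distinct indices share the same pair $(c_{\alpha},d_{\alpha})$.

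First I would single out the candidate dominant index by a lexicographic maximum. Let $c^{*}:=\max_{\alpha\in[N]}c_{\alpha}$ and $S:=\{\alpha\in[N]:c_{\alpha}=c^{*}\}$; among these let $d^{*}:=\max_{\alpha\in S}d_{\alpha}$ and $S':=\{\alpha\in S:d_{\alpha}=d^{*}\}$. For large $t$ any index outside $S$ is beaten because its slope is strictly smaller, and any index in $S\setminus S'$ is beaten by the fixed positive gap $d^{*}-d_{\beta}$; hence the eventual maximizers are exactly the elements of $S'$.

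The key step is to show $\#S'=1$. If $\alpha,\beta\in S'$ with $\alpha\neq\beta$, then $c_{\alpha}=c_{\beta}=c^{*}$ and $d_{\alpha}=d_{\beta}=d^{*}$, so that $f_{\alpha}(\vec{r}(t))-f_{\beta}(\vec{r}(t))\equiv0$, contradicting the hypothesis. Therefore $S'=\{\alpha^{*}\}$ for a unique index $\alpha^{*}$.

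Finally I would produce the threshold $t_{0}$. For each $\beta\neq\alpha^{*}$ the difference $f_{\alpha^{*}}(\vec{r}(t))-f_{\beta}(\vec{r}(t))$ is either a strictly positive constant (when $\beta\in S\setminus S'$) or an affine function with positive leading coefficient tending to $+\infty$ (when $c_{\beta}<c^{*}$); in both cases it is strictly positive for all $t$ beyond some $t_{\beta}$. Taking $t_{0}:=\max_{\beta\neq\alpha^{*}}t_{\beta}$, a maximum over finitely many values, yields $f_{\alpha^{*}}(\vec{r}(t))>f_{\beta}(\vec{r}(t))$ for every $\beta\neq\alpha^{*}$ and every $t\geq t_{0}$, so $\alpha^{*}$ is the unique maximizer and the set counted in the statement has cardinality one. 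There is essentially no analytic obstacle here: the only point requiring care is to handle the two regimes, strictly smaller slope versus equal slope with smaller intercept, uniformly so that a single finite $t_{0}$ suffices, and to confirm that strict domination rather than mere $\geq$ holds, which is exactly what the hypothesis of no identically vanishing difference guarantees.
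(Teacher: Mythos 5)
Your proof is correct, but it takes a different route from the paper's. The paper does not identify the dominant index explicitly: it collects all the pairwise differences $d_{\alpha\beta}(t)=f_{\alpha}(\vec{r}(t))-f_{\beta}(\vec{r}(t))$, observes that each is a non-identically-vanishing linear function of $t$ and hence has finitely many zeros, takes $t_{0}$ beyond the largest such zero, notes that at $t_{0}$ the values $f_{\alpha}(\vec{r}(t_{0}))$ are pairwise distinct so there is a unique strict maximizer $\alpha_{0}$, and then argues by continuity that the strict ordering cannot change for $t>t_{0}$ without creating an additional zero of some $d_{\alpha_{0}\alpha}$. You instead exploit the affine structure $f_{\alpha}(\vec{r}(t))=d_{\alpha}+t\,c_{\alpha}$ directly and name the winner as the lexicographic maximum in $(c_{\alpha},d_{\alpha})$, with the hypothesis of no identically vanishing difference guaranteeing that this maximum is attained by a single index. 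Your version is more constructive (it tells you \emph{which} $\alpha$ dominates and gives an explicit threshold as a maximum of finitely many crossing times), while the paper's version is more robust: it uses only the finiteness of the zero set of each $d_{\alpha\beta}$ and so extends verbatim to polynomial or other non-affine $f_{\alpha}$, which is exactly the generalization the authors invoke in the conclusion. Both arguments are complete; the only cosmetic caveat in yours is to note that when every $\beta\neq\alpha^{*}$ already satisfies strict domination at $t=0$ one may simply take $t_{0}=0$, so the finite maximum defining $t_{0}$ is always available.
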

\begin{proof}
Let us consider the ${\displaystyle \frac{N(N-1)}{2}}$ functions
\begin{equation}
d_{\alpha\beta}(t):= f_{\alpha}(\vec{r}(t))-f_{\beta}(\vec{r}(t))\quad1\leq\alpha<\beta\leq N.\label{eq: difference phases}
\end{equation}
Since none of these linear functions vanishes identically, each of them
has a finite number of roots, so the set of points 
\begin{equation}
\Omega:=\displaystyle \bigcup_{1\leq\alpha<\beta\leq N}\{t:\,d_{\alpha\beta}(t)=0\}
\label{eq: set of roots lemma}
\end{equation}
is finite. Thus, for $t_0>\max(\Omega)$, all $d_{\alpha\beta}(t_0)$
will be definitely different from zero, hence all $\{f_{\alpha}(\vec{r}(t_0)):\,\alpha\in[N]\}$
are pairwise different. In particular, there will be one and only one $\alpha_{0}\in[N]$
such that $f_{\alpha_{0}}(\vec{r}(t_0))>f_{\alpha}(\vec{r}(t_0))$ for
all $\alpha\neq \alpha_{0}$. So $f_{\alpha_{0}}(\vec{r}(t))-f_{\alpha}(\vec{r}(t))>0$ at $t>t_0$ too, since a change of sign would imply an additional zero
of $d_{\alpha_{0}\alpha}$ by continuity. 
\end{proof}

Now, let ${\displaystyle (\mathcal{J},}[N]\backslash\mathcal{J})$ be any
partition of $[N]$ such that ${\displaystyle \#\mathcal{J}=\hat{k}\leq\left\lfloor \frac{N}{2}\right\rfloor }$
and $\vec{r}(t):=\left(x_{i}+t\cdot e_{i}\right)$, $t\geq0$
be a ray with base point $\boldsymbol{x}\in\mathcal{Z}_{\mathrm{sing},\hat{k}}$.
If $d_{\alpha\beta}(t)$ in (\ref{eq: difference phases}) does not
vanish identically for all $\alpha<\beta$, then lemma \ref{lem: ray lemma} implies that there exist $t_{0}>0$ and $\alpha_{0}\in[N]$ such that one has
$d_{\beta\alpha_{0}}(t)<0$ at $t\geq t_{0}$ and $\beta\neq \alpha_{0}$.
In particular, $d_{\beta \alpha_{0}}$ are linear functions of
$t$ and they are negative at $t\geq t_{0}$. Hence one has 
\begin{equation}
{\displaystyle \lim_{t\rightarrow+\infty}f_{\beta}(\vec{r}(t))-f_{\alpha_{0}}(\vec{r}(t))=-\infty},\quad\beta\neq \alpha_{0}.\label{eq: prop limit difference phases}
\end{equation}
So there exists $t_{1}>t_{0}$ such that ${\displaystyle \sum_{\beta\neq\alpha_{0}}e^{f_{\beta}(\vec{r}(t))-f_{\alpha_{0}}(\vec{r}(t))}<1}$
at $t>t_{1}$. The index $\alpha_{0}$ belongs to only one subset
$\mathcal{J}$ or $[N]\backslash\mathcal{\mathcal{J}}$, call it $\mathcal{J}(\alpha_{0})$:
from ${\displaystyle k<\hat{k}\leq N-\hat{k}}$ it follows that $k<\#\mathcal{J}(\alpha_{0})$
and one can always choose a subset $\mathcal{I}\subset\mathcal{J}(\alpha_{0})$
with $k$ elements such that $\alpha_{0}\in\mathcal{I}$. One has
$\vec{r}(t)|_{t=0}=\boldsymbol{x}$ so 
 $\displaystyle \sum_{\alpha\in\mathcal{I}}e^{f_{\alpha}(\boldsymbol{x})}<\sum_{\alpha\in\mathcal{J}(\alpha_{0})}e^{f_{\alpha}(\boldsymbol{x})}
=$ $\displaystyle\sum_{\beta\in[N]\backslash\mathcal{J}(\alpha_{0})}e^{f_{\beta}(\boldsymbol{x})}<\sum_{\beta\in[N]\backslash\mathcal{I}}e^{f_{\beta}(\boldsymbol{x})}$.
On the other hand, at $t=t_{1}$ one has 
 ${\displaystyle \sum_{\alpha\in\mathcal{I}}e^{f_{\alpha}(\vec{r}(t_{1}))}>e^{f_{\alpha_{0}}(\vec{r}(t_{1}))}>\sum_{\beta\neq \alpha_{0}}e^{f_{\beta}(\vec{r}(t_{1}))}>\sum_{\beta\in[N]\backslash\mathcal{I}}e^{f_{\beta}(\vec{r}(t_{1}))}}$.
Then, there exists a point $0<\bar{t}<t_{1}$ such that ${\displaystyle \sum_{\alpha\in\mathcal{I}}e^{f_{\alpha}(\vec{r}(\bar{t}))}=\sum_{\beta\in[N]\backslash\mathcal{I}}e^{f_{\beta}(\vec{r}(\bar{t}))}}$
by continuity. Thus $\vec{r}(\bar{t})$ is in the set $ \mathcal{Z}_{\mathrm{sing},k}(\mathcal{I}) \subseteq\mathcal{Z}_{\mathrm{sing},k}$.
If instead $d_{\alpha_{0}\beta_{0}}(t)\equiv 0$ for some $\alpha_{0}\neq\beta_{0}$
at the point $(\boldsymbol{x_{0}};\boldsymbol{e})$, then $f_{\alpha_{0}}(\boldsymbol{x_{0}})-f_{\beta_{0}}(\boldsymbol{x_{0}})=d_{\alpha_{0}\beta_{0}}(0)=0$
and $\boldsymbol{x_{0}}\in\mathcal{Z}_{\mathrm{sing},k}\cap\left\{ \boldsymbol{x}\in\mathbb{R}^{n}:\,f_{\alpha_{0}}(\boldsymbol{x})=f_{\beta_{0}}(\boldsymbol{x})\right\} $.
From the generic hypothesis of transversal crossing, the complement
of set of such points $\boldsymbol{x_{0}}$ in $\mathcal{Z}_{\mathrm{sing},\hat{k}}$
is dense. 
\end{proof}
Some concrete examples are presented in figure \ref{fig: rays bounding examples}. 

\begin{figure}[tph]
\centering{
\subfigure[Stratification of $\mathcal{Z}_{\mathrm{sing},3}$ (magenta), $\mathcal{Z}_{\mathrm{sing},2}$
(blue) and $\mathcal{Z}_{\mathrm{sing},1}$ (orange) in case $f_{1}\equiv0$,
$f_{2}\equiv3x$, $f_{3}\equiv3y$, $f_{4}\equiv x+y+\ln6$, $f_{5}\equiv2x+y+\ln11$,
$f_{6}\equiv x+3y+\ln4$. ]
{\includegraphics[scale=0.42]{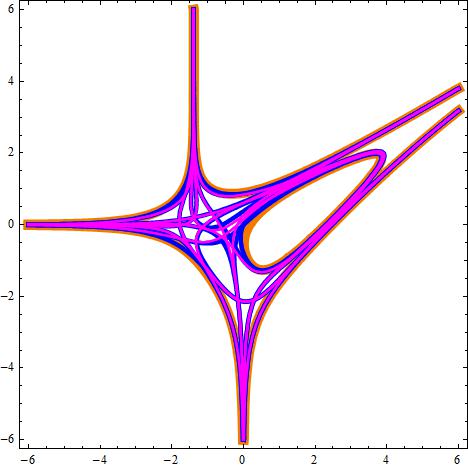}}
\hfill{}
\subfigure[Stratification of $\mathcal{Z}_{\mathrm{sing},3}$ (green) and $\mathcal{Z}_{\mathrm{sing},2}$
(red) in $N=3$ super-ideal case. ]
{\includegraphics[scale=0.42]{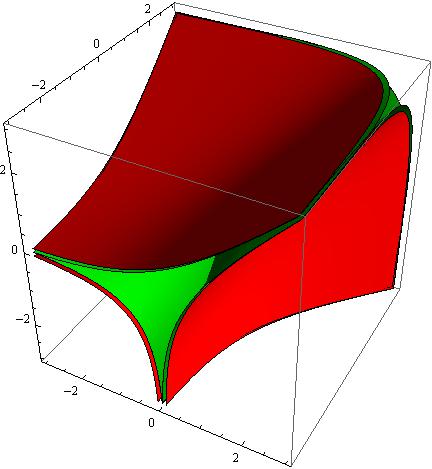}}
\caption{Examples of stratifications. }
\label{fig: rays bounding examples}
}
\end{figure}

Note that the rays that do not
come out $ZCD_{k}$ are connected with locus of coincident dominant
functions $f_{\alpha}(\boldsymbol{x})=f_{\beta}(\boldsymbol{x})$ $\displaystyle=\max_{\gamma}\left\{ f_{\gamma}(\boldsymbol{x})\right\}$,
$\alpha\neq\beta$, see figure \ref{fig: extremelocirays}.

\begin{figure}[tph]
\centering{
\subfigure[$2$-stratum $\mathcal{Z}_{\mathrm{sing},2}$; thick locus is the
set of extremal points. ]
{\includegraphics[scale=0.42]{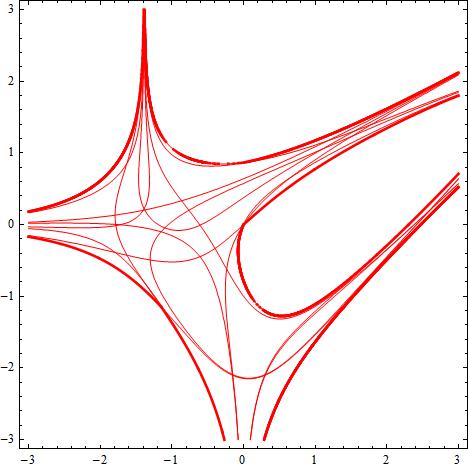}}
\hfill{}
\subfigure[Ray from $1$, an extremal point of $\mathcal{Z}_{\mathrm{sing},2}$, meets $\mathcal{Z}_{\mathrm{sing},1}$ at point $2$. The ray that comes out from point $2$ meets other strata, but the last one is $\mathcal{Z}_{\mathrm{sing},1}$ itself. Ray from point $3\in\mathcal{Z}_{\mathrm{sing},1}$ does not meet other strata. Point $4$ is in a particular position and never comes out $ZCD_{2}$. ]
{\includegraphics[scale=0.42]{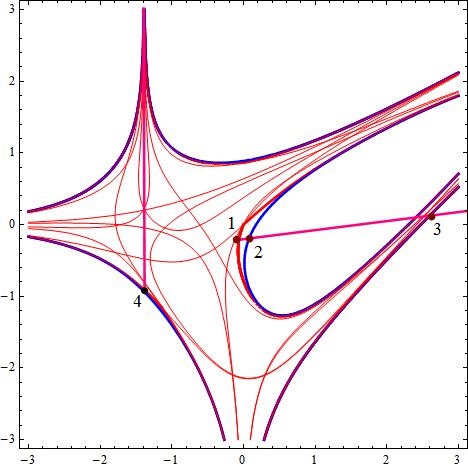}\label{fig: extremelocirays}}
\caption{Comments on relative positions of singular loci. }
\label{fig: rays bounding comments}
}
\end{figure}

\section{Statistical amoebas vs. algebraic amoebas \label{sec: Statistical amoebas vs. algebraic amoebas}}

We will refer to the domains $\mathcal{A}_{k}:=\mathcal{D}_{k+}\cup ZCD_{k}$
described in previous section as the statistical $k$-amoebas. They,
generically, are composed by the internal stable nuclei (domains $\mathcal{D}_{k+}$)
and enveloping shells $ZCD_{k}$ which contain singular hypersurfaces
$\mathcal{Z}_{\mathrm{sing},k}(\mathcal{I})$ and subdomains with some
number of positive and negative partition functions.
The statistical amoeba $\mathcal{A}_{k}$ is surrounded by the domain
$\mathcal{D}_{k-}$ of maximal instability (as we will demonstrate
in next section). 

For ${\displaystyle k>\left\lfloor\frac{N}{2}\right\rfloor}$ the domain $ZCD_{k}$
coincides with that of $ZCD_{N-k}$ while the domains $\mathcal{D}_{k-}$
and $\mathcal{D}_{k+}$ exchange their roles, namely $\mathcal{D}_{k-}:= \mathcal{D}_{(N-k)+}$
and $\mathcal{D}_{k+}:= \mathcal{D}_{(N-k)-}$. With increasing $k$
the stability domain schrinks (proposition \ref{prop: inclusion chain equilibrium domain}) while instable domain $\mathcal{D}_{k-}$ expands.
At $k=N$ the whole space $\mathbb{R}^{n}$ is the domain of instability.
It would be natural to refer to the domain $\mathcal{D}_{k-}\cup ZCD_{k}$
at ${\displaystyle k>\left\lfloor\frac{N}{2}\right\rfloor}$ as the statistical
$k$-antiamoeba. 

The name amoeba is borrowed from algebraic geometry. The amoeba of
the algebraic variety $\mathcal{V}_{n}$ given by the algebraic equation
\begin{equation}
\sum_{m_{1},\dots,m_{n}}c_{m_{1},\dots,m_{n}}z_{1}^{m_{1}}z_{2}^{m_{2}}\cdots z_{n}^{m_{n}}=0\label{eq: polynomial}
\end{equation}
with complex $z_{1},\dots,z_{n}$ and $c_{m_{1},\dots,m_{n}}$ is
defined \cite{GKZ2008} as the image of $\mathcal{V}_{n}$ under
logarithmic map $(z_{1},\dots,z_{n})\mapsto(\log|z_{1}|,\dots,\log|z_{n}|)$.
Amoebas of algebraic varieties and their properties have been
intensively studied since their introduction by Gelfand, Kapranov and Zelevinsky (see e.g. \cite{Purbhoo2008,Mikhalkin2000,Passare2000,Theobald2002,Mikhalkin2004,Tsikh2012,NissePassare2012}). 

In the simplest esample of the complex plane given by the equation
$1+z_{1}+z_{2}=0$ (see \cite{Mikhalkin2013}) the amoeba is defined
by the triangle inequalities 
\begin{equation}
\begin{array}{c}
e^{x}+e^{y}>1,\\
1+e^{y}>e^{x},\\
1+e^{x}>e^{y}
\end{array}\label{eq: triangle inequalities}
\end{equation}
and is presented in figure \ref{fig: examples of 1-strata, a} (colored
region). So in this case statistical and algebraic amoebas coincide.
However, in general, it is not so. Indeed let us, firstly, rewrite
equation (\ref{eq: polynomial}) in the form 
\begin{equation}
\sum_{\alpha=1}^{N}\exp\left(b_{\alpha}+\sum_{i=1}^{n}a_{\alpha i}x_{i}+\mathrm{i}\left(\arg a_{\alpha}+\sum_{i=1}^{n}a_{\alpha i}\varphi_{i}\right)\right)=0\label{eq: polynomial log coordinates}
\end{equation}
where $x_{i}=\log|z_{i}|$, $\varphi_{i}=\arg z_{i}$, $a_{\alpha i}=e^{b_{\alpha i}+\mathrm{i}\arg a_{\alpha i}}$
and rows of $a_{\alpha i}$ are given by integers $m_{i}$ ($a_{\alpha i}=m_{i}$
for given monomial indexed by $\alpha$). Projection of the $2n-2$
dimensional real hypersurface given by (\ref{eq: polynomial log coordinates})
onto the space $\mathbb{R}^{n}$ with coordinates $(x_{1},\dots,x_{n})$
is the amoeba $\mathcal{A}$ of this hypersurface \cite{Mikhalkin2000,Passare2000,Theobald2002,Mikhalkin2004,Tsikh2012,NissePassare2012,Mikhalkin2013,Mikhalkin2015}. 

On the other hand applying the triangle inequality to (\ref{eq: polynomial log coordinates}),
one gets the set of inequalities 
\begin{equation}
-e^{f_{\alpha}(\boldsymbol{x})}+\sum_{\beta\neq\alpha}e^{f_{\beta}(\boldsymbol{x})}>0\label{eq: triangle inequalities polynomial}
\end{equation}
where ${\displaystyle f_{\alpha}(\boldsymbol{x})=b_{\alpha}+\sum_{i=1}^{n}a_{\alpha i}x_{i}}$.
The domain in $\mathbb{R}^{n}$ defined by $N$ inequalities (\ref{eq: triangle inequalities polynomial})
is called approximated amoeba (non-lopsided set) $\mathcal{LA}$ \cite{Purbhoo2008}.
In general $\mathcal{LA}$ does not coincide with the amoeba $\mathcal{A}$,
namely $\mathcal{LA}\supseteq\mathcal{A}$, but ``\textit{$\mathcal{LA}$
is a very good approximation for $\mathcal{A}$}'' \cite{Purbhoo2008}. 

Comparing the set of inequalities (\ref{eq: triangle inequalities polynomial})
with our definition of the domain $\mathcal{D}_{1+}$, we can conclude
that the statistical $1$-amoeba with integer-valued elements $a_{\alpha i}$
and ${\displaystyle f_{\alpha}(\boldsymbol{x})\equiv b_{\alpha}+\sum_{i=1}^{n}a_{\alpha i}x_{i}}$
in (\ref{eq: hypersurfaces equation 1-stratum}) coincide with $\mathcal{LA}$
amoeba for the hypersurface (\ref{eq: polynomial log coordinates}).
The triangle inequalities reasoning becomes rather involved for partitions
different from $\mathcal{I}_{1}=\{\alpha\}$ and $\mathcal{I}_{2}=[N]\backslash\{\alpha\}$,
$\alpha=1,\dots,N$. 
Anyway, equilibrium domains $\mathcal{D}_{k+}$ have a simple geometrical interpretation. First, one has the following well-known  
\begin{lem}
\label{lem: closed polygonal path} If $\boldsymbol{x}$ is in the
$1$-equilibrium region $\mathcal{D}_{1+}$, then one can construct
a polygonal closed path with $N$ sides of lengths $\left(e^{f_{\alpha}(\boldsymbol{x})}:\,\alpha\in[N]\right)$ in some order. 
\end{lem}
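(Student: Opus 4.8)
The plan is to reduce the statement to the classical polygon inequality and then realise the polygon by a connectedness argument on the space of directions. Writing $\ell_\alpha:=e^{f_\alpha(\boldsymbol{x})}>0$, membership $\boldsymbol{x}\in\mathcal{D}_{1+}$ means precisely that every inequality $\mathcal{Z}_1(\{\alpha\};\boldsymbol{x})>0$ holds, i.e. the inequalities (\ref{eq: triangle inequalities polynomial}) give $\ell_\alpha<\sum_{\beta\neq\alpha}\ell_\beta$ for each $\alpha\in[N]$. Since this is most restrictive for the longest side, it is equivalent to the single condition $\ell_{\max}<\sum_{\beta\neq\max}\ell_\beta$, where $\ell_{\max}:=\max_\alpha\ell_\alpha$. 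A closed polygonal path with consecutive sides of these lengths is exactly a family of planar vectors $\boldsymbol{v}_\alpha$ with $|\boldsymbol{v}_\alpha|=\ell_\alpha$ and $\sum_\alpha\boldsymbol{v}_\alpha=\boldsymbol{0}$, equivalently a choice of unit vectors $\boldsymbol{u}_\alpha\in S^1$ with $\sum_\alpha\ell_\alpha\boldsymbol{u}_\alpha=\boldsymbol{0}$. Because reordering the $\boldsymbol{v}_\alpha$ does not change their sum, the phrase ``in some order'' costs nothing, and I may sort the lengths freely.

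First I would introduce the resultant-length map $R(\theta_1,\dots,\theta_N):=\bigl|\sum_\alpha\ell_\alpha\boldsymbol{u}_\alpha(\theta_\alpha)\bigr|$, with $\boldsymbol{u}_\alpha(\theta_\alpha)=(\cos\theta_\alpha,\sin\theta_\alpha)$, defined on the torus $(S^1)^N$. This map is continuous, and $(S^1)^N$ is compact and connected, so its image is a compact interval $[m,M]\subset\mathbb{R}_{\geq0}$. The maximum is attained with all vectors aligned, giving $M=\sum_\alpha\ell_\alpha$. For the minimum I would use the reverse triangle inequality: isolating the longest vector yields $R\geq\ell_{\max}-\sum_{\beta\neq\max}\ell_\beta$, and trivially $R\geq0$; the opposed configuration $\boldsymbol{u}_{\max}=(1,0)$, $\boldsymbol{u}_\beta=(-1,0)$ shows this bound is sharp, so $m=\max\bigl(0,\,2\ell_{\max}-\sum_\alpha\ell_\alpha\bigr)$.

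The polygon is then produced by the strict inequality from the first step: it gives $2\ell_{\max}-\sum_\alpha\ell_\alpha<0$, hence $m=0$, so the value $0\in[m,M]$ is attained by $R$. The corresponding unit vectors $\boldsymbol{u}_\alpha$, scaled by $\ell_\alpha$, give planar vectors summing to zero; laying them tip to tail traces the required closed path with $N$ sides of lengths $\ell_\alpha=e^{f_\alpha(\boldsymbol{x})}$.

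The step I expect to be the real content is the attainment of the value $0$, i.e. surjectivity of $R$ onto the full interval $[m,M]$: the reverse triangle inequality only certifies the lower bound $m$, whereas closing the polygon requires that this infimum is genuinely achieved. The connectedness of $(S^1)^N$ together with the intermediate value theorem supplies this. If a more constructive argument is preferred, I would instead induct on $N$, with base case $N=3$ being the ordinary triangle inequality, and inductive step ($N\geq4$) merging the two shortest sides $\ell_{N-1},\ell_N$ into a single side whose length may be chosen anywhere in $[\ell_{N-1}-\ell_N,\ell_{N-1}+\ell_N]$; one checks that this merged length can be chosen so that the polygon inequality persists for the resulting $N-1$ lengths, applies the inductive hypothesis, and then ``opens up'' the merged side into its two constituents.
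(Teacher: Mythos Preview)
Your primary argument has a genuine gap at exactly the point you flag as ``the real content''. The opposed configuration $\boldsymbol{u}_{\max}=(1,0)$, $\boldsymbol{u}_\beta=(-1,0)$ gives $R=\bigl|\ell_{\max}-\sum_{\beta\neq\max}\ell_\beta\bigr|$, which is \emph{strictly positive} precisely when the polygon inequality holds. So this configuration does not witness $m=0$; it only shows $m\leq \sum_{\beta\neq\max}\ell_\beta-\ell_{\max}$. Connectedness of $(S^1)^N$ together with compactness tells you the image of $R$ is a closed interval $[m,M]$, but to conclude $0\in[m,M]$ you must already know that some configuration achieves $R=0$, since $R\geq 0$ everywhere. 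The intermediate value theorem cannot manufacture a zero of a nonnegative function without an input point where the value is $0$; your argument is circular here. (A correct version would note that the image of the \emph{vector} map $\Phi(\theta)=\sum_\alpha\ell_\alpha\boldsymbol{u}_\alpha$ is rotationally invariant, hence an annulus, and then compute its inner radius---but that computation is itself an induction on $N$.)

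Your fallback inductive sketch is essentially the paper's proof, with a cosmetic difference in the decomposition. The paper orders $f_1\geq\cdots\geq f_N$, peels off the longest and shortest sides $e^{f_1},e^{f_N}$ together with an auxiliary length $\ell$ chosen in a carefully specified interval, forms a triangle $(e^{f_1},\ell,e^{f_N})$ and an $(N-1)$-gon $(\ell,e^{f_2},\dots,e^{f_{N-1}})$, verifies both are non-lopsided, and glues along $\ell$. You instead propose merging the two shortest sides into one of adjustable length; this is the same idea in reverse. Both work, but in either version the verification that the merged/auxiliary length can be chosen to keep the polygon inequality strict needs care (e.g.\ for $N=4$ with all $\ell_\alpha$ equal, the endpoint choice $\ell'=\ell_{N-1}+\ell_N$ is degenerate), and the paper spells out the relevant interval explicitly.
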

\begin{proof}
The base case $N=3$ is equivalent to triangle
inequality. Then we assume that the assertion holds for all integers
$k$ such that $N-1\geq k\geq3$ and will proceed by
induction on $N$. One can fix $f_{1}(\boldsymbol{x})\geq f_{2}(\boldsymbol{x})\geq\dots\geq f_{N}(\boldsymbol{x})$
without loss of generality and choose a number $\ell$ such that 
\begin{equation}
{\displaystyle \max\left\{ e^{f_{2}(\boldsymbol{x})},e^{f_{1}(\boldsymbol{x})}-e^{f_{N}(\boldsymbol{x})}\right\} \leq\ell\leq\min\left\{ e^{f_{1}(\boldsymbol{x})},\sum_{\beta=2}^{N-1}e^{f_{\beta}(\boldsymbol{x})}\right\} }.\label{eq: choice auxiliary length}
\end{equation}
Note that this definition is well-posed: $e^{f_{1}(\boldsymbol{x})}-e^{f_{N}(\boldsymbol{x})}<e^{f_{1}(\boldsymbol{x})}$
since $f_{N}(\boldsymbol{x})$ is real, $e^{f_{1}(\boldsymbol{x})}-e^{f_{N}(\boldsymbol{x})}$ $\displaystyle<\sum_{\beta=2}^{N-1}e^{f_{\beta}(\boldsymbol{x})}$
since $\boldsymbol{x}$ belongs to the equilibrium region, $e^{f_{2}(\boldsymbol{x})}\leq e^{f_{1}(\boldsymbol{x})}$
by hypothesis and ${\displaystyle e^{f_{2}(\boldsymbol{x})}<\sum_{\beta=2}^{N-1}e^{f_{\beta}(\boldsymbol{x})}}$
since $N>3$. So there is at least one positive term. Let us consider
\begin{equation}
\Lambda_{1}:=\left(e^{f_{1}(\boldsymbol{x})},\ell,e^{f_{N}(\boldsymbol{x})}\right),\quad\Lambda_{2}:=\left(\ell,e^{f_{2}(\boldsymbol{x})},\dots,e^{f_{N-1}(\boldsymbol{x})}\right).\label{eq: lengths sides quasi-partition}
\end{equation}
One has ${\displaystyle e^{f_{1}(\boldsymbol{x})}=e^{f_{1}(\boldsymbol{x})}-e^{f_{N}(\boldsymbol{x})}+e^{f_{N}(\boldsymbol{x})}<\ell+e^{f_{N}(\boldsymbol{x})}}$
and $e^{f_{1}(\boldsymbol{x})}=\max\{e^{f_{1}(\boldsymbol{x})},e^{f_{N}(\boldsymbol{x})},\ell\}$
so $\Lambda_{1}$ is non-lopsided. One can see that $\ell=\max\Lambda_{2}$
and ${\displaystyle \ell<\sum_{\beta=2}^{N-1}e^{f_{\beta}(\boldsymbol{x})}}$
by construction. Then both $\Lambda_{1}$ and $\Lambda_{2}$ are non-lopsided
and their cardinalities are $3$ and $N-1$. Thus, by induction hypothesis
there exists two polygonal closed paths $\mathcal{C}_{1}$ and $\mathcal{C}_{2}$
with sides $\left(e^{f_{1}(\boldsymbol{x})},\ell,e^{f_{N}(\boldsymbol{x})}\right)$ 
and $\left(\ell,e^{f_{2}(\boldsymbol{x})},\dots,e^{f_{N-1}(\boldsymbol{x})}\right)$ in some order.
Finally, one can join $\mathcal{C}_{1}$ and $\mathcal{C}_{2}$ along
the side of length $\ell$ and get a closed polygonal path with $N$
sides of lengths $\left(e^{f_{\alpha}(\boldsymbol{x})}:\,\alpha\in[N]\right)$. 
\end{proof}

Stratification (\ref{eq: inclusion chain equilibrium domain}) can now be seen as a refinement of the triangle inequality property. 
\begin{prop}
\label{prop: constructibility and k-amoebas} The equilibrium domain
$\mathcal{D}_{k+}$ relative to the $k$-amoeba is the set of all
points $\boldsymbol{x}$ in $\mathbb{R}^{n}$ that satisfy the following condition: there exists a planar polygon with $g$ sides, for all $N-k+1\leq g\leq N$, and with
lengths of sides $\left(\ell_{1},\dots,\ell_{g}\right)$ equal to ${\displaystyle \left(\sum_{\alpha\in\mathcal{I}_{1}}e^{f_{\alpha}(\boldsymbol{x})},\dots,\sum_{\alpha\in\mathcal{I}_{g}}e^{f_{\alpha}(\boldsymbol{x})}\right)}$,
where $\left\{ \mathcal{I}_{1},\dots,\mathcal{I}_{g}\right\} $ is
any partition of $[N]$ in $g$ disjoint non-empty subsets. 
\end{prop}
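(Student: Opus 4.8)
The plan is to prove the stated set equality through a chain of equivalences, reducing the geometric condition on polygons to the algebraic inequalities that define $\mathcal{D}_{k+}$ and then collapsing the resulting family of inequalities by means of the inclusion chain (\ref{eq: inclusion chain equilibrium domain}). The bridge between geometry and algebra will be Lemma \ref{lem: closed polygonal path}, applied not to the individual lengths $e^{f_\alpha(\boldsymbol{x})}$ but to the part-sums $e^{f_{\mathcal{I}_i}(\boldsymbol{x})}$ of a given partition; since the proof of Lemma \ref{lem: closed polygonal path} uses only positivity of the side lengths, it applies verbatim to any finite collection of positive reals.

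First I would fix a partition $\{\mathcal{I}_1,\dots,\mathcal{I}_g\}$ of $[N]$ into $g$ non-empty parts and set $L_i:=e^{f_{\mathcal{I}_i}(\boldsymbol{x})}=\sum_{\alpha\in\mathcal{I}_i}e^{f_\alpha(\boldsymbol{x})}$, so that $\sum_{i=1}^g L_i=\mathcal{Z}_0(\boldsymbol{x})$. A genuine closed planar polygon with side lengths $(L_1,\dots,L_g)$ exists if and only if each side is strictly shorter than the sum of the others, i.e. $L_i<\sum_{j\neq i}L_j$ for every $i$: necessity is the elementary remark that the remaining $g-1$ sides form a broken path joining the endpoints of the $i$-th side and so cannot be shorter than it, while sufficiency is exactly Lemma \ref{lem: closed polygonal path} applied to $(L_1,\dots,L_g)$. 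Rewriting $\sum_{j\neq i}L_j=\mathcal{Z}_0(\boldsymbol{x})-L_i$, this non-lopsided condition reads $2e^{f_{\mathcal{I}_i}(\boldsymbol{x})}<\mathcal{Z}_0(\boldsymbol{x})$, which by (\ref{eq: subsets and elements partition functions relation-1}) is precisely $\mathcal{Z}_{\#\mathcal{I}_i}(\mathcal{I}_i;\boldsymbol{x})>0$. Hence the existence of the polygon attached to $\{\mathcal{I}_1,\dots,\mathcal{I}_g\}$ is equivalent to the simultaneous positivity $\mathcal{Z}_{\#\mathcal{I}_i}(\mathcal{I}_i;\boldsymbol{x})>0$ for all its parts. (For $N\geq 3$ one has $g\geq N-k+1\geq\lceil N/2\rceil+1\geq 3$ throughout the stated range, so no degenerate ``bigon'' case arises and Lemma \ref{lem: closed polygonal path} applies.)

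The heart of the argument is the combinatorial identification of which subsets occur as parts. Letting $g$ range over $N-k+1\leq g\leq N$ and the partition vary, I claim the parts $\mathcal{I}_i$ realise exactly the subsets $\mathcal{I}$ with $1\leq\#\mathcal{I}\leq k$. Indeed, if $\#\mathcal{I}=m\leq k$ then $\{\mathcal{I}\}\cup\{\{\beta\}:\beta\notin\mathcal{I}\}$ is a partition into $g=N-m+1$ parts with $N-k+1\leq g\leq N$, so $\mathcal{I}$ occurs; conversely any partition having $\mathcal{I}$ with $\#\mathcal{I}=m$ as a part has at most $1+(N-m)$ parts, and $g\geq N-k+1$ forces $N-m+1\geq N-k+1$, i.e. $m\leq k$. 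Equivalently, $g\geq N-k+1$ makes the other $g-1\geq N-k$ parts non-empty, so the largest part has at most $N-(g-1)\leq k$ elements. Consequently the condition ``for every admissible $g$ and every $g$-partition a polygon exists'' is equivalent to $\mathcal{Z}_{\#\mathcal{I}}(\mathcal{I};\boldsymbol{x})>0$ for every $\mathcal{I}$ with $1\leq\#\mathcal{I}\leq k$, that is to $\boldsymbol{x}\in\bigcap_{j=1}^{k}\mathcal{D}_{j+}$; by the inclusion chain (\ref{eq: inclusion chain equilibrium domain}) this intersection equals $\mathcal{D}_{k+}$, establishing both inclusions at once.

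I expect the main obstacle to be this last combinatorial step, namely the pigeonhole bound $\#\mathcal{I}_i\leq k$ forced by $g\geq N-k+1$, since it is what pins the geometric range of $g$ to the algebraic level $k$ of the stratum; the two analytic ingredients (the polygon inequality and the collapse via the inclusion chain) are then routine. A minor point to state carefully is the strict-versus-degenerate convention: I read ``polygon'' as a genuine (non-degenerate) closed polygon, so that the side inequalities are strict and match the open domain $\mathcal{D}_{k+}$ defined by $\mathcal{Z}_k(\mathcal{I})>0$; allowing degenerate polygons would instead reproduce the closure $\overline{\mathcal{D}_{k+}}$.
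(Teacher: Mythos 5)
Your proof is correct and follows essentially the same route as the paper's: the pigeonhole bound $\#\mathcal{I}_u\le N-(g-1)\le k$ forced by $g\ge N-k+1$, Lemma \ref{lem: closed polygonal path} applied to the part-sums $e^{f_{\mathcal{I}_i}(\boldsymbol{x})}$ to convert non-lopsidedness into polygon existence, and the singleton-augmented partition $\left\{\{\beta_1\},\dots,\{\beta_{g-1}\},\mathcal{I}\right\}$ for the converse. You are merely somewhat more explicit than the paper in two places it leaves implicit --- the elementary necessity direction of the polygon criterion and the use of the inclusion chain (\ref{eq: inclusion chain equilibrium domain}) to collapse $\bigcap_{j\le k}\mathcal{D}_{j+}$ to $\mathcal{D}_{k+}$ --- which is a welcome tightening rather than a different argument.
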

\begin{proof}
Let us suppose that $\boldsymbol{x}\in\mathcal{D}_{k+}$ and consider
any $g$-partition $\left\{ \mathcal{I}_{1},\dots,\mathcal{I}_{g}\right\} $
of $[N]$. Since all subsets in $\mathcal{I}_{1},\dots,\mathcal{I}_{g}$
are not empty, then $\#\mathcal{I}_{u}\geq1$ for all $1\leq u\leq g$.
Thus 
\begin{equation}
\#\mathcal{I}_{u}=N-\sum_{w\neq u}\#\mathcal{I}_{w}\leq N-(g-1)\leq N-(N-k+1-1)=k,\quad1\leq u\leq g.\label{eq: inequalities k-constructibility}
\end{equation}
Assuming that $\boldsymbol{x}\in\mathcal{D}_{k+}$ one gets 
\[
\sum_{\alpha\in\mathcal{I}_{u}}e^{f_{\alpha}(\boldsymbol{x})}<\sum_{\beta\notin\mathcal{I}_{u}}e^{f_{\beta}(\boldsymbol{x})}=\sum_{w\neq u}\sum_{\beta\in\mathcal{I}_{w}}e^{f_{\beta}(\boldsymbol{x})},\quad1\leq u\leq g.
\]
By lemma \ref{lem: closed polygonal path}, this is equivalent to
the existence of a closed planar polygon with $g$ sides whose
lengths are (in some order) ${\displaystyle \sum_{\alpha\in\mathcal{I}_{1}}e^{f_{\alpha}(\boldsymbol{x})},\dots,\sum_{\alpha\in\mathcal{I}_{g}}e^{f_{\alpha}(\boldsymbol{x})}}$. 

Now assume that the existence hypothesis holds. In particular,
it holds at $g=N-k+1$ For any $\mathcal{I}\in\mathcal{P}_{k}[N]$
one can consider the $g$-partition $\left\{ \{\beta_{1}\},\dots,\{\beta_{g-1}\},\mathcal{I}\right\} $
where $[N]\backslash\mathcal{I}=:\{\beta_{1},\dots,\beta_{g-1}\}$.
The hypothesis in such a case implies that ${\displaystyle \sum_{\alpha\in\mathcal{I}}e^{f_{\alpha}(\boldsymbol{x})}<\sum_{u=1}^{g-1}e^{f_{\beta_{u}}(\boldsymbol{x})}=\sum_{\beta\notin\mathcal{I}}e^{f_{\beta}(\boldsymbol{x})}}$.
This means that $\mathcal{Z}_{k}(\mathcal{I};\boldsymbol{x})>0$ for
all $\mathcal{I}\in\mathcal{P}_{k}[N]$, that is $\boldsymbol{x}\in\mathcal{D}_{k+}$. 
\end{proof}

It seems that these higher amoebas, i.e. those
defined by sets of inequalities of the type (\ref{eq: triangle inequalities polynomial})
with more than one minus sign were not discussed before in this context.

Some important features of algebraic amoebas are not preserved in the general $k$-amoeba case. For example, the extremal boundary of standard $1$-amoebas is the set of points $\boldsymbol{x}\in\mathbb{R}^{n}$ such that $\boldsymbol{x}\in\mathcal{Z}_{\mathrm{sing},1}(\{\alpha\})$ for certain $\alpha\in[N]$. All non-extremal points are partitioned in two sets defined by the sign of $\mathcal{Z}_{1}(\{\alpha\})$. As already noted, such partitions at different values of $\alpha$ are compatible, in the sense that if $\mathcal{Z}_{1}(\{\alpha\})<0$ then one knows that $\mathcal{Z}_{1}(\{\beta\})>0$ for all $\beta\neq\alpha$. Conversely, crossing points in $\mathcal{Z}_{\mathrm{sing},k}(\mathcal{I}_{1})\cap\mathcal{Z}_{\mathrm{sing},k}(\mathcal{I}_{2})$ between distinct components of the $k$-singular locus at $k\geq2$ open the way for more sign combinations, see figure \ref{fig: comments higher strata, a}. 

Furthermore, each connected component of the boundary of a standard algebraic amoeba bounds a certain convex (finite or infinite) region of the space (see e.g. \cite{GKZ2008}). This property does not hold in general for $k$-singular loci $\mathcal{Z}_{\mathrm{sing},k}(\mathcal{I})$, $\mathcal{I}\in\mathcal{P}_{k}[N]$. Such a case is pointed out in figure \ref{fig: comments higher strata, b}. 

Anyway, a generalization of these properties to $k$-amoebas can be done taking in account all $k$-singular loci $\mathcal{Z}_{\mathrm{sing},k}(\mathcal{I})$ simultaneously. Thus, in our real-valued approach higher statistical amoebas arise in a natural way. 

We note also that the relations between $1$-statistical amoebas and statistical physics have been discussed in \cite{Kenyon2006} and \cite{Tsikh2009,Passare2012}.

\section{Structure of ZCD domains \label{sec: Structure of ZCDk domains}}

Zeros confinement domain $ZCD_{k}$ for $k$-stratum separating the
domains $\mathcal{D}_{k+}$ and $\mathcal{D}_{k-}$ has rather complicated
structure in general. Here we will consider some of their simplest
properties. 

For the first stratum $k=1$ the $ZCD_{1}$ collapses into the set
of hypersurfaces of zeros $\mathcal{Z}_{\mathrm{sing},1}(\{\alpha\})$.
Let $k\geq2$. Each zero hypersurface $\mathcal{Z}_{\mathrm{sing},k}(\mathcal{I}_{k})$
for given subset $\mathcal{I}_{k}$ divides the $ZCD_{k}$ in two
subdomains where the function $\mathcal{Z}_{k}(\mathcal{I}_{k};\boldsymbol{x})$
have definite, positive or negative, sign. The set of all $\binom{N}{k}$
hypersurfaces $\mathcal{Z}_{\mathrm{sing},k}(\mathcal{I}_{1})$
with all possible partitions of $[N]$ $\mathcal{I}_{1}\cup\mathcal{I}_{2}=[N]$
and cardinality $\#\mathcal{I}_{1}=k$ divides the $ZCD_{k}$ into
a finite, say $M$, number of subdomains $ZCD_{k;\delta}$ and inside
each of them each of functions $\mathcal{Z}_{k}(\mathcal{I}_{k};\boldsymbol{x})$
has definite sign. 

So, one can associate with each such subdomain $ZCD_{k;\delta}$ a
set of $\binom{N}{k}$ numbers $1$ and $-1$ coinciding with values
of sign function defined as 
\begin{equation}
s(\mathcal{I}_{1};\boldsymbol{x}):=\mbox{sign}\left[-\sum_{\alpha\in\mathcal{I}_{1}}e^{f_{\alpha}(\boldsymbol{x})}+\sum_{\beta\in[N]\backslash\mathcal{I}_{1}}e^{f_{\beta}(\boldsymbol{x})}\right]\in\{-1,0,+1\}\label{eq: sign function for subsets}
\end{equation}
evaluated for each partition $\mathcal{I}_{1}\cup\mathcal{I}_{2}=[N]$
with $\#\mathcal{I}_{1}=k$ and $\boldsymbol{x}\in\mathbb{R}^n$. If one chooses an order for the subdomains $\{ZCD_{k;\delta}\} \rightarrow [M]$ and for elements of $\mathcal{P}_{k}[N]$,
e.g. lexicographical order, then one has the set of mappings 
\begin{equation}
\left(\boldsymbol{S}_{k}(\boldsymbol{x})\right)_{\tau}:= s(\mathcal{I}_{\tau};\boldsymbol{x}),\quad\tau=1,\dots,\binom{N}{k}
\label{eq: sign vector at domain}
\end{equation}
and
\begin{equation}
\boldsymbol{S}_{k;\delta}:= \boldsymbol{S}_{k}(\boldsymbol{x})
\label{eq: sign vector at domain, 2}
\end{equation}
which assigns to a subdomain $ZCD_{k;\delta}$ a vector of $\binom{N}{k}$
components, whose $\tau$-th component is the sign of $\mathcal{Z}_{k}(\mathcal{I}_{\tau};\boldsymbol{x})$
evaluated at an interior point $\boldsymbol{x}\in ZCD_{k;\delta}$. With a slight abuse of notation, we will also use $\delta\in[M]$ to denote the corresponding $ZCD_{k;\delta}$ with the chosen order.
For example, $\boldsymbol{S}_{2;\delta}=\left(-1,1,1,1,1,-1,-1,\right.$
$\left.-1,-1,1,1,1,1,1,1\right)$ where $\{f_{\alpha}\}$ are as in
figure \ref{fig: intermediate number curves 1-stratum} and $\delta$
is the subdomain containing the point $(x,y)\equiv(2,-2)$. In the
domain $\mathcal{D}_{k+}$ one has $\boldsymbol{S}_{k;\mathcal{D}_{k+}}=(1,1,1,\dots,1)$.
Number of signs $-1$ in $\boldsymbol{S}_{k;\delta}$ varies in $ZCD_{k}\cup\mathcal{D}_{k-}$. One has 
\begin{prop}
\label{prop: bound - signs} The maximum number of $-1$ in $\boldsymbol{S}_{k;\delta}$
at fixed ${\displaystyle k<\frac{N}{2}}$ and varying $\delta\in[M]$ is equal to $\binom{N-1}{k-1}$.
If $2k=N$ then the number of $-1$ signs in $\boldsymbol{S}_{\frac{N}{2}}$
is identically equal to $\binom{2k-1}{k}$ on $\mathbb{R}^{n}\backslash\mathcal{Z}_{\mathrm{sing},\frac{N}{2}}$. 
\end{prop}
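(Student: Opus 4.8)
The plan is to translate the sign condition into a combinatorial statement about ``heavy'' subsets and then invoke the Erd\H{o}s--Ko--Rado theorem. Introduce the normalized weights $p_{\alpha}(\boldsymbol{x}):=e^{f_{\alpha}(\boldsymbol{x})}/\mathcal{Z}_{0}(\boldsymbol{x})$, so that $p_{\alpha}>0$ and $\sum_{\alpha=1}^{N}p_{\alpha}=1$. By the definition (\ref{eq: sign function for subsets}) together with (\ref{eq: subsets and elements partition functions relation-1}), for $\boldsymbol{x}\notin\mathcal{Z}_{\mathrm{sing},k}$ one has $s(\mathcal{I};\boldsymbol{x})=-1$ precisely when $\sum_{\alpha\in\mathcal{I}}p_{\alpha}(\boldsymbol{x})>\tfrac12$; call such an $\mathcal{I}\in\mathcal{P}_{k}[N]$ \emph{heavy} at $\boldsymbol{x}$. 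Since the number of $-1$ entries of $\boldsymbol{S}_{k;\delta}$ is exactly the number of heavy $k$-subsets at any interior point of $ZCD_{k;\delta}$, the proposition amounts to maximizing this count over $\boldsymbol{x}\in\mathbb{R}^{n}\setminus\mathcal{Z}_{\mathrm{sing},k}$.

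The key structural fact is that the family of heavy $k$-subsets at a fixed $\boldsymbol{x}$ is \emph{intersecting}: if $\mathcal{I}$ and $\mathcal{J}$ were disjoint and both heavy, then $\sum_{\alpha\in\mathcal{I}\cup\mathcal{J}}p_{\alpha}=\sum_{\alpha\in\mathcal{I}}p_{\alpha}+\sum_{\alpha\in\mathcal{J}}p_{\alpha}>1$, contradicting $\mathcal{I}\cup\mathcal{J}\subseteq[N]$ and $\sum_{\alpha=1}^{N}p_{\alpha}=1$. For $k<N/2$, i.e. $N>2k$, the Erd\H{o}s--Ko--Rado theorem then bounds any intersecting family of $k$-subsets of $[N]$ by $\binom{N-1}{k-1}$, which is the asserted upper bound. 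For the matching lower bound I would realize a full \emph{star} as the heavy family: choose an index $\alpha_{0}$ whose gradient $\nabla f_{\alpha_{0}}$ is a vertex of the convex hull of $\{\nabla f_{1},\dots,\nabla f_{N}\}$ (such a vertex always exists). Along a direction in which $f_{\alpha_{0}}$ dominates every other $f_{\beta}$---this is exactly the content of the dominance Lemma \ref{lem: ray lemma}---one has $p_{\alpha_{0}}(\boldsymbol{x})\to1$, hence $p_{\alpha_{0}}(\boldsymbol{x})>\tfrac12$ for suitable $\boldsymbol{x}$. At such a point every $k$-subset containing $\alpha_{0}$ is heavy, while every $k$-subset avoiding $\alpha_{0}$ has total weight at most $1-p_{\alpha_{0}}<\tfrac12$ and is not heavy; the heavy family is therefore precisely the star of $\alpha_{0}$, of size $\binom{N-1}{k-1}$, attaining the bound.

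For the case $2k=N$ I would argue by complementary pairing rather than by extremal set theory. Since $\mathcal{Z}_{N/2}(\mathcal{I};\boldsymbol{x})=-\mathcal{Z}_{N/2}([N]\setminus\mathcal{I};\boldsymbol{x})$, for $\boldsymbol{x}\notin\mathcal{Z}_{\mathrm{sing},N/2}$ neither value vanishes and exactly one of the two complementary $k$-subsets $\mathcal{I}$ and $[N]\setminus\mathcal{I}$ is heavy. The $\binom{2k}{k}$ sets in $\mathcal{P}_{k}[N]$ split into $\tfrac12\binom{2k}{k}$ complementary pairs, each contributing exactly one $-1$, so the number of $-1$ signs is constantly $\tfrac12\binom{2k}{k}$ off the singular locus. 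The elementary identity $\tfrac12\binom{2k}{k}=\binom{2k-1}{k}$ rewrites this in the stated form.

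The main obstacle is the achievability step, not the upper bound: one must be certain that a genuine point $\boldsymbol{x}$ realizes the star, which requires a single exponential to outweigh the \emph{sum} of all the others, and this is where the geometry of the gradients (existence of a vertex of their convex hull) and the dominance lemma are essential. Some additional care is needed when several $f_{\alpha}$ share a gradient, since then parallel competitors cannot be separated by direction and the dominating index must be chosen among those not dominated by a parallel term. The upper-bound half is then a direct application of Erd\H{o}s--Ko--Rado, valid exactly in the regime $N>2k$, with the boundary case $2k=N$ handled separately by the exact pairing above.
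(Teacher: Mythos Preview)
Your proposal is correct and follows essentially the same route as the paper: both arguments show that the family of ``negative'' $k$-subsets is intersecting and then invoke Erd\H{o}s--Ko--Rado for the upper bound, establish achievability by producing a point where a single $f_{\alpha_{0}}$ dominates the sum of all the others (the paper cites the non-emptiness of some $D_{1-}(\alpha)$ via the ray argument, while you phrase it via a vertex of the convex hull of gradients together with Lemma~\ref{lem: ray lemma}), and handle $2k=N$ by the complementary pairing $\mathcal{Z}_{N/2}(\mathcal{I})=-\mathcal{Z}_{N/2}([N]\setminus\mathcal{I})$. Your normalization to probability weights $p_{\alpha}$ is a cosmetic but pleasant simplification of the intersecting-family step.
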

\begin{proof}
At fixed ${\displaystyle k<\frac{N}{2}}$ and for any subdomain $ZCD_{k;\delta}$
one can consider the family 
\begin{equation}
\mathcal{F}_{k;\delta-}:=\left\{ \mathcal{I}\in\mathcal{P}_{k}[N]:\,s_{k;\delta}(\mathcal{I})=-1\right\} .\label{eq: family negative}
\end{equation}
The intersection between two elements of $\mathcal{F}_{k;\delta-}$
is non-empty. Indeed, let us assume that $\mathcal{I}\in\mathcal{F}_{k;\delta-}$
and $\mathcal{I}\cap\mathcal{J}=\emptyset$, with $\#\mathcal{J}=k$.
In particular, one has $\mathcal{J}\subseteq[N]\backslash\mathcal{I}$
and $\mathcal{I}\subseteq[N]\backslash\mathcal{J}$. This implies
that 
\begin{equation}
{\displaystyle \sum_{\alpha\in[N]\backslash\mathcal{J}}e^{f_{\alpha}(\boldsymbol{x})}>\sum_{\alpha\in\mathcal{I}}e^{f_{\alpha}(\boldsymbol{x})}>\sum_{\beta\in[N]\backslash\mathcal{I}}e^{f_{\beta}(\boldsymbol{x})}\geq\sum_{\beta\in\mathcal{J}}e^{f_{\beta}(\boldsymbol{x})}}.\label{eq: chain for intersecting family -1}
\end{equation}
Thus, $\mathcal{Z}_{k}(\mathcal{J})>0$ and $\mathcal{J}\notin \mathcal{F}_{k;\delta-}$.
Then, the family $\mathcal{F}_{k;\delta-}$ of all $k$-subsets, ${\displaystyle k<\frac{N}{2}}$,
corresponding to a $-1$ sign is an intersecting family, that is a
family of subsets with same cardinality $k$ and pairwise non-empty
intersections. Hence, Erd\H{o}s\textendash Ko\textendash Rado theorem
for intersecting family (see e.g. \cite{EKR1961}) holds and so $\mathcal{F}_{k;\delta-}$
has at most $\binom{N-1}{k-1}$ elements. Moreover, this maximum is reached
exactly if all elements of the family contain a certain $\alpha_{0}\in[N]$.
This maximum is indeed attained. Let us consider
$1$-domains $D_{1-}(\alpha)$, $\alpha\in[N]$, defined as 
\begin{equation}
D_{1-}(\alpha):=\left\{ \boldsymbol{x}\in\mathbb{R}^{n}:\,\mathcal{Z}_{1}(\{\alpha\};\boldsymbol{x})=-e^{f_{\alpha}(\boldsymbol{x})}+\sum_{\beta\neq\alpha}e^{f_{\beta}(\boldsymbol{x})}<0\right\} ,\quad\alpha=1,\dots,N.\label{eq: dominant 1-cells}
\end{equation}
The linearity of functions $f_{\alpha}$ assures that not all
$D_{1-}(\alpha)$ are empty since the assumptions of proposition \ref{prop: inclusion chain equilibrium + ZCD}  are satisfied and $\vec{r}(t_{1})\in D_{1-}(\alpha_{0})$. If $\boldsymbol{x}\in D_{1-}(\alpha)$
then one has $\displaystyle |\mathcal{Z}_{k}(\mathcal{I};\boldsymbol{x})|=$\\ $\displaystyle \left|-\sum_{\alpha\in\mathcal{I}}e^{f_{\alpha}(\boldsymbol{x})}+\sum_{\beta\in[N]\backslash\mathcal{I}}e^{f_{\beta}(\boldsymbol{x})}\right|>e^{f_{\alpha}(\boldsymbol{x})}-\sum_{\beta\neq\alpha}e^{f_{\beta}(\boldsymbol{x})}>0$
for all $\mathcal{I}\in\mathcal{P}_{k}[N]$, so $D_{1-}(\alpha)\notin\mathcal{Z}_{\mathrm{sing},k}$
and all components of $\boldsymbol{S}_{k}(\boldsymbol{x})$ are not
vanishing. Let us denote a subdomain $ZCD_{k;\delta}$ such
that $D_{1-}(\alpha)\cap\delta(\alpha)\neq\emptyset$ as $\delta(\alpha)$. 
Then, $\boldsymbol{S}_{k;\delta(\alpha)}$ coincides with $\boldsymbol{S}_{k}(\boldsymbol{x})$,
$\boldsymbol{x}\in\mathcal{D}_{1-}\cap\delta(\alpha)$. The definition
of $D_{1-}(\alpha)$ implies that ${\displaystyle \mathcal{Z}_{k}(\mathcal{I};\boldsymbol{x})<0}$
if and only if $\alpha\in\mathcal{I}$ and the number of $-1$ signs
in $\boldsymbol{S}_{k;\delta(\alpha)}$ is equal to the number of $k$-subsets $\mathcal{I}\subset[N]$
containing $\alpha$, that is the maximum $\binom{N-1}{k-1}$. 

At $2k=N$ and for any point $\boldsymbol{x}\in\mathbb{R}^{n}\backslash\mathcal{Z}_{\mathrm{sing},\frac{N}{2}}$
one has $\mathcal{Z}_{\frac{N}{2}}(\mathcal{I})<0$ if and only if $\mathcal{Z}_{\frac{N}{2}}([N]\backslash\mathcal{I})>0$.
Since both $\mathcal{I}$ and $[N]\backslash\mathcal{I}$ have cardinality
${\displaystyle \frac{N}{2}}$, there is the same number of negative
and positive terms in $\boldsymbol{S}_{\frac{N}{2}}(\boldsymbol{x})$ for all $\boldsymbol{x}\in\mathbb{R}^n\backslash\mathcal{Z}_{\mathrm{sing},\frac{N}{2}}$,
that is ${\displaystyle \binom{2k-1}{k}=\binom{2k-1}{k-1}=\frac{1}{2}\binom{2k}{k}}$. 
\end{proof}
Thus, the ambient domain $\mathcal{D}_{k-}$ for each statistical
$k$-amoeba is the domain of maximal instability. Previous
proposition implies following 
\begin{cor}
\label{prop: inclusion chain instability} One has 
\begin{equation}
\mathcal{D}_{k-}\subseteq\mathcal{D}_{\hat{k}-}\label{eq: monotony instability domain}
\end{equation}
for all ${\displaystyle 1\leq k<\hat{k}<\frac{N}{2}}$. \end{cor}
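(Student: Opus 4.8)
The plan is to characterize $\mathcal{D}_{k-}$ through the combinatorial structure of its negative family and then transport that structure upward to level $\hat{k}$ using the monotonicity identity (\ref{eq: identity monotony signed partitions}). The starting observation is exactly the content of Proposition \ref{prop: bound - signs}: for a point $\boldsymbol{x}\in\mathcal{D}_{k-}$ the number of negative functions $\mathcal{Z}_{k}(\mathcal{I};\boldsymbol{x})$ attains its maximum $\binom{N-1}{k-1}$, and the associated family $\mathcal{F}_{k;\delta-}$ of negative $k$-subsets is intersecting. Because $k<N/2$, the equality (uniqueness) case of the Erd\H{o}s--Ko--Rado theorem \cite{EKR1961}, already invoked in Proposition \ref{prop: bound - signs}, forces $\mathcal{F}_{k;\delta-}$ to be a \emph{star}: there is a single index $\alpha_{0}\in[N]$ such that $\mathcal{Z}_{k}(\mathcal{I};\boldsymbol{x})<0$ precisely for those $k$-subsets $\mathcal{I}$ containing $\alpha_{0}$. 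The point I would stress is that one needs not merely the cardinality bound but its uniqueness part, since this is what pins down a common dominant index $\alpha_{0}$ and guarantees that \emph{every} $k$-subset through $\alpha_{0}$ is strictly negative.

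Next I would raise the level. Fix any $\hat{k}$-subset $\mathcal{J}$ with $\alpha_{0}\in\mathcal{J}$ and $\#\mathcal{J}=\hat{k}$. Since $1\leq k<\hat{k}$ and $\alpha_{0}\in\mathcal{J}$, I can choose a $k$-subset $\mathcal{I}\subset\mathcal{J}$ with $\alpha_{0}\in\mathcal{I}$. Applying the identity (\ref{eq: identity monotony signed partitions}) to the nested pair $\mathcal{I}\subset\mathcal{J}$ gives $\mathcal{Z}_{\hat{k}}(\mathcal{J};\boldsymbol{x})=\mathcal{Z}_{k}(\mathcal{I};\boldsymbol{x})-2\sum_{\alpha\in\mathcal{J}\backslash\mathcal{I}}e^{f_{\alpha}(\boldsymbol{x})}<\mathcal{Z}_{k}(\mathcal{I};\boldsymbol{x})<0$, the last inequality holding because $\mathcal{I}$ is a star member at level $k$. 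Hence every $\hat{k}$-subset containing $\alpha_{0}$ is strictly negative at $\boldsymbol{x}$, so the negative $\hat{k}$-family contains the entire star centered at $\alpha_{0}$, a family of $\binom{N-1}{\hat{k}-1}$ subsets.

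Finally I would close the count. By the same intersecting-family argument as in Proposition \ref{prop: bound - signs}, valid because $\hat{k}<N/2$, the negative $\hat{k}$-family is itself intersecting, so its cardinality is at most $\binom{N-1}{\hat{k}-1}$; combined with the previous step it equals $\binom{N-1}{\hat{k}-1}$, the maximal number of $-1$ signs at level $\hat{k}$. Therefore $\boldsymbol{x}\in\mathcal{D}_{\hat{k}-}$, giving $\mathcal{D}_{k-}\subseteq\mathcal{D}_{\hat{k}-}$. The one delicate point to address is that a point of the open region $\mathcal{D}_{k-}$ might a priori lie on some level-$\hat{k}$ hypersurface, where a sign is $0$ rather than $-1$; this causes no trouble because the star members through $\alpha_{0}$ satisfy the strict inequality $\mathcal{Z}_{\hat{k}}(\mathcal{J};\boldsymbol{x})<0$, an open condition, so the maximal count of $-1$'s is genuinely realized. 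The main obstacle is conceptual rather than computational: recognizing that the \emph{uniqueness} part of Erd\H{o}s--Ko--Rado is what converts the statement ``maximal number of negatives'' into the concrete combinatorial fact ``a common index $\alpha_{0}$ lies in every negative subset,'' which is exactly the input that the monotonicity identity requires.
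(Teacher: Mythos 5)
Your proposal is correct and follows essentially the same route as the paper: extract the star structure of the negative $k$-family at $\boldsymbol{x}\in\mathcal{D}_{k-}$ from the equality case of Erd\H{o}s--Ko--Rado (via Proposition \ref{prop: bound - signs}), push the negativity up to every $\hat{k}$-subset containing $\alpha_{0}$ with the monotonicity identity (\ref{eq: identity monotony signed partitions}), and conclude by the cardinality bound at level $\hat{k}$. Your explicit emphasis on the uniqueness part of Erd\H{o}s--Ko--Rado and on the strictness of the inequalities is a slightly more careful rendering of exactly the argument the paper gives.
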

\begin{proof}
Let us take $\boldsymbol{x}\in\mathcal{D}_{k-}$ at ${\displaystyle 1\leq k<\hat{k}<\frac{N}{2}}$.
From Erd\H{o}s\textendash Ko\textendash Rado theorem and proposition
\ref{prop: bound - signs} it follows that there exists $\alpha_{0}\in[N]$
such that $\mathcal{Z}_{k}(\mathcal{I};\boldsymbol{x})<0$ if and only if $\alpha_{0}\in\mathcal{I}$.
Then, let us consider $\mathcal{J}\subset[N]$ with $\#\mathcal{J}=\hat{k}$
and $\alpha_{0}\in\mathcal{J}$. One can choose a subset $\mathcal{I_{J}}\subset\mathcal{J}$
such that $\#\mathcal{I_{J}}=k$ and $\alpha_{0}\in\mathcal{I}$, thus
$\mathcal{Z}_{\hat{k}}(\mathcal{J};\boldsymbol{x})<\mathcal{Z}_{k}(\mathcal{I_{J}};\boldsymbol{x})<0$
where last inequality holds since $\alpha_{0}\in\mathcal{I_{J}}$. Then,
$\mathcal{Z}_{\hat{k}}(\mathcal{J})<0$ if $\alpha_{0}\in\mathcal{J}$ and $\mathcal{Z}_{\hat{k}}(\mathcal{J})>0$
otherwise, since additional $-1$ signs would contradict the bound $\binom{N-1}{\hat{k}-1}$
in proposition \ref{prop: bound - signs}. This means that $\boldsymbol{x}\in\mathcal{D}_{\hat{k}-}$.
\end{proof}
Consequently one also has the chain 

\begin{equation}
\mathcal{D}_{1-}\subseteq\mathcal{D}_{2-}\subseteq\cdots\subseteq\mathcal{D}_{\left(\left\lceil \frac{N}{2}\right\rceil -1\right)-}.\label{eq: inclusion chain equilibrium domain + ZCD, 2}
\end{equation}
which is dual with respect to (\ref{eq: inclusion chain equilibrium domain}).
It is equivalent to 

\begin{equation}
\mathcal{D}_{0+}\cup ZCD_{1}\supseteq\mathcal{D}_{2+}\cup ZCD_{2}\supseteq\dots\supseteq\mathcal{D}_{\left(\left\lceil \frac{N}{2}\right\rceil -1\right)+}\cup ZCD_{\left(\left\lceil \frac{N}{2}\right\rceil -1\right)}.\label{eq: inclusion chain equilibrium domain + ZCD, 1}
\end{equation}
So, the domain of complete stability $D_{k+}$ (possible) equilibrium
shrinks in transition to higher strata while the domain of instability
expands. Note that there is no domain in $\mathbb{R}^{n}$ where all
$\mathcal{Z}_{k}(\mathcal{I})$ are negative if ${\displaystyle k\leq\left\lfloor\frac{N}{2}\right\rfloor}$. 

For each subdomain $ZCD_{k;\delta}$ and the corresponding set $\boldsymbol{S}_{k;\delta}$
one can introduce also its integral characteristic 
\begin{equation}
\bar{\boldsymbol{S}}_{k;\delta}=\frac{1}{\binom{N}{k}}\cdot\sum_{\tau=1}^{\binom{N}{k}}(\boldsymbol{S}_{k;\delta})_{\tau}.\label{eq: instability energy}
\end{equation}
This quantity take values in the interval ${\displaystyle \left[1-2\frac{k}{N};1\right]}$,
${\displaystyle k\leq\left\lfloor\frac{N}{2}\right\rfloor}$. The maximum is
reached in the stability domain $\mathcal{D}_{k+}$ while the minimum
${\displaystyle -1+2\frac{k}{N}}$ is archieved at the ambient domain
$\mathcal{D}_{k-}$ of maximal instability. An absolute minimum of
$\bar{\boldsymbol{S}}_{k;\delta}$ equal to $0$ is
reached in $\mathbb{R}^n$ when $N$ is even and $N=2k$.
An example of the function $\bar{\boldsymbol{S}}_{k;\delta}$ is presented
in the figure \ref{fig: height functions}. 

\begin{figure}[tph]
\centering{
\subfigure[$\bar{\boldsymbol{S}}_{1;\delta}=\frac{1}{6}\cdot\sum_{\tau=1}^{6}(\boldsymbol{S}_{1;\delta})_{\tau}$. ]
{\includegraphics[scale=0.29]{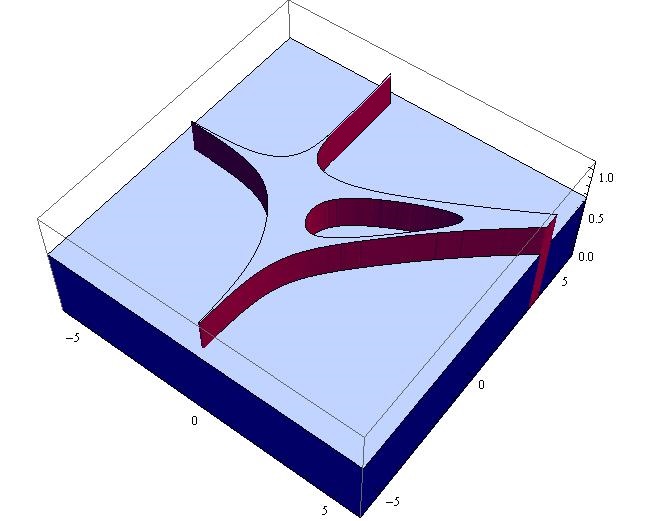}}
\hfill{}
\subfigure[$\bar{\boldsymbol{S}}_{2;\delta}=\frac{1}{15}\cdot\sum_{\tau=1}^{15}(\boldsymbol{S}_{2;\delta})_{\tau}$. ]
{\includegraphics[scale=0.27]{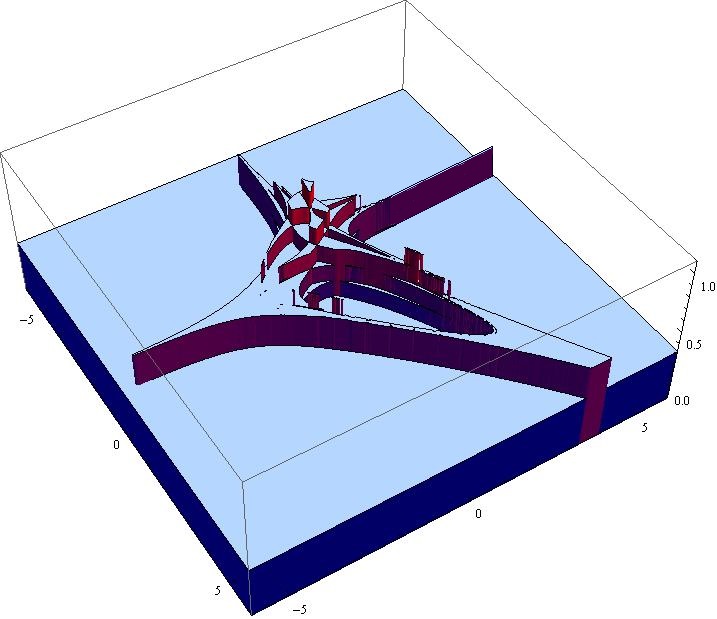}}
\caption{$\bar{\boldsymbol{S}}_{1;\delta}$ and $\bar{\boldsymbol{S}}_{2;\delta}$
in the case $f_{1}\equiv0$, $f_{2}\equiv3x$, $f_{3}\equiv3y$, $f_{4}\equiv x+y+\ln6$,
$f_{5}\equiv2x+y+\ln11$, $f_{6}\equiv x+3y+\ln4$. }
\label{fig: height functions}
}
\end{figure}

The formula (\ref{eq: instability energy}) suggests also a natural
interpretation of $\bar{\boldsymbol{S}}_{k;\delta}$. Indeed, let us view
values of sign function (\ref{eq: sign function for subsets}) as
two projections $+1$ and $-1$ of a ``spin'' associated with the subdomain $ZCD_{k;\delta}$ and certain functions $\mathcal{Z}_{k,\tau}(\boldsymbol{x})$.
So at the subdomain $ZCD_{k;\delta}$ one has a set of $\binom{N}{k}$ ``spins''
with different projections. Assuming that projections associated with
functions $\mathcal{Z}_{k,\tau}$ at different $\tau$ are realised with the
same probability ${\displaystyle w_{N,k}=\frac{1}{\binom{N}{k}}}$
then ${\displaystyle \bar{\boldsymbol{S}}_{k;\delta}}$ defined by
(\ref{eq: instability energy}) is just the mean value of spin at
the subdomain $ZCD_{k;\delta}$. 

Further, one can view the collection of $\boldsymbol{S}_{k;\delta}$
for all subdomain $ZCD_{k;\delta}$ and domains $\mathcal{D}_{k+}$,
$\mathcal{D}_{k-}$ as the set of states of the statistical system
of $\binom{N}{k}$ spins. Considering the interaction of spins with external
(magnetic) field $H$ as for the standard spin systems (see e.g. \cite{LL1980,Huang1963}),
one defines the energy 
\begin{equation}
E_{k;\delta}=-H\cdot\sum_{\tau=1}^{\binom{N}{k}}(\boldsymbol{S}_{k;\delta})_{\tau},\quad\delta=1,\dots,M.\label{eq: energy with magnetic parameter}
\end{equation}
Finally, for the partition function of the spin system one
has 
\begin{equation}
\mathcal{Z}_{k,\mathrm{spin}}=\sum_{\delta=1}^{M}\exp\left(-\beta H\cdot\sum_{\tau=1}^{\binom{N}{k}}(\boldsymbol{S}_{k;\delta})_{\tau}\right),\quad k=1,\dots,\left\lfloor\frac{N}{2}\right\rfloor \label{eq: canonical partition function spins}
\end{equation}
where $\beta$ is a parameter (say inverse of ``temperature'' $T$). 

Energy $E_{k}$ has minimum at the domain $\mathcal{D}_{k+}$ and
maximum in the domain $\mathcal{D}_{k-}$. Excited transition states
are associated with subdomains of $ZCD_{k}$. 

Introducing interaction between spins of the form ${\displaystyle E_{k;\delta,\mathrm{int}}=\gamma\cdot\sum_{\tau,\nu=1}^{\binom{N}{k}}(\boldsymbol{S}_{k;\delta})_{\tau}\cdot(\boldsymbol{S}_{k;\delta})_{\nu}}$,
one gets a partition function of the Ising type model.

\section{Tropical limit and tropical zeros \label{sec: Tropical limit and tropical zeros}}

The amoebas viewed at large distance are essentially the sets of thinning
tentacles which become certain piecewise linear objects in the tropical
limit for algebraic amoebas, see e.g. \cite{Mikhalkin2004,Maclagan2015}.
Such images of statistical $k$-amoebas are associated with the limiting
behaviour at large functions $f_{\alpha}(\boldsymbol{x})$ in the
partition function (\ref{eq: signed partition function}). In the
case of linear functions $f_{\alpha}(\boldsymbol{x})$ as in (\ref{eq: signed partition function, 2})
there are different ways to realise such a limit. The first one is
to consider large values of the variables $x_{i}$ introducing slow
variables $\tilde{x}_{i}:=\varepsilon\cdot x_{i}$, $i=1,\dots,n$
with $\varepsilon\rightarrow0$. For the $k$-th stratum the functions
$\mathcal{Z}_{k}(\mathcal{I}_{\tau};\boldsymbol{x})$ at $\varepsilon\rightarrow0$
are the superpositions of highly singular terms and the corresponding
hypersurfaces are defined as 
\begin{equation}
\sum_{\alpha=1}^{N}g_{\mathcal{I}_{\tau}(\alpha)}\exp\left(\frac{1}{\varepsilon}\cdot\sum_{i=1}^{n}a_{\alpha i}\tilde{x}_{i}\right)=0,\quad\tau=1,\dots,\binom{N}{k}\label{eq: slow variables k-stratum components}
\end{equation}
as $\varepsilon\rightarrow0$. For the first stratum ($k=1$) equations
(\ref{eq: slow variables k-stratum components}) are of the form 
\begin{equation}
\exp\left(\frac{1}{\varepsilon}\cdot\sum_{i=1}^{n}a_{\alpha i}\tilde{x}_{i}\right)=\sum_{\beta\neq\alpha}\exp\left(\frac{1}{\varepsilon}\cdot\sum_{i=1}^{n}a_{\beta i}\tilde{x}_{i}\right)\label{eq: slow variables 1-stratum}
\end{equation}
and in the limit $\varepsilon\rightarrow0$ one gets the set of hyperplanes
in $\mathbb{R}^{n}$ given by 
\begin{equation}
\sum_{i=1}^{n}a_{\alpha i}\tilde{x}_{i}=\max_{\beta\neq\alpha}\left\{ \sum_{i=1}^{n}a_{\beta i}\tilde{x}_{i}\right\} .\label{eq: slow variables 1-stratum, tropical limit}
\end{equation}
All these hyperplanes pass through the origin $\tilde{x}_{i}=0$,
$i=1,\dots,n$. They are the tropical limit of the ideal statistical
hypersurfaces considered in \cite{AK2016}. 

For higher strata and each partition $(\mathcal{I}_{1},\mathcal{I}_{2})$
the limit $\varepsilon\rightarrow0$ of equations (\ref{eq: locus signed partition}),
(\ref{eq: locus signed partition with assumption}) is given by the
set of hyperplanes 
\begin{equation}
\max_{\alpha\in\mathcal{I}_{1}}\left\{ \sum_{i=1}^{n}a_{\alpha i}\tilde{x}_{i}\right\} =\max_{\beta\in\mathcal{I}_{2}}\left\{ \sum_{i=1}^{n}a_{\beta i}\tilde{x}_{i}\right\} .\label{eq: slow variables higher strata, tropical limit}
\end{equation}

The second way to realise the limit $f_{\alpha}\rightarrow\infty$,
more close to the standard tropical limit in algebraic geometry \cite{Mikhalkin2004,Maclagan2015},
is to make the parameter $b_{\alpha}$ in $f_{\alpha}$ large too,
i.e. to consider the limit ${\displaystyle x_{i}=\frac{\tilde{x}_{i}}{\varepsilon}}$,
${\displaystyle b_{\alpha}=\frac{\tilde{b}_{\alpha}}{\varepsilon}}$,
with finite $\tilde{x}_{i}$, $\tilde{b}_{\alpha}$ and $\varepsilon\rightarrow0$.
In this case the tropical limit of the hypersurfaces (\ref{eq: locus signed partition}),
(\ref{eq: locus signed partition with assumption}) is given by 
\begin{equation}
\max_{\alpha\in\mathcal{I}_{1}}\left\{ \tilde{b}_{\alpha}+\sum_{i=1}^{n}a_{\alpha i}\tilde{x}_{i}\right\} =\max_{\beta\in\mathcal{I}_{2}}\left\{ \tilde{b}_{\beta}+\sum_{i=1}^{n}a_{\beta i}\tilde{x}_{i}\right\} .\label{eq: slow variables higher strata, tropical limit 2}
\end{equation}
Now the hyperplanes (\ref{eq: slow variables higher strata, tropical limit 2})
do not pass, in general, through the origin $\tilde{x}_{i}=0$, $i=1,\dots,n$. 

The third way is to keep variables $x_{i}$ finite, but to send to
infinity the parameters $a_{\alpha i}$ and $b_{\alpha}$ as ${\displaystyle a_{\alpha i}=\frac{\tilde{a}_{\alpha i}}{\varepsilon}}$,
${\displaystyle b_{\alpha}=\frac{\tilde{b}_{\alpha}}{\varepsilon}}$,
with $\varepsilon\rightarrow0$ and finite $\tilde{a}_{\alpha i}$,
$\tilde{b}_{\alpha}$. Such a limit of hypersurfaces (\ref{eq: locus signed partition}),
(\ref{eq: locus signed partition with assumption}) is given by the
set of hyperplanes defined by equations 
\begin{equation}
\max_{\alpha\in\mathcal{I}_{1}}\left\{ \tilde{b}_{\alpha}+\sum_{i=1}^{n}\tilde{a}_{\alpha i}x_{i}\right\} =\max_{\beta\in\mathcal{I}_{2}}\left\{ \tilde{b}_{\beta}+\sum_{i=1}^{n}\tilde{a}_{\beta i}x_{i}\right\} .\label{eq: slow variables higher strata, tropical limit 3}
\end{equation}
Equations (\ref{eq: slow variables higher strata, tropical limit 2})
and (\ref{eq: slow variables higher strata, tropical limit 3}) are
related via exchange $a_{\alpha i}\leftrightarrow\tilde{a}_{\alpha i}$,
$\tilde{x}_{i}\leftrightarrow x_{i}$ keeping in both cases the product
${\displaystyle a_{\alpha i}x_{i}\sim\frac{1}{\varepsilon}}$. 

For different strata the sets of equations (\ref{eq: slow variables higher strata, tropical limit 2})
or (\ref{eq: slow variables higher strata, tropical limit 3}), defining
the tropical limit of hypersurfaces (\ref{eq: locus signed partition}),
(\ref{eq: locus signed partition with assumption}) are quite different.
However, one has 
\begin{prop}
\label{prop: Tropical limit strata} In the tropical limits considered
above, zeros loci of $\mathcal{Z}_{k,\mathrm{trop}}(\mathcal{I})$
given by equations (\ref{eq: slow variables higher strata, tropical limit 2})
or (\ref{eq: slow variables higher strata, tropical limit 3}) are
the same for all strata. All domains $ZCD_{k}$ collapse into a single
set of piecewise hyperplanes given e.g. by equations (\ref{eq: slow variables higher strata, tropical limit 2})
or (\ref{eq: slow variables higher strata, tropical limit 3}) for
the first stratum $\mathcal{Z}_{\mathrm{sing},1}$. \end{prop}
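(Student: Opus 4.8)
The plan is to reduce every tropical zero condition to a single combinatorial statement about which tropical phase dominates. Write $\phi_{\alpha}(\boldsymbol{x}):=\tilde{b}_{\alpha}+\sum_{i=1}^{n}a_{\alpha i}\tilde{x}_{i}$ for the tropical phases appearing in (\ref{eq: slow variables higher strata, tropical limit 2}); the argument for (\ref{eq: slow variables higher strata, tropical limit 3}) is identical under the exchange noted below that equation. Set $M(\boldsymbol{x}):=\max_{\alpha\in[N]}\phi_{\alpha}(\boldsymbol{x})$ and let $A(\boldsymbol{x}):=\{\alpha\in[N]:\phi_{\alpha}(\boldsymbol{x})=M(\boldsymbol{x})\}$ be the set of maximizers. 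Since $\mathcal{I}_{1}\cup\mathcal{I}_{2}=[N]$, for any ordered $2$-partition one has $\max(\max_{\alpha\in\mathcal{I}_{1}}\phi_{\alpha},\max_{\beta\in\mathcal{I}_{2}}\phi_{\beta})=M$. The first step is therefore to observe that the tropical equation $\max_{\alpha\in\mathcal{I}_{1}}\phi_{\alpha}=\max_{\beta\in\mathcal{I}_{2}}\phi_{\beta}$ holds at $\boldsymbol{x}$ if and only if the common value is $M(\boldsymbol{x})$ and is attained in both blocks, i.e. $A(\boldsymbol{x})\cap\mathcal{I}_{1}\neq\emptyset$ and $A(\boldsymbol{x})\cap\mathcal{I}_{2}\neq\emptyset$.

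Next I would pass to the whole stratum by taking the union over all $\mathcal{I}\in\mathcal{P}_{k}[N]$. By the previous observation, $\boldsymbol{x}$ lies on the tropical $k$-stratum precisely when some $k$-subset splits $A(\boldsymbol{x})$ across both blocks, and such a splitting exists if and only if $\#A(\boldsymbol{x})\geq2$. The nontrivial direction is the combinatorial one: given $\#A(\boldsymbol{x})\geq2$ and any $k$ with $1\leq k\leq\lfloor N/2\rfloor$, place one maximizer in $\mathcal{I}_{1}$ and a second in $\mathcal{I}_{2}$, then distribute the remaining $N-2$ indices to fill $\mathcal{I}_{1}$ to cardinality $k$ and $\mathcal{I}_{2}$ to cardinality $N-k$; this is feasible since $k-1\geq0$ and $N-k-1\geq0$ throughout the stated range. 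Consequently, for every $k$ in range,
\[
\bigcup_{\mathcal{I}\in\mathcal{P}_{k}[N]}\{\boldsymbol{x}:\mathcal{Z}_{k,\mathrm{trop}}(\mathcal{I};\boldsymbol{x})=0\}=\{\boldsymbol{x}:\#A(\boldsymbol{x})\geq2\},
\]
and the right-hand side is manifestly independent of $k$. In particular it coincides with the $k=1$ locus, where (\ref{eq: slow variables higher strata, tropical limit 2}) reads $\phi_{\alpha}=\max_{\beta\neq\alpha}\phi_{\beta}$ and its union over $\alpha$ is again $\{\#A\geq2\}$; this is the corner locus of the tropical polynomial $\max_{\alpha}\phi_{\alpha}$, i.e. the set of piecewise hyperplanes asserted in the statement.

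Finally, the collapse of the domains $ZCD_{k}$ follows from the same dominance analysis applied to signs. Off the locus $\{\#A\geq2\}$ one has $\#A(\boldsymbol{x})=1$, say $A(\boldsymbol{x})=\{\alpha_{0}\}$; then for every partition the block not containing $\alpha_{0}$ has strictly smaller maximum, so the sign of $\mathcal{Z}_{k,\mathrm{trop}}(\mathcal{I};\boldsymbol{x})$ is determined solely by whether $\alpha_{0}\in\mathcal{I}_{1}$ or $\alpha_{0}\in\mathcal{I}_{2}$, and is in particular nonzero. Hence every tropical function $\mathcal{Z}_{k,\mathrm{trop}}(\mathcal{I})$ is locally sign-definite on the complement of $\{\#A\geq2\}$, so the intermittent shell $ZCD_{k}$ — of full dimension $n$ for $k\geq2$ in the finite setting — degenerates in the limit onto the $(n-1)$-dimensional set $\{\#A\geq2\}$, the same for all $k$. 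I expect the only delicate point to be bookkeeping the feasibility of the $k$-splitting uniformly in $k$, including the boundary case $2k=N$ where $\mathcal{I}_{1}$ and $\mathcal{I}_{2}$ are interchangeable; the remainder is a direct consequence of the max-plus (idempotent) structure.
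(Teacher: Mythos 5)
Your argument is correct, and its combinatorial core coincides with the paper's: both reduce every tropical locus to the statement that the global maximum of $\{f_{1}(\boldsymbol{x}),\dots,f_{N}(\boldsymbol{x})\}$ is attained in both blocks of the partition, and hence identify the union over $\mathcal{I}\in\mathcal{P}_{k}[N]$ with the corner locus $\{\boldsymbol{x}:\,\#A(\boldsymbol{x})\geq2\}$, which is manifestly independent of $k$. The routes differ in emphasis. The paper spends most of its proof justifying the passage from $\mathcal{Z}_{k}(\mathcal{I};\boldsymbol{x})=0$ to the max-plus equation: it factors out the dominant exponential in each block, bounds the residual bracketed sums in $[1,N-k]$ uniformly in $\varepsilon$, and lets $\varepsilon\rightarrow0$ to obtain (\ref{eq: coincident maxima partitions}); you instead take the tropical equations (\ref{eq: slow variables higher strata, tropical limit 2})--(\ref{eq: slow variables higher strata, tropical limit 3}) as given, which is legitimate since the proposition is phrased about the loci ``given by'' those equations, but it does mean your write-up does not by itself establish that the finite-$\varepsilon$ hypersurfaces converge to these loci. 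In exchange, you make explicit two points the paper leaves implicit: the reverse inclusion, via the construction placing two distinct maximizers in opposite blocks and checking feasibility for all $1\leq k\leq\lfloor N/2\rfloor$; and the collapse of $ZCD_{k}$, via the observation that off the corner locus the unique dominant index forces every $\mathcal{Z}_{k,\mathrm{trop}}(\mathcal{I})$ to be sign-definite. Both additions are sound and strengthen the argument rather than weaken it.
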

\begin{proof}
Let us denote $\mathcal{I}_{1}:=\mathcal{I}$ and $\mathcal{I}_{2}:=[N]\backslash\mathcal{I}$. The equation ${\displaystyle \mathcal{Z}_{k}(\mathcal{I}_{1};\boldsymbol{x})=0}$,
$\#\mathcal{I}_{1}=k$, is equivalent to ${\displaystyle \sum_{\alpha\in\mathcal{I}_{1}}e^{f_{\alpha}(\boldsymbol{x})}=\sum_{\beta\in\mathcal{I}_{2}}e^{f_{\beta}(\boldsymbol{x})}}$.
In term of slow variables it becomes $\displaystyle \sum_{\alpha\in\mathcal{I}_{1}}\exp\left(\frac{f_{\alpha}(\boldsymbol{x})}{\varepsilon}\right)=$ $\displaystyle\sum_{\beta\in\mathcal{I}_{2}}\exp\left(\frac{f_{\beta}(\boldsymbol{x})}{\varepsilon}\right)$. Let us take $\bar{\alpha}_{i}\in\mathcal{I}_{i}$
such that $f_{\bar{\alpha}_{i}}(\boldsymbol{x})={\displaystyle \max_{\alpha\in\mathcal{I}_{i}}\{f_{\alpha}(\boldsymbol{x})\}}$,
$i\in\{1,2\}$. Then, previous equation is equivalent to 
\begin{equation}
\exp\frac{f_{\bar{\alpha}_{1}}(\boldsymbol{x})}{\varepsilon}\cdot\left[\sum_{\beta\in\mathcal{I}_{1}}\exp\frac{f_{\beta}(\boldsymbol{x})-f_{\bar{\alpha}_{1}}(\boldsymbol{x})}{\varepsilon}\right]=\exp\frac{f_{\mathcal{\bar{\alpha}}_{2}}(\boldsymbol{x})}{\varepsilon}\cdot\left[\sum_{\gamma\in\mathcal{I}_{2}}\exp\frac{f_{\gamma}(\boldsymbol{x})-f_{\bar{\alpha}_{2}}(\boldsymbol{x})}{\varepsilon}\right].
\label{eq: slow variables with explicit dominants}
\end{equation}
Both the factors in square bracket in (\ref{eq: slow variables with explicit dominants}) lie in the interval $[1,N-k]$ independently on $\varepsilon\in\mathbb{R}_{+}$. Hence they are finite and non-vanishing. Thus, $\displaystyle \exp\frac{f_{\mathcal{I}_{1}}(\boldsymbol{x})-f_{\mathcal{I}_{2}}(\boldsymbol{x})}{\varepsilon}$ lies in $\displaystyle \left[\frac{1}{N-k};N-k\right]$ for all $\varepsilon\in\mathbb{R}_{+}$. Considering the limit $\varepsilon\rightarrow 0$ one gets
\begin{equation}
\displaystyle \max_{\alpha\in\mathcal{I}_{1}}\{f_{\alpha}(\boldsymbol{x})\}=\max_{\beta\in\mathcal{I}_{2}}\{f_{\beta}(\boldsymbol{x})\}.\label{eq: coincident maxima partitions}
\end{equation}
Note that $\bar{\alpha}_{1}\neq\bar{\alpha}_{2}$
since they belong to different parts of the partition. For any
$\gamma\in[N]$, $\gamma\in\mathcal{I}_{i}$ for exactly one $i\in\{1,2\}$,
so ${\displaystyle f_{\gamma}(\boldsymbol{x})\leq\max_{\alpha\in\mathcal{I}_{1}}\{f_{\alpha}(\boldsymbol{x})\}=f_{\bar{\alpha}_{i}}(\boldsymbol{x})}$.
Thus, ${\displaystyle f_{\bar{\alpha}_{1}}(\boldsymbol{x})=f_{\bar{\alpha}_{2}}(\boldsymbol{x})=\max_{\gamma\in[N]}\{f_{\gamma}(\boldsymbol{x})\}}$
so the maximum ${\displaystyle \max_{\gamma\in[N]}\{f_{\gamma}(\boldsymbol{x})\}}$
is attained at least twice, once for each index $i\in\{1,2\}$ of
$\mathcal{I}_{i}$. Considering all such partitions with $\#\mathcal{I}=k$, one gets the union
of all these tropical loci. This is the set of all points $\boldsymbol{x}\in\mathbb{R}^{n}$
such that maximum of $\{f_{1}(\boldsymbol{x}),\dots,f_{N}(\boldsymbol{x})\}$
is attained at least twice and it is independent of the stratum $k$
considered. 
\end{proof}
So, in the tropical limit the statistical amoebas collapse into the
$(n-1)$-dimensional objects $\mathcal{A}_{\mathrm{trop}}$ formed
by pieces of hyperplanes and the maximal instability domais $\mathcal{D}_{k-}$
expand to the almost whole space $\mathbb{R}^{n}$, namely to $\mathbb{R}^{n}\backslash\mathcal{A}_{\mathrm{trop}}$.
Points of the piecewise hyperplanes $\mathcal{A}_{\mathrm{trop}}$
are tropical zeros of partition function. 

These different kinds of tropical limit provide different structures for the same underlying model. For example, the tropical limit of the first kind (\ref{eq: slow variables higher strata, tropical limit}) highlights the degree $1$ homogeneous part of
linear functions $f_{\alpha}$. More in general, it gives the dominant
homogeneous parts of functions $f_{\alpha}$ and can be applied in
the study of emergence of degenerate metrics from tropical limit,
see e.g. \cite{AK2016}. An advantage of tropical limit of the first
kind is that it has a rather simple geometry. 
\begin{lem}
\label{lem: power sums for tropical I} If
$f_{\alpha}$ are $N$ real functions then ${\displaystyle \sum_{\alpha=1}^{N}e^{\lambda\cdot f_{\alpha}(\boldsymbol{x})}\leq\left(\sum_{\alpha=1}^{N}e^{f_{\alpha}(\boldsymbol{x})}\right)^{\lambda}}$
for all $\lambda\geq1$.
\end{lem}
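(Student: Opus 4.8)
The plan is to fix $\boldsymbol{x}$ and reduce the statement to an elementary inequality about positive real numbers. Writing $t_{\alpha}:=e^{f_{\alpha}(\boldsymbol{x})}>0$ for $\alpha=1,\dots,N$, the claimed bound becomes $\sum_{\alpha}t_{\alpha}^{\lambda}\leq\left(\sum_{\alpha}t_{\alpha}\right)^{\lambda}$. Since $\boldsymbol{x}$ enters only through the tuple of positive numbers $t_{\alpha}$, it suffices to prove this inequality for arbitrary $t_{1},\dots,t_{N}>0$ and all $\lambda\geq1$, and then substitute back.

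The key observation is the scalar inequality $s^{\lambda}\leq s$ valid for $s\in[0,1]$ and $\lambda\geq1$. First I would set $S:=\sum_{\alpha}t_{\alpha}$; as each $t_{\alpha}>0$ one has $S>0$ and $t_{\alpha}/S\in(0,1]$. Applying the scalar inequality with $s=t_{\alpha}/S$ gives $(t_{\alpha}/S)^{\lambda}\leq t_{\alpha}/S$ for every $\alpha$. Summing over $\alpha$ yields $\sum_{\alpha}(t_{\alpha}/S)^{\lambda}\leq\sum_{\alpha}t_{\alpha}/S=1$, and multiplying through by $S^{\lambda}>0$ gives precisely $\sum_{\alpha}t_{\alpha}^{\lambda}\leq S^{\lambda}$, which is the desired bound after restoring $t_{\alpha}=e^{f_{\alpha}(\boldsymbol{x})}$.

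Equivalently, one could phrase this as the monotonicity of $\ell^{p}$-norms of a fixed nonnegative vector: the map $p\mapsto\left(\sum_{\alpha}t_{\alpha}^{p}\right)^{1/p}$ is nonincreasing, so $\left(\sum_{\alpha}t_{\alpha}^{\lambda}\right)^{1/\lambda}\leq\sum_{\alpha}t_{\alpha}$ for $\lambda\geq1$, and raising both sides to the power $\lambda$ recovers the claim. I would present the first, self-contained normalization argument rather than invoking the norm inequality, so as to keep the proof fully elementary.

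There is no genuine obstacle here; the only point deserving an explicit justification is the scalar inequality $s^{\lambda}\leq s$ on $[0,1]$. This follows because $s\mapsto s^{\lambda}$ is convex for $\lambda\geq1$ and equals $0$ at $s=0$ and $1$ at $s=1$, hence lies below the chord $y=s$ on the unit interval; alternatively, for $s\in(0,1]$ one has $\log(s^{\lambda}/s)=(\lambda-1)\log s\leq0$ since $\log s\leq0$ and $\lambda-1\geq0$. I would record this one-line remark and then assemble the three short steps of the normalization above.
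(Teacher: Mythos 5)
Your proof is correct and is essentially identical to the paper's: both normalize each term by the total sum $\sum_{\beta}e^{f_{\beta}(\boldsymbol{x})}$, apply the scalar inequality $s^{\lambda}\leq s$ for $s\in(0,1]$ and $\lambda\geq1$, sum over $\alpha$, and rescale by the $\lambda$-th power of the sum. The only difference is cosmetic (you introduce the abbreviation $t_{\alpha}=e^{f_{\alpha}(\boldsymbol{x})}$ and spell out the justification of $s^{\lambda}\leq s$, which the paper leaves implicit).
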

\begin{proof}
For all $\lambda\geq1$ one has 
\begin{equation}
{\displaystyle 0<\left(\frac{e^{f_{\alpha}(\boldsymbol{x})}}{\sum_{\beta}e^{f_{\beta}(\boldsymbol{x})}}\right)^{\lambda}\leq\frac{e^{f_{\alpha}(\boldsymbol{x})}}{\sum_{\beta}e^{f_{\beta}(\boldsymbol{x})}}<1},\quad\alpha\in[N],\label{eq: tropical limit first kind, inequality 1}
\end{equation}
which implies 
\begin{equation}
0<\sum_{\alpha=1}^{N}\left(\frac{e^{f_{\alpha}(\boldsymbol{x})}}{\sum_{\beta}e^{f_{\beta}(\boldsymbol{x})}}\right)^{\lambda}\leq\sum_{\alpha=1}^{N}\frac{e^{f_{\alpha}(\boldsymbol{x})}}{\sum_{\beta}e^{f_{\beta}(\boldsymbol{x})}}=1\Rightarrow\sum_{\alpha=1}^{N}e^{\lambda f_{\alpha}(\boldsymbol{x})}\leq\left(\sum_{\alpha=1}^{N}e^{f_{\alpha}(\boldsymbol{x})}\right)^{\lambda}.\label{eq: tropical limit first kind, inequality 2}
\end{equation}
\end{proof}
\begin{prop}
\label{prop: unbounded connected components tropical graph}Connected
components of the complement of the tropical graph of the first kind
are unbounded. For homogeneous functions, connected components of $\mathcal{D}_{1-}$ are unbounded too. \end{prop}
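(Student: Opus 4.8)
The plan is to derive both statements from a single scaling principle: in each case I will exhibit, through every point of the relevant open set, an unbounded ray directed away from the origin that stays entirely inside the set. Since a ray is connected, the connected component of the base point must contain the whole ray and is therefore unbounded. This scaling principle is available precisely because all functions involved are homogeneous of degree one, so that replacing $\boldsymbol{x}$ by $\lambda\boldsymbol{x}$ multiplies every $f_\alpha(\boldsymbol{x})$ (or $g_\alpha(\tilde{\boldsymbol{x}})=\sum_i a_{\alpha i}\tilde{x}_i$) by $\lambda$.

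For the complement of the tropical graph of the first kind I would argue as follows. The graph (\ref{eq: slow variables 1-stratum, tropical limit}) is the locus where $\max_\alpha g_\alpha(\tilde{\boldsymbol{x}})$ is attained at least twice, so its complement $U$ consists of the $\tilde{\boldsymbol{x}}$ with a unique maximizer $\alpha_0$, i.e. $g_{\alpha_0}(\tilde{\boldsymbol{x}})>g_\beta(\tilde{\boldsymbol{x}})$ for all $\beta\neq\alpha_0$. For any $t>0$ homogeneity gives $g_{\alpha_0}(t\tilde{\boldsymbol{x}})-g_\beta(t\tilde{\boldsymbol{x}})=t\,(g_{\alpha_0}(\tilde{\boldsymbol{x}})-g_\beta(\tilde{\boldsymbol{x}}))>0$, so $t\tilde{\boldsymbol{x}}\in U$ with the same maximizer. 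Hence the open ray $\{t\tilde{\boldsymbol{x}}:t>0\}\subseteq U$ lies in the connected component of $\tilde{\boldsymbol{x}}$, which is therefore unbounded.

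For $\mathcal{D}_{1-}$ with homogeneous $f_\alpha$ I would first recall that, for $k=1$, Proposition \ref{prop: bound - signs} bounds the number of negative signs by $\binom{N-1}{0}=1$, so $\mathcal{D}_{1-}$ is the disjoint union of the cells $D_{1-}(\alpha)$ of (\ref{eq: dominant 1-cells}) and any connected component $C$ lies in a single $D_{1-}(\alpha_0)$. Fix $\boldsymbol{x}\in C$ and set $h_\beta:=f_\beta-f_{\alpha_0}$ for $\beta\neq\alpha_0$; these are again homogeneous of degree one, and the condition $\boldsymbol{x}\in D_{1-}(\alpha_0)$ reads $\sum_{\beta\neq\alpha_0}e^{h_\beta(\boldsymbol{x})}<1$ after dividing by $e^{f_{\alpha_0}(\boldsymbol{x})}$. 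For $\lambda\geq1$ homogeneity gives $h_\beta(\lambda\boldsymbol{x})=\lambda\,h_\beta(\boldsymbol{x})$, and Lemma \ref{lem: power sums for tropical I} applied to the $N-1$ functions $\{h_\beta\}$ yields $\sum_{\beta\neq\alpha_0}e^{\lambda h_\beta(\boldsymbol{x})}\leq\bigl(\sum_{\beta\neq\alpha_0}e^{h_\beta(\boldsymbol{x})}\bigr)^{\lambda}<1$, the last inequality because the base is below $1$ and $\lambda\geq1$. Thus $\lambda\boldsymbol{x}\in D_{1-}(\alpha_0)$ for all $\lambda\geq1$, the ray $\{\lambda\boldsymbol{x}:\lambda\geq1\}$ is unbounded and contained in $C$, and $C$ is unbounded.

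The only genuine point beyond homogeneity is the second statement: unlike the tropical case, membership in $D_{1-}(\alpha_0)$ is governed by a full sum of exponentials rather than by an ordering of maxima, so preservation under scaling does not follow from monotonicity alone. This is exactly where Lemma \ref{lem: power sums for tropical I} does the work (equivalently, one may argue directly that $r_\beta^{\lambda}\leq r_\beta$ for each ratio $r_\beta=e^{h_\beta(\boldsymbol{x})}\in(0,1)$ and $\lambda\geq1$). I would also stress that homogeneity, i.e. $b_\alpha=0$, is indispensable: with nonzero constants scaling no longer commutes with the $f_\alpha$, the ray argument breaks down, and bounded components can indeed occur, as in the closed curve of figure \ref{fig: examples of 1-strata, b}.
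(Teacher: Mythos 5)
Your proof is correct and follows essentially the same route as the paper's: both rest on the scaling argument that the ray $\{\lambda\boldsymbol{x}:\lambda\geq 1\}$ stays inside the relevant open set, using homogeneity alone for the complement of the tropical graph and Lemma \ref{lem: power sums for tropical I} (equivalently, $r^{\lambda}\leq r$ for $r\in(0,1)$) to show $\lambda\boldsymbol{x}\in D_{1-}(\alpha_{0})$. The only cosmetic differences are that you normalize by the dominant exponential before invoking the lemma, and that you make explicit, via Proposition \ref{prop: bound - signs}, the decomposition of $\mathcal{D}_{1-}$ into the disjoint cells $D_{1-}(\alpha)$, which the paper leaves implicit.
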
 
\begin{proof} Given $N$ linear functions $f_{1},\dots,f_{N}$, let $\varphi_{\alpha}(\boldsymbol{x})=f_{\alpha}(\boldsymbol{x})-f_{\alpha}(\boldsymbol{0})$ be the $1$-homogeneous part of $f_{\alpha}$, ${\displaystyle \Delta_{1-}(\alpha):=\left\{ \boldsymbol{y}\in\mathbb{R}^{n}:\,e^{\varphi_{\alpha}(\boldsymbol{y})}>\sum_{\beta\neq\alpha}e^{\varphi_{\beta}(\boldsymbol{y})}\right\} }$ be the instability domain where $\varphi_{\alpha}$ dominates and ${\displaystyle \Delta_{1-}^{\mathrm{trop}}(\alpha):=\left\{ \boldsymbol{y}\in\mathbb{R}^{n}:\,\varphi_{\alpha}(\boldsymbol{y})>\max_{[N]\backslash\{\alpha\}}\{\varphi_{\beta}(\boldsymbol{y})\}\right\} }$ be the tropical limit of $\Delta_{1-}(\alpha)$. 
In particular, it easily follows from the definitions that $\Delta_{1-}(\alpha)\subseteq\Delta_{1-}^{\mathrm{trop}}(\alpha)$. 
From (\ref{eq: slow variables higher strata, tropical limit}), $\left\{ f_{\alpha}\right\} $ and $\left\{ \varphi_{\alpha}\right\} $ have the same tropical limit of the first kind, so we focus on the latter set of functions. If $\boldsymbol{x}\in\Delta_{1-}(\alpha)$ and $\lambda\geq1$ then 
\begin{equation} 
e^{\varphi_{\alpha}(\lambda\boldsymbol{x})}=\left(e^{\varphi_{\alpha}(\boldsymbol{x})}\right)^{\lambda}>\left(\sum_{\beta\neq\alpha}e^{\varphi_{\beta}(\boldsymbol{x})}\right)^{\lambda}\geq\sum_{\beta\neq\alpha}e^{\lambda\varphi_{\beta}(\boldsymbol{x})}=\sum_{\beta\neq\alpha}e^{\varphi_{\beta}(\lambda\boldsymbol{x})}
\label{eq: tropical limit first kind, inequality prop} 
\end{equation} 
where the second inequality follow from lemma \ref{lem: power sums for tropical I} applied to homogeneous functions $\varphi_{\beta}$, $\beta\neq\alpha$. Hence $\lambda\cdot\boldsymbol{x}\in\Delta_{1-}(\alpha)$ for all $\boldsymbol{x}\in\Delta_{1-}(\alpha)$ and $\lambda\geq1$. In the same way one can show that $\lambda\cdot\boldsymbol{x}\in\Delta_{1-,\mathrm{trop}}(\alpha)$ for all $\boldsymbol{x}\in\Delta_{1-,\mathrm{trop}}(\alpha)$ and $\lambda\geq1$. 
So, let $\mathcal{C}$ (respectively, $\mathcal{C}^{\star}$) be a connected component of $\Delta_{1-}(\alpha)$ (respectively, of $\Delta_{1-}^{\mathrm{trop}}(\alpha)$) and choose $\boldsymbol{x}\in \mathcal{C}$ (respectively, $\boldsymbol{x^{\star}}\in\mathcal{C}^{\star}$). One has $\{\lambda\cdot \boldsymbol{x}:\,\lambda\geq 1\}\subseteq\mathcal{C}$ since the ray $\{\lambda\cdot \boldsymbol{x}:\,\lambda\geq 1\}$ is a connected subset of $\Delta_{1-}(\alpha)$ intersecting $\mathcal{C}$ and $\mathcal{C}$ is maximal among connected subsets of $\Delta_{1-}(\alpha)$. Similarly, $\{\lambda\cdot \boldsymbol{x^{\star}}:\,\lambda\geq 1\}\subseteq\mathcal{C}^{\star}$. Thus both $\mathcal{C}$ and $\mathcal{C}^{\star}$ are unbounded since they contain an unbounded subset. 
\end{proof}
Thus, tropical limit of the first kind has simple topological properties. For example, in two-dimensional case, the result of proposition \ref{prop: unbounded connected components tropical graph} means a trivial homotopy
for the resulting tropical graph. 

It is worth mentioning that terms in (\ref{eq: slow variables higher strata, tropical limit}) coincide with $f_{\alpha}(\boldsymbol{x})$
if $f_{\alpha}(\boldsymbol{0})=0$ for all $\alpha\in[N]$. Homogeneous linear functions ${\displaystyle f_{\alpha}(\boldsymbol{x})\equiv\sum_{i=1}^{n}\kappa_{\alpha}^{i}x_{i}}$
with real distinct parameters $\kappa_{1}<\dots<\kappa_{N}$ represent a particular example. These
functions arise in the study of Wronskian soliton solutions of KP
II equation where $e^{f_{\alpha}(\boldsymbol{x})}$ are special solutions
of the heat hierarchy. If one considers the tropical limit of the second kind (\ref{eq: slow variables higher strata, tropical limit 2})
instead of (\ref{eq: slow variables higher strata, tropical limit}), then the resulting object has a more refined
structure and many combinatorial properties (see e.g. \cite{KodamaWilliams2013}). 

Tropical limits discussed above are quite meaningful in the
statistical physics of macrosystems. Tropical limit of free energy
considered in \cite{AK2015} corresponds to $n=1$, ${\displaystyle \tilde{x}_{1}=\frac{1}{k_{B}T}}$,
$a_{\alpha1}=-E_{\alpha}$, $\displaystyle b_{\alpha}=\frac{S_{\alpha}}{k_{B}}$,
$\varepsilon=k_{B}$ where $T$ is the temperature, $\{E_{\alpha}\}$
is the energy spectrum, ${\displaystyle \exp\left(\frac{S_{\alpha}}{k_{B}}\right)}$
are degenerations of energy levels and $k_{B}$ is the Boltzmann constant. 

One can consider also more complicated situations when some of the
products $a_{\alpha i}\cdot x_{i}$ remain finite, for instance, when ${\displaystyle x_{i_{0}}=\frac{\tilde{x}_{i_{0}}}{\varepsilon}}$
and $a_{\alpha i_{0}}=\varepsilon\cdot\tilde{a}_{\alpha i_{0}}$.
In such a case the product $a_{\alpha i_{0}}\cdot x_{i_{0}}$ does
not contribute in the limit $\varepsilon\rightarrow0$ and the corresponding
equation (\ref{eq: slow variables higher strata, tropical limit}),
or (\ref{eq: slow variables higher strata, tropical limit 2}), will
not contain the variable $\tilde{x}_{i_{0}}$. So in the tropical
limit the zero locus is a piecewise hyperplane of cylindrical type. 

Such non-uniform scaling behaviour of the variables $x_{i}$ or parameters
$a_{\alpha i}$ and its connections with the multiscale tropical limit
will be discussed elsewhere.

\section{Conclusion\label{sec: Conclusion}}

In this paper partition functions (\ref{eq: signed partition function})
with linear $f_{\alpha}(\boldsymbol{x})$ have been studied. The case
of nonlinear functions $f_{\alpha}(\boldsymbol{x})$ is of great interest
too. Many general properties of singular sectors described above,
e.g. stratification of statistical $k$-amoebas, remain unchanged for more general polynomial
functions $f_{\alpha}(\boldsymbol{x})$. Specifically, for polynomials $f_{\alpha}(\boldsymbol{x})$ the set of roots (\ref{eq: set of roots lemma}) is finite and at least one $\mathcal{D}_{1-}(\alpha)$ in (\ref{eq: dominant 1-cells}) is not empty. These hypotheses are crucial for propositions as \ref{prop: inclusion chain equilibrium + ZCD}, \ref{prop: bound - signs} and corollary \ref{prop: inclusion chain instability} to be valid. Proposition \ref{prop: unbounded connected components tropical graph} can be generalized to polynomials by considering their degree $d$ homogeneous parts, where $\displaystyle d:=\max_{\alpha\in[N]}\{\deg f_{\alpha}\}<\infty$. Figure \ref{fig: polynomial stratification} presents an example of such a type. 

\begin{figure}[tph]
\centering{
\includegraphics[scale=0.4]{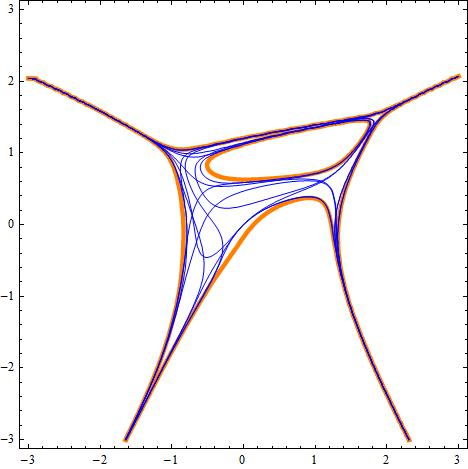}
\caption{Stratification of $\mathcal{Z}_{\mathrm{sing},2}$ (blue) and $\mathcal{Z}_{\mathrm{sing},1}$
(orange) in a nonlinear polynomial case: $f_{1}\equiv 0$, $f_{2}\equiv 3x^{2}$,
$f_{3}\equiv 3y^{3}$, $f_{4}\equiv x+xy+\ln 6$, $f_{5}\equiv 2x+y^{2}+\ln 11$,
$f_{6}\equiv x+3y+xy+\ln 4$. }
\label{fig: polynomial stratification}
}
\end{figure} 

However for general nonlinear functions situation is quite different. An example with non-polynomial functions 
\begin{equation}
f_{\alpha}(\boldsymbol{x})\equiv\left\{ \begin{array}{c}
c_{\alpha}\cdot\eta(||\boldsymbol{x}||-\frac{\alpha+1999}{1000})\quad\frac{\alpha+1999}{1000}>||\boldsymbol{x}||\\
d_{\alpha}\cdot\eta(||\boldsymbol{x}||-\frac{\alpha+1999}{1000}),\quad||\boldsymbol{x}||\geq\frac{\alpha+1999}{1000}
\end{array}\right.\label{eq: out of stratification functions example}
\end{equation}
where ${\displaystyle \eta(z)=\left\{ \begin{array}{c}
1-\exp\left(-\frac{z^{2}}{1-z^{2}}\right),\quad 1>|z|\\
1,\quad|z|\geq 1
\end{array}\right.}$, $(c_{1},d_{1})=(\ln20,\ln8)$, $(c_{2},d_{2})=(\ln20,\ln2)$ and
$(c_{\alpha},d_{\alpha})=(\ln2,\ln2)$, $\alpha=3,\dots,10$, is shown
in figure \ref{out of stratification}. Let us consider $S_{k}(\boldsymbol{x}):= C^{10}_k\cdot \bar{\boldsymbol{S}}_{k}(\boldsymbol{x})$ at $k=3,4$. At $||\boldsymbol{x}||\leq 1$ one has $\mathcal{Z}_{k}(\mathcal{I};\boldsymbol{x})<0$ iff $\{1,2\}\subset\mathcal{I}$. Thus $S_{k}(\boldsymbol{x})=\binom{10}{k}-2\cdot \binom{8}{k-2}$. At $1\leq||\boldsymbol{x}||< 2$ $S_{k}(\boldsymbol{x})$ is not decreasing. At $||\boldsymbol{x}||\geq 2$ one has $\mathcal{Z}_{3}(\mathcal{I};\boldsymbol{x})>0$ for all $\mathcal{I}\in\mathcal{P}_{3}[10]$, then $S_3(\boldsymbol{x})=C^{10}_3$. On the other hand, at $||\boldsymbol{x}||\geq 4$ one has $\mathcal{Z}_{4}(\mathcal{I};\boldsymbol{x})<0$ iff $1\in\mathcal{I}$, hence $S_{4}(\mathcal{I};\boldsymbol{x})$ has a minimum according to Erdos-Ko-Rado theorem \cite{EKR1961}. In conclusion, $\displaystyle \min_{\boldsymbol{x}}S_{3}(\boldsymbol{x})=\binom{10}{3}-2\cdot \binom{8}{1}=104$ is attained only if $||\boldsymbol{x}||< 2$. Vice versa, $\displaystyle \min_{\boldsymbol{x}}S_{4}(\boldsymbol{x})=\binom{10}{4}-2\cdot \binom{9}{3}=42$ is attained only if $||\boldsymbol{x}||>2$. In particular, $\mathcal{D}_{3-}\nsubseteq\mathcal{D}_{4-}$. 

\begin{figure}[tph]
\centering{
\subfigure[Functions $f_{\alpha}(\boldsymbol{x})$, $\alpha=1,\dots,10$, in
(\ref{eq: out of stratification functions example}).]
{\includegraphics[scale=0.28]{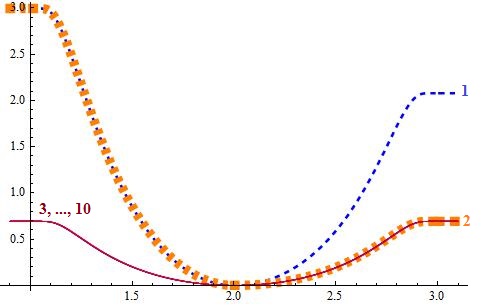}}
\hfill{}
\subfigure[Detail of functions $f_{\alpha}(\boldsymbol{x})$ in (\ref{eq: out of stratification functions example}),
$\alpha=3,\dots,10$.]
{\includegraphics[scale=0.28]{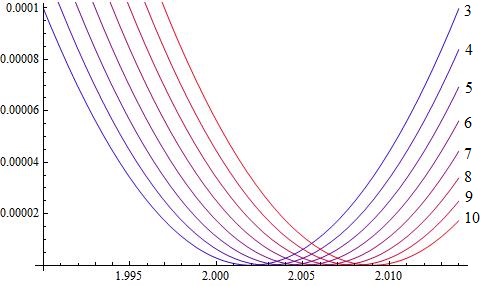}}
\hfill{}
\subfigure[Instability domain $\mathcal{D}_{3-}$ and $\mathcal{D}_{4-}$ are
highlighted.]
{\includegraphics[scale=0.28]{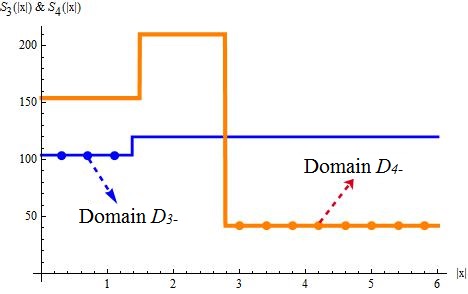}}
\caption{A non-polynomial case when chain stratification of instability domains fails.}
\label{out of stratification}
}
\end{figure}

Singular sectors of partition functions (\ref{eq: signed partition function})
with nonlinear $f_{\alpha}(\boldsymbol{x})$ will be considered in
a separate publication. \vskip6pt

\enlargethispage{20pt}


\begin{thebibliography}{9}


\bibitem{LL1980} 
	L. D. Landau and E. M. Lifschitz,  
	\textit{Statistical Physics},
	Course of Theoretical Physics, Part 1, Vol. 5 
	(Butterworth-Heinemann, 1980).

\bibitem{Huang1963}
	K. Huang,  
	\textit{Statistical mechanics}
	(New York: John Wiley \& Sons, 1963).

\bibitem{LeeYang1952a}
	C. N. Yang and T. D. Lee,  
	"Statistical theory of equations of state and phase transitions. I. Theory of condensation", 
	\href{http://dx.doi.org/10.1103/physrev.87.404}{Phys. Rev.} \textbf{87}(3), 
	404-409
	(1952). 

\bibitem{LeeYang1952b}
	T. D. Lee and C. N. Yang, 
	"Statistical theory of equations of state and phase transitions. II. Lattice gas and Ising model",
	\href{http://dx.doi.org/10.1103/physrev.87.410}{Phys. Rev.}, 
	\textbf{87}(3),
	410-419 
	(1952).

\bibitem{Fisher1965}
	M. E. Fisher,
	In \textit{Lectures in Theoretical Physics}, Vol. 7C, Chap. 1
	(W. E. Brittin (Ed.), Boulder: University of Colorado Press, 1965)


\bibitem{KG1971}
	P. J. Kortman and R. B. Griffiths, 
	"Density of Zeros on the Lee-Yang Circle for Two Ising Ferromagnets",  
	\href{http://dx.doi.org/10.1103/physrevlett.27.1439}{Phys. Rev. Lett.} \textbf{27}(21), 
	1439-1442
	(1971).

\bibitem{FS1971}
	M. Suzuki and M. E. Fisher,
	"Zeros of the partition function for the Heisenberg, ferroelectric, and general Ising models", 
	\href{http://dx.doi.org/10.1063/1.1665583}{J. Math. Phys.} \textbf{12}(2), 
	235-246, 
	(1971)

\bibitem{Ruelle1973}
	D. Ruelle,
	"Some remarks on the location of zeroes of the partition function for lattice systems", 
	\href{http://dx.doi.org/10.1007/bf01646488}{Comm. Math. Phys.} \textbf{31}, 
	265-277
	(1973).

\bibitem{Derrida1981}
	B. Derrida,
	"Random-energy model: An exactly solvable model of disordered systems", 
	\href{http://dx.doi.org/10.1103/physrevb.24.2613}{Phys. Rev. B.} \textbf{24}(5), 
	 2613-2626 
	(1981).

\bibitem{Lieb1981}
	E. H. Lieb and A. D. Sokal,  
	"A general Lee-Yang theorem for one-component and multicomponent ferromagnets",  
	\href{http://dx.doi.org/10.1007/bf01213009}{Commun. Math. Phys.} \textbf{80}(2), 
	153-179
	(1981).

\bibitem{Pearson1982}
	R. B. Pearson, 
	"Partition function of the Ising model on the periodic $4\times 4\times 4$ lattice", 
	\href{http://dx.doi.org/10.1103/physrevb.26.6285}{Phys. Rev. B} \textbf{26}(11), 
	6285-6290 
    (1982).

\bibitem{Derrida1983}
	B. Derrida, L. De Seze and C. Itzykson,   
	"Fractal structure of zeros in hierarchical models",  
	\href{http://dx.doi.org/10.1007/bf01018834}{J. Stat. Phys.} \textbf{33}(3), 
	559-569 
	(1983).

\bibitem{Borgs1990}
	C. Borgs and R. Kotecký,  
	"A rigorous theory of finite-size scaling at first-order phase transitions"  
	\href{http://dx.doi.org/10.1007/bf01013955}{J. Stat. Phys.} \textbf{61}(1-2), 
	79-119 
	(1990).

\bibitem{Biskup2004}
	M. Biskup, C. Borgs, J. T. Chayes, L. J. Kleinwaks, R. Kotecký, 
	"Partition function zeros at first-order phase transitions: A general analysis" 
	\href{http://dx.doi.org/10.1007/s00220-004-1169-5}{Commun. Math. Phys.} \textbf{251}, 
	79-131 (2004).

\bibitem{Wei2012}
	B.-B. Wei and R.-B. Liu,
	"Lee-Yang zeros and critical times in decoherence of a probe spin coupled to a bath" 
	\href{http://dx.doi.org/10.1103/physrevlett.109.185701}{Phys. Rev. Lett.} \textbf{109}(18),
	(2012).

\bibitem{Takahashi2012}
	T. Obuchi and K. Takahashi, 
	"Dynamical singularities of glassy systems in a quantum quench",  
	\href{http://dx.doi.org/10.1103/physreve.86.051125}{Phys. Rev. E} \textbf{86}(5), 
	(2012).

\bibitem{Wei2014}
	B.-B. Wei, S.-W. Chen, H.-C. Po, R.-B. Liu,  
	"Phase transitions in the complex plane of physical parameters", 
	\href{http://dx.doi.org/10.1038/srep05202}{Sci. Rep.} \textbf{4}, 
	(Nature Publishing Group, 2014).

\bibitem{Peng2015}
	X. Peng, H. Zhou, B.-B. Wei, J. Cui, J. Du, R.-B. Liu,  
	"Experimental observation of Lee-Yang zeros", 
	\href{http://dx.doi.org/10.1103/physrevlett.114.010601}{Phys. Rev. Lett.} \textbf{114}(1), 
	(2015). 

\bibitem{Langer1969}
	J. S. Langer, 
	"Statistical theory of the decay of metastable states", 
	\href{http://dx.doi.org/10.1016/0003-4916(69)90153-5}{Ann. Phys.} \textbf{54}(2), 
	258-275 (1969).

\bibitem{Newman1980}
	C. M. Newman and L. S. Schulman,  
	"Complex free energies and metastable lifetimes", 
	\href{http://dx.doi.org/10.1007/bf01012588}{J. Stat. Phys.} \textbf{23}(2),
	131-148 
	(1980).

\bibitem{Parisi1980}
	G. Parisi,  
	"A sequence of approximated solutions to the SK model for spin glasses",  
	\href{http://dx.doi.org/10.1088/0305-4470/13/4/009}{J. Phys. A: Math. Gen.} \textbf{13}(4), 
	L115-121
	(1980).

\bibitem{MPV1987}
	M. Mezard, G. Parisi and M. A. Virasoro,  
	\textit{Spin glass theory and beyond},
	(Singapore: World Scientific, 1987). 

\bibitem{Nishimori1988}
	Y. Ozeki Y and H. Nishimori,  
	"Distribution of Yang-Lee Zeros of the $\pm J$ Ising Model",  
	\href{http://dx.doi.org/10.1143/jpsj.57.1087}{J. Phys. Soc. Jpn.} \textbf{57}(3), 
	1087-1093 (1988).

\bibitem{Derrida1991}
	B. Derrida,  
	"The zeroes of the partition function of the random energy model" 
	\href{http://dx.doi.org/10.1016/0378-4371(91)90130-5}{Physica A} \textbf{177}(1-3), 
	31-37 (1991).

\bibitem{Bhanot1993}
	G. Bhanot and J. Lacki,  
	"Partition function zeros and the three-dimensional Ising spin glass", 
	\href{http://dx.doi.org/10.1007/bf01048099}{J. Stat. Phys.} \textbf{71}(1-2), 
	259-267 (1993).

\bibitem{Matsuda2008}
	Y. Matsuda, H. Nishimori and K. Hukushima,  
	"The distribution of Lee-Yang zeros and Griffiths singularities in the $\pm J$ model of spin glasses",  
	\href{http://dx.doi.org/10.1088/1751-8113/41/32/324012}{J. Phys. A: Math. Theor.} \textbf{41}(32), 
	324012 (2008).

\bibitem{Obuchi2012}
	T. Obuchi and K. Takahashi,  
	"Partition-function zeros of spherical spin glasses and their relevance to chaos",   
	\href{http://dx.doi.org/10.1088/1751-8113/45/12/125003}{J. Phys. A: Math. Theor.} \textbf{45}(12), 
	125003, (2012).

\bibitem{Feynman1987}
	R. P. Feynman,  
	"Negative probability", in \textit{Quantum implications: essays in honour of David Bohm}, 
	235-248 
	(1987).

\bibitem{Blizard1990}
	W. D. Blizard,  
	"Negative membership", 
	\href{http://dx.doi.org/10.1305/ndjfl/1093635499}{Notre Dame J. Formal Logic} \textbf{31}(3), 
	346-368 
	(1990).

\bibitem{Burgin2010}
	M. Burgin, 
	"Interpretations of negative probabilities",  
	arXiv preprint, 
	\url{arXiv:1008.1287}.

\bibitem{Purbhoo2008}
	K. Purbhoo,
	"A Nullstellensatz for amoebas", 
	\href{http://dx.doi.org/10.1215/00127094-2007-001}{Duke. Math. J.} \textbf{141}(3), 
	407-445
	(2008)

\bibitem{Tsikh2009}
	D. Y. Pochekutov and A. K. Tsikh, 
	"On the Asymptotics of Laurent Coefficients and its Application in Statistical Mechanics", 
	J. Siberian Fed. Univ.: Math. Phys. \textbf{2}(4), 
	483-493 
	(2009). 

\bibitem{Kapranov2011}
	M. Kapranov,  
	"Thermodynamics and the moment map", 
	arXiv preprint, 
	\url{arXiv:1108.3472} 
	(2011).

\bibitem{Passare2012}
	M. Passare, D. Pochekutov, A. Tsikh,  
	"Amoebas of complex hypersurfaces in statistical thermodynamics"  
	\href{http://dx.doi.org/10.1007/s11040-012-9122-x}{Math. Phys. Anal. Geom.} \textbf{16}(1), 
	89-108 (2012).

\bibitem{Adler1980}
	M. Adler and P. Van Moerbeke, 
	"Completely integrable systems, Euclidean Lie algebras, and curves",   
	\href{http://dx.doi.org/10.1016/0001-8708(80)90007-9}{Adv. Math.} \textbf{38}(3), 
	267-317 
	(1980).

\bibitem{Kodama2006}
	L. Casian and Y. Kodama,  
	"Toda lattice, cohomology of compact Lie groups and finite Chevalley groups", 
	\href{http://dx.doi.org/10.1007/s00222-005-0492-6}{Invent. Math.} \textbf{165}(1), 
	163-208 (2006).

\bibitem{AK2016}  
	M. Angelelli and B. Konopelchenko, 
	"Geometry of the basic statistical physics mapping",  
	\href{http://dx.doi.org/10.1088/1751-8113/49/38/385202}{J. Phys. A: Math. Theor.} \textbf{49}(38), 
	385202 
	(2016). 

\bibitem{GKZ2008}
	I. M. Gelfand, M. M. Kapranov and A. V. Zelevinsky,  
	\textit{Discriminants, Resultants, and Multidimensional Determinants},  
	Modern Birkh\"{a}user Classics 
	(Boston, MA: Birkh\"{a}user, 2008).

\bibitem{Mikhalkin2000}
	G. Mikhalkin, 
	"Real algebraic curves, the moment map and amoebas",  
	\href{http://dx.doi.org/10.2307/121119}{Ann. Math.} \textbf{151}(1), 
	309-326 (2000).

\bibitem{Passare2000}
	M. Forsberg, M. Passare and A. Tsikh,  
	"Laurent determinants and arrangements of hyperplane amoebas",  
	\href{http://dx.doi.org/10.1006/aima.1999.1856}{Adv. Math.} \textbf{151}(1), 
	45-70 (2000).

\bibitem{Theobald2002}
	T. Theobald, 
	"Computing amoebas", 
	\href{http://dx.doi.org/10.1080/10586458.2002.10504703}{Exp. Math.} \textbf{11}(4), 
	513-526 (2002). 

\bibitem{Mikhalkin2004}
	G. Mikhalkin,  
	"Amoebas of algebraic varieties and tropical geometry", in 
	\textit{Different faces of geometry}, \href{http://dx.doi.org/10.1007/0-306-48658-x_6}{Int. Math. Ser.} Vol. 3, 257-300
	(Eds. S. Donaldson, Y. Eliashberg and M. Gromov, Springer SBM, 2004). 

\bibitem{Tsikh2012}
	N. A. Bushueva and A. K. Tsikh AK,  
	"On amoebas of algebraic sets of higher codimension",   
	\href{http://dx.doi.org/10.1134/s0081543812080056}{Proc. Steklov Inst. Math.} \textbf{279}(1), 
	52-63 (2012). 

\bibitem{NissePassare2012}
	M. Nisse and M. Passare,  
	"Amoebas and coamoebas of linear spaces",  
	arXiv preprint, 
	\url{arXiv:1205.2808}  
	(2012). 
	
\bibitem{Mikhalkin2015} 
	G. Mikhalkin, 
	"Amoebas of half-dimensional varieties",   
	arXiv preprint, 
	\url{arXiv:1412.4658v2} 
	(2015).  

\bibitem{Mikhalkin2013} 
	G. Mikhalkin,  
	"Geometry of amoebas",  
	Lecture notes, 
	Inst. Henri Poincaré 
	(2013). 

\bibitem{Kenyon2006}
	R. Kenyon, A. Okounkov and S. Sheffield,  
	"Dimers and amoebae", 
	\href{http://dx.doi.org/10.4007/annals.2006.163.1019}{Ann. Math.} \textbf{163}(3), 
	1019-1056 
	(2006). 

\bibitem{EKR1961}
	P. Erd\H{o}s, C. Ko and R. Rado,  
	"Intersection theorems for systems of finite sets" 
	\href{http://dx.doi.org/10.1093/qmath/12.1.313}{Q. J. Math.} \textbf{12}(1), 
	313-320 (1961) 
	
\bibitem{Maclagan2015}
  	D. Maclagan and B. Sturmfels, 
  	\textit{"Introduction to Tropical Geometry"}, 
  	Graduate Studies in Mathematics Vol. 161 
  	(Providence, RI: Am. Math. Soc., 2015). 
  	
\bibitem{KodamaWilliams2013}
	Y. Kodama and L. Williams, 
	"The Deodhar decomposition of the Grassmannian and the regularity of KP solitons",  
	\href{https://doi.org/10.1016/j.aim.2013.06.011}{Advances in Mathematics} \textbf{244}, 
	979-1032  
	(2013). 

\bibitem{AK2015}
	M. Angelelli and B. Konopelchenko,   
  	"Tropical Limit in Statistical Physics",   
  	\href{http://dx.doi.org/10.1016/j.physleta.2015.04.003}{Phys. Lett. A} \textbf{379}(24-25), 
  	1497-1502 
  	(2015). 

\end{thebibliography}
\end{document}